\font\msbm=msbm10
\numberwithin{equation}{section}
\theoremstyle{plain}
\newtheorem{theorem}{Theorem}[section]
\newtheorem{lemma}[theorem]{Lemma}
\newtheorem{corollary}[theorem]{Corollary}
\newtheorem{proposition}[theorem]{Proposition}
\newtheorem{remark}[theorem]{Remark}
\def\mathbb#1{\hbox{\msbm{#1}}}
\newcommand{\bu}{\boldsymbol{u}}
\newcommand{\bx}{\boldsymbol{x}}
\newcommand{\BA}{\boldsymbol{A}}
\newcommand{\BC}{\boldsymbol{C}}
\newcommand{\BD}{\boldsymbol{D}}
\newcommand{\BE}{\boldsymbol{E}}
\newcommand{\BG}{\boldsymbol{G}}
\newcommand{\BO}{\boldsymbol{O}}
\newcommand{\BQ}{\boldsymbol{Q}}
\newcommand{\BR}{\boldsymbol{R}}
\newcommand{\BS}{\boldsymbol{S}}
\newcommand{\BU}{\boldsymbol{U}}
\newcommand{\BV}{\boldsymbol{V}}
\newcommand{\BW}{\boldsymbol{W}}
\newcommand{\BX}{\boldsymbol{X}}
\newcommand{\BY}{\boldsymbol{Y}}
\newcommand{\BZ}{\boldsymbol{Z}}
\newcommand{\BDelta}{\boldsymbol{\Delta}}
\newcommand{\BLambda}{\boldsymbol{\Lambda}}
\newcommand{\BSigma}{\boldsymbol{\Sigma}}
\newcommand{\PP}{\mathcal{P}}
\newcommand{\I}{\boldsymbol{I}}
\newcommand{\RR}{\mathbb{R}}
\newcommand{\lag}{\langle}
\newcommand{\rag}{\rangle}
\newcommand{\eps}{\epsilon}
\DeclareMathOperator{\Tr}{Tr}
\DeclareMathOperator{\E}{\mathbb{E}}
\DeclareMathOperator{\Od}{O}
\DeclareMathOperator{\rank}{\text{rank}}
\DeclareMathOperator{\argmin}{argmin}
\DeclareMathOperator{\argmax}{argmax}
\renewcommand{\Pr}{\mathbb{P}}
\begin{document}
\title{\bf Near-Optimal Bounds for Generalized Orthogonal Procrustes Problem via Generalized Power Method}
\author{Shuyang Ling\thanks{New York University Shanghai (Email: sl3635@nyu.edu). This work is (partially) financially supported by the National Key R\&D Program of China, Project Number 2021YFA1002800,  National Natural Science Foundation of China (NSFC) No.12001372, Shanghai Municipal Education Commission (SMEC) via Grant 0920000112, and NYU Shanghai Boost Fund.}}

\maketitle

\begin{abstract}

Given multiple point clouds, how to find the rigid transform (rotation, reflection, and shifting) such that these point clouds are well aligned? This problem, known as the generalized orthogonal Procrustes problem (GOPP), has found numerous applications in statistics, computer vision, and imaging science. While one commonly-used method is finding the least squares estimator, it is generally an NP-hard problem to obtain the least squares estimator exactly due to the notorious nonconvexity. In this work, we apply the semidefinite programming (SDP) relaxation and the generalized power method to solve this generalized orthogonal Procrustes problem. In particular, we assume the data are generated from a signal-plus-noise model: each observed point cloud is a noisy copy of the same unknown point cloud transformed by an unknown orthogonal matrix and also corrupted by additive Gaussian noise. We show that the generalized power method (equivalently alternating minimization algorithm) with spectral initialization converges to the unique global optimum to the SDP relaxation, provided that the signal-to-noise ratio is high. Moreover, this limiting point is exactly the least squares estimator and also the maximum likelihood estimator. In addition, we derive a block-wise estimation error for each orthogonal matrix and the underlying point cloud. Our theoretical bound is near-optimal in terms of the information-theoretic limit (only loose by a factor of the dimension and a log factor). Our results significantly improve the state-of-the-art results on the tightness of the SDP relaxation for the generalized orthogonal Procrustes problem, an open problem posed by Bandeira, Khoo, and Singer in 2014.

\end{abstract}

\section{Introduction}

Suppose we observe $n$ noisy copies $\{\BA_i\}_i^n$ of a common unknown point cloud $\BA\in\RR^{d\times m}$ transformed by unknown orthogonal matrices $\{\BO_i\}_{i=1}^n$,
\[
\BA_i = \BO_i\BA + \sigma\BW_i, \quad 1\leq i\leq n,
\]
where $\BA\in\RR^{d\times m}$ consists of $m$ data points in $\RR^d$ and $\BW_i\in\RR^{d\times m}$ is the additive Gaussian noise. How to recover the underlying point cloud $\BA$ and corresponding orthogonal matrices $\{\BO_i\}_{i=1}^n$? This problem, known as the generalized orthogonal Procrustes problem, has found many applications in statistics~\cite{G75,SG02}, computer vision~\cite{BM92,MDKK16,MGPG04}, and imaging science~\cite{BKS14,CKS15,KK16,PSB21,S18,SS11}.

One common approach to solve this generalized orthogonal Procrustes problem (GOPP) is finding the least squares estimator. However, it turns out that finding the globally optimal least squares estimator is an NP-hard problem in general. In this work, will study the convex relaxation (semidefinite programming relaxation)~\cite{BKS14,BBS17,WS13} of this challenging nonconvex problem and also the performance of the generalized power method~\cite{B16,CC18,LYS17,LYS20,L20c,L21a,ZB18} to estimate the $\BA$ and $\BO_i$. 
In particular, we will focus on the following key questions:
\begin{center}
{\em When is the convex relaxation tight?}
\end{center}
Here the tightness means the solution to the convex relaxation exactly equals the least square estimator~\cite{BBS17,L20a,WS13,WLZS21,Z19}, also informally known as ``relax but no need to round". In other words, we are interested in understanding when a seemingly NP-hard problem can be solved by an algorithm of polynomial time complexity. This is motivated by an interesting numerical phenomenon: the tightness of convex relaxation holds if the noise level is small. In fact, it is conjectured there exists a threshold for the noise level below which the tightness of convex relaxation holds in~\cite{BKS14} by Bandeira, Khoo, and Singer as an open problem in 2014. In the author's earlier work~\cite{L21a}, it is shown that $\sigma\lesssim \sqrt{m}/(\sqrt{d} + \sqrt{m})$ suffices to ensure the tightness of the SDP (semidefinite program) relaxation. However, this theoretical bound does not match the current numerical simulations since the experiments indicate that the more plausible threshold should be $\sigma\lesssim \sqrt{n}/(\sqrt{nd} + \sqrt{m})$ (modulo a log factor). As a result, the first question this manuscript tries to answer is to give a more precise characterization of this threshold. 

As the SDP relaxation is usually computationally expensive, it is more favorable to use efficient first-order nonconvex methods such as the generalized power method. As the original least squares problem is highly nonconvex, we cannot always expect the generalized power method to find the least squares estimator.  Thus it is natural to ask
\begin{center}
{\em When can the generalized power method recover the global optimal solution?}
\end{center}
Numerical simulations indicate a similar phenomenon: if the noise is small, then the generalized power method indeed recovers the SDP solution, which is also the global optimal least squares estimator. Therefore, we are interested in understanding why the generalized power method with a proper initialization enjoys the global convergence even for this nonconvex problem. 
The core variable is the signal-to-noise ratio (SNR): it is widely believed that the stronger the noise, the harder the problem. We will study how the answers to these two aforementioned questions depend on the signal-to-noise ratio. We aim to find out the maximum level of noise which allows the tightness of the SDP relaxation and the global convergence of the generalized power methods~\cite{BBS17,LYS20,ZB18}.

\subsection{Related works}

The generalized orthogonal Procrustes problem (GOPP) has been studied under many different settings. For its broad applications, we refer the interested readers to~\cite{GD04,G75,SG02,BM92,MDKK16,MGPG04,BKS14,CKS15,KK16,PSB21,S18,SS11} and the reference therein. Our work will mainly focus on the optimization approaches in finding the least squares estimator of the GOPP. More precisely, finding the least squares estimator can be finally reduced to solving following optimization program with orthogonality constraints:
\begin{equation}\label{def:ls0}
\max_{\BO_i\in\Od(d)}~\sum_{i,j} \Tr(\BC_{ij}\BO_j\BO_i^{\top})
\end{equation}
where $\BC_{ij} = \BA_i\BA_j^{\top}\in\RR^{d\times d}$ is the ``cross-covariance" between the $i$th and $j$th point clouds, and 
\begin{equation}\label{def:od}
\Od(d) : = \{\BR\in\RR^{d\times d}: \BR\BR^{\top} =\BR^{\top}\BR=\I_d\}
\end{equation}
is the set of all $d\times d$ orthogonal matrices.
The program in this form~\eqref{def:ls0} (as well as its famous special case $d=1$) is a well-known NP-hard problem in general and has been widely studied~\cite{A17,BKS16,GW95,S11b,WY13}. Due its nonconvexity, a global optimal solution is often hard to obtain. Here we briefly review several popular approaches in dealing with~\eqref{def:ls0}.

Convex relaxation is arguably one of the most popular and powerful methods. One common way of relaxing~\eqref{def:ls0} is to generalize the famous Goemans-Williamson relaxation~\cite{GW95} for graph max-cut ($d=1$) to $d\geq 2$ by considering semidefinite program (SDP) with block diagonal constraints~\cite{BKS16,B15,NRV13}. After solving the relaxation, one would find an approximate solution via rounding with solid theoretical guarantees~\cite{BKS16,GW95,MMMO17}. Our work follows a different route by studying the tightness of the SDP relaxation: in many practical applications, we see that the SDP relaxation is sufficient to produce the global optimal solution to the original program. In other words, the rounding procedure is not necessary sometimes in examples including synchronization~\cite{BBS17,L20a,L20c,WS13,Z19,ZB18}, community detection~\cite{A17}, signal processing and machine learning~\cite{ARR13,CESV15,CR09}. However, the tightness is not free lunch as it usually holds when the noise in the input data is relatively small. Therefore, our goal is to identify the maximum noise level for the tightness to hold. 

On the other hand, despite the success of convex relaxation and the recent progresses of general SDP solvers~\cite{STYZ20,TTT03,YST15}, solving large-scale SDPs could be still highly expensive and even computationally prohibitive in practice. Therefore, efficient first-order iterative methods are often preferred~\cite{AMS09,BM03,BM05,BVB20,YTFU21,WY13}. In the past few years, there are significant progresses in developing nonconvex approaches  with theoretical guarantees for various problems  in signal processing and machine learning. The recent advances in the provably convergent nonconvex methods roughly fall into two categories: (a) Showing that the optimization landscape is benign, i.e., the objective function has only one local optimum (which is also global) and there are no other ``spurious" local optima~\cite{BM03,BM05,BVB20,GLM16,L20a,MMMO17,SQW18}; (b) Construct a smart initialization for the algorithm and then show that the first-order method produces a sequence of iterates which converge to the global optimum~\cite{CLS15,CC18,KMO10,LYS17,LYS20,MWCC20}. Our work follows the second idea: we will first perform spectral initialization and then show that the first-order method will enjoy linear global convergence to the global optima. In particular, we will focus on the studying the convergence of the generalized power method (GPM). The GPM has been successfully applied to several related problems, especially group synchronization~\cite{B16,CC18,L20c,LYS17,LYS20,ZB18}: once the initialization is sufficiently close to the global optima, one can obtain the global convergence of the GPM. 
Our work benefits greatly from the recent popular leave-one-out technique~\cite{MWCC20,ZB18} in analyzing the convergence of our algorithm. In particular, the first step of our algorithm is the spectral initialization which needs a block-wise error bound on the singular vectors of a spiked rectangle matrix. This block-wise error bound is made possibly by the leave-one-out technique in obtaining an $\ell_{\infty}$-norm and block-wise error bound of the eigenvector perturbation~\cite{AFWZ20,L20b,ZB18}. 

The generalized Procrustes problem is closely related to the orthogonal group synchronization problem~\cite{L20c,LYS20,MMMO17,S11,WS13,ZWS21}. The $\Od(d)$ synchronization asks to recover a set of group elements $\BO_i\in\Od(d)$ from their noisy pairwise measurements:
\begin{equation}\label{def:odgroup}
\BO_{ij} : = \BO_i \BO_j^{-1} + \BE_{ij},\quad (i,j)\in{\cal E}
\end{equation}
where each $\BO_i$ is orthogonal, $\BE_{ij}$ is additive noise (it also can be multiplicative noise), and ${\cal E}$ is the edge set of all observed pairwise measurements. One benchmark statistical model for $\Od(d)$ synchronization is: each noise $\BE_{ij}$ is an independent random matrix (such as Gaussian random matrix) for all $i<j$. The GOPP is similar to the group synchronization in the sense that the pairwise measurement $\BC_{ij} = \BA_i\BA_j^{\top}$ contains the information about $\BO_i$ and $\BO_j$. However, the most obvious difference is that the noise between two point clouds $\BA_i$ and $\BA_jj$ is not independent of the edge unlike the benchmark group synchronization model~\eqref{def:odgroup} with additive independent noise on edges~\cite{A17,ABBS14,BBS17,WS13,ZB18}. This difference in the setup results in significant technical difference in the convergence analysis of the generalized power method from the previous works~\cite{B16,CC18,L20c,LYS17,LYS20,ZB18}. 

We conclude this section by summarizing the main contribution of our work. By using the leave-one-out technique, we successfully analyze the convergence of the generalized power method in the generalized orthogonal Procrustes problem. 
In particular, we show that the generalized power method converges linearly to a limiting point which is equal to the least squares estimator and also corresponds to the optimal solution to the SDP relaxation, if the noise strength $\sigma$ satisfies $\sigma \lesssim \sqrt{n}/(\sqrt{d}(\sqrt{nd} + \sqrt{m} + 2\sqrt{n\log n}))$ (modulo the constants, log-factor and condition number). As a result, our result provides a near-optimal sufficient condition for the tightness of the SDP relaxation in terms of information-theoretic bound and resolves the open problem proposed by Bandeira, Khoo, and Singer in~\cite{BKS14}. 

\subsection{Organization of our paper} 
Our paper will proceed as follows. In Section~\ref{s:prelim}, we will introduce the basics of the generalized orthogonal Procrustes problem and several optimization formulations including semidefinite relaxation and generalized power method. Section~\ref{s:mainthm} will focus on the main theoretical results in this paper and Section~\ref{s:numerics} will present several numerical experiments to complement our theory.  The proof will be presented in Section~\ref{s:proof}.

\subsection{Notations}
We denote vectors and matrices by boldface letters $\bx$ and $\BX$ respectively. For a given matrix $\BX$, $\BX^{\top}$ is the transpose of $\BX$ and $\BX\succeq 0$ means $\BX$ is symmetric and positive semidefinite. Let $\I_n$ be the identity matrix of size $n\times n$. For two matrices  $\BX$ and $\BY$ of the same size, their inner product is $\lag \BX,\BY\rag:= \Tr(\BX^{\top}\BY) = \sum_{i,j}X_{ij}Y_{ij}.$ Let $\|\BX\|$ be the operator norm, $\|\BX\|_*$ be the nuclear norm, and $\|\BX\|_F$ be the Frobenius norm. We denote the $i$th largest and the smallest singular value of $\BX$ by $\sigma_{i}(\BX)$ and $\sigma_{\min}(\BX)$ respectively.
We let $\Od(d) : = \{\BO\in\RR^{d\times d}: \BR^{\top}\BR= \BR\BR^{\top}=\I_d\}$ be the set of all $d\times d$ orthogonal matrices.
For a non-negative function $f(x)$, we write $f(x)\lesssim g(x)$ and $f(x) = O(g(x))$ if there exists a positive constant $C_0$ such that $f(x)\leq C_0g(x)$ for all $x.$

\section{Preliminaries}\label{s:prelim}

Suppose we observe a set of $n$ noisy point clouds in $\RR^d$ transformed by an unknown orthogonal matrix $\BO_i\in\Od(d)$:
\begin{equation}\label{def:model}
\BA_i = \BO_i\BA + \sigma\BW_i
\end{equation}
where $\BW_i$ is the additive Gaussian noise and $m \geq d$. Without loss of generality, we assume the underlying orthogonal matrix $\BO_i$ is $\I_d$ for all $1\leq i\leq n.$
Then the model becomes
\[
\BA_i = \BA + \sigma\BW_i, \quad \BA\in\RR^{d\times m}
\]
and we can write the observed data into the signal-plus-noise form:
\begin{equation}\label{def:Z}
\BD := \BZ\BA + \sigma\BW\in\RR^{nd\times m}, \quad \BZ^{\top}:= [\I_d,\cdots,\I_d] \in\RR^{d\times nd} 
\end{equation}
where $\BW\in\RR^{nd\times m}$ is a Gaussian random matrix.

The key question is: how to recover $\BO_i$ and $\BA$ from $\BA_i$ (equivalently $\BD$)?  One commonly-used approach is finding the least squares estimator of $\BO_i$ and $\BA$ by minimizing \begin{equation}\label{def:ls}
\min_{\BO_i\in\Od(d),\BA\in\RR^{d\times m}}~ \sum_{i=1}^n \|\BA_i - \BO_i\BA\|_F^2.
\end{equation}
Under the additive Gaussian noise, the least squares estimator exactly matches the maximum likelihood estimator. It is easy to see that~\eqref{def:ls} is a nonconvex program due to the orthogonality constraints. Note that~\eqref{def:ls} involves two variables $\BO_i$ and $\BA$. In fact, we can remove $\BA$ by first fixing $\BO_i$.
By assuming $\BO_i$ is known,~\eqref{def:ls} is a convex function over $\BA$ and minimizing it gives 
\begin{equation}\label{eq:Ahat}
\widehat{\BA} = \frac{1}{n}\sum_{i=1}^n \BO_i^{\top}\BA_i.
\end{equation}
By substituting $\widehat{\BA}$ back into~\eqref{def:ls}, the objective function~\eqref{def:ls} is equivalent to maximizing a generalized quadratic form with orthogonality constraints
\begin{equation}\label{def:od}
\max_{\BO_i\in \Od(d)}~\sum_{i<j}\lag \BC_{ij}, \BO_i\BO_j^{\top}\rag, \quad\text{where}\quad\BC_{ij} = \BA_i\BA_j^{\top} \tag{P}
\end{equation}
which gives~\eqref{def:ls0}. An equivalent formulation is given by
\begin{equation}\label{def:fo}
f(\BO) = \left\lag \BC, \BO\BO^{\top}\right\rag = \| \BD^{\top}\BO \|_F^2, \qquad \BC := \BD\BD^{\top}\in\RR^{nd\times nd}
\end{equation}
where 
\begin{equation}\label{def:D}
\BO^{\top} = [\BO_1^{\top}, \cdots, \BO_n^{\top}]\in\Od(d)^{\otimes n}, \qquad \BD^{\top} = [\BA_1^{\top}, \cdots, \BA_n^{\top}]\in\RR^{m\times nd}.
\end{equation}
From now on, we will focus on recovering the orthogonal matrices $\BO_i$ by solving~\eqref{def:od} and then use the estimator of $\BO_i$ to reconstruct $\BA$ via~\eqref{eq:Ahat}.

\subsection{Semidefinite programming relaxation}
The nonconvex program~\eqref{def:od} is a famous NP-hard problem in general. One approach to tackle it is to use the semidefinite program (SDP) relaxation by using the fact that
\[
\BO\BO^{\top}\succeq 0, \qquad [\BO\BO^{\top}]_{ii} = \BO_i\BO_i^{\top} = \I_d.
\]
As a result, the SDP relaxation of~\eqref{def:od} is 
\begin{equation}\label{def:sdp}
\max~\lag \BC,\BG\rag\quad \text{such that} \quad \BG\succeq 0, ~\BG_{ii} = \I_d,\tag{SDP}
\end{equation}
where $\BO\BO^{\top}$ is replaced by $\BG  \in\RR^{nd\times nd}.$ It is a natural generalization of the Goemans-Williamson relaxation $d=1$ of the graph max-cut problem~\cite{GW95} to $d\geq 2$. This SDP with block-diagonal constraints has been found in several applications include group synchronization~\cite{B15,WS13,RCBL19}.
Many off-the-shelf SDP solvers~\cite{STYZ20,TTT03,YST15} can find the global optimal solution to~\eqref{def:sdp}  directly for a relatively small size program. 
In general, one cannot expect the solution to the relaxation~\eqref{def:sdp} is equal to that to~\eqref{def:od}. However, numerical experiments in~\cite{L21a} indicate that the~\eqref{def:sdp} relaxation yields the global optimal solution to~\eqref{def:od} for $\sigma\lesssim \sqrt{n}/(\sqrt{nd} + \sqrt{m} + 2\sqrt{n\log n})$, i.e., when the noise level is small compared with the signal strength. Therefore, one core component of our work is to justify the tightness of~\eqref{def:sdp} under the high SNR (signal-to-noise ratio) regime, i.e., for relatively small $\sigma.$

\subsection{Generalized power method and alternating minimization algorithm}
As discussed briefly in the introduction, it is computationally prohibitive to solve a large-scale SDP directly, which is the major roadblock towards practical applications. Thus we aim to develop efficient first-order methods which achieve similar performance as the SDP does while also enjoying solid theoretical guarantees. Inspired by the recent progresses of the generalized power method (GPM) in group synchronization~\cite{B16,LYS17,LYS20,L20c}, we apply the GPM to the generalized orthogonal Procrustes problem. 

The generalized power method can be also viewed as the conditional gradient method with a specific stepsize~\cite{LYS17}. Note that by dropping the orthogonality constraints in~\eqref{def:od}, the gradient of $f(\BO)$ in~\eqref{def:fo} is
\[
\nabla f(\BO) = 2\BC\BO.
\]
Given the current iterate $\BS^t\in\Od(d)^{\otimes n}$, we update the next iterate by
\[
\BS^{t+1} : = \argmax_{\BO\in\Od(d)^{\otimes n}} \lag \BO, \BC\BS^t\rag = \sum_{i=1}^n \argmax_{\BO_i\in\Od(d)}  \lag \BO_i, [\BC\BS^t]_i\rag.
\]
We can view this updating scheme as one-step power iteration followed by projection step. In fact, it is not hard to find the explicit form of $\BS^{t+1}$ by using the singular value decomposition: the $i$th $d\times d$ block $\BS^{t+1}_i$ of $\BS^{t+1}$ is given by
\[
\BS^{t+1}_i = \PP([\BC\BS^t]_i),~~1\leq i\leq n,
\]
where
\begin{equation}\label{def:P}
\PP(\BX) := \BU\BV^{\top}
\end{equation}
is also known as the matrix sign function~\cite{G11}
where $\BU$ and $\BV\in\Od(d)$ are the normalized sleft/right singular vectors of $\BX$. Actually, $\PP(\BX)$ is one global minimizer (not necessarily unique if $\BX$ is not invertible) to 
\[
\PP(\BX) : = \argmin_{\BQ\in\Od(d)}\|\BX - \BQ\|_F^2.
\]
By the definition of $\PP(\cdot)$, it has multiple choices of $\BU$ and $\BV$ if the input matrix is rank deficient. In this case, it suffices to pick one specific choice of left/right singular vectors. In particular, if $\BX$ is invertible, then
\[
\PP(\BX) = (\BX\BX^{\top})^{-1/2}\BX = \BX(\BX^{\top}\BX)^{-1/2}.
\]

In summary, the GPM aims to recover the hidden orthogonal matrices iteratively by using 
\begin{equation}\label{def:gpm}
\BS^{t+1} = \PP_n(\BC\BS^t) \tag{GPM}
\end{equation}
where $\PP_n(\cdot)$ is an operator from $\RR^{nd\times d}$ to $\Od(d)^{\otimes n}$, i.e.,
\begin{equation}\label{def:Pn}
\PP_n(\BX) := 
\begin{bmatrix}
\PP(\BX_1) \\
\vdots \\
\PP(\BX_n)
\end{bmatrix}
\end{equation}
which applies ${\cal P}$ operation to each $d\times d$ block, where $\BX_i$ is the $i$th $d\times d$ block of $\BX\in\RR^{nd\times d}.$

\vskip0.25cm
In fact,~\eqref{def:gpm} is equivalent to the alternating minimization algorithm applied to~\eqref{def:ls}. Suppose we are given $\BS^{t}\in\Od(d)^{\otimes n}$, then the next updated estimator of $\BA$ follows from~\eqref{eq:Ahat}:
\[
\BA^{t} = \frac{1}{n}\sum_{i=1}^n (\BS_i^{t})^{\top}\BA_i = \frac{1}{n}(\BS^{t})^{\top}\BD.
\]
Given $\BA^{t}$, we update the orthogonal matrix $\BO_i$ by solving $\min_{\BO_i\in\Od(d)}\|\BA_i - \BO_i\BA^{t} \|_F$ which gives
\[
\BS_i^{t+1} = \PP( \BA_i(\BA^{t})^{\top}  ) = \PP( \BA_i\BD^{\top}\BS^{t}  ) = \PP(\BD_i \BD^{\top} \BS^t) = \PP([\BC \BS^t]_i)
\]
where the $i$-th block of $\BD$ equals $\BA_i.$ 

\vskip0.25cm

Note that the GPM is essentially nonconvex and thus a carefully chosen initialization would help the convergence. 
Here we initialize the algorithm by using spectral method: first compute the top $d$ left singular vectors $\BU$ of $\BD$ in~\eqref{def:D} and then round each $d\times d$ block to an orthogonal matrix, 
\begin{equation}\label{def:init}
\BS^0 = \PP_n(\BU).
\end{equation}
As~\eqref{def:od} is a nonconvex program, we cannot hope that the global convergence always holds as there is still the risk of getting stuck at local optima or at other fixed points. 
Remarkably, several previous numerical experiments in~\cite{L21a} indicates that the generalized power method with spectral initialization would converge to the global optimal solution, provided that the noise level is relatively small. As a result, the major theoretical question is obtaining the convergence of the generalized power method with spectral initialization. 

To discuss the convergence of the generalized power method, we need to introduce a distance on $\RR^{nd\times d}$ which is invariant to a global orthogonal transform since the solutions to~\eqref{def:od} are equivalent if they are multiplied right by a common orthogonal matrix. To resolve this ambiguity, we define a distance between two matrices in $\RR^{nd\times d}$ by
\begin{equation}\label{def:df}
d_F(\BX,\BY) : = \min_{\BQ\in\Od(d)}\|\BX - \BY\BQ\|_F.
\end{equation}
The global minimizer is given by $\BQ = \PP(\BY^{\top}\BX)$ and
\begin{align*}
d_F(\BX,\BY) & = \sqrt{\|\BX\|_F^2 + \|\BY\|_F^2-2\lag \BX,\BY\BQ\rag}  \\
& =  \sqrt{\|\BX\|_F^2 + \|\BY\|_F^2-2\|\BY^{\top}\BX\|_*}
\end{align*}
where $\|\cdot\|_*$ denotes the nuclear norm and $\PP(\cdot)$ is defined in~\eqref{def:P}.
In particular, if we let $\BX$ and $\BY$ be in $\Od(d)^{\otimes n}$, then
\[
d_F(\BX,\BY) =  \sqrt{2nd-2\|\BY^{\top}\BX\|_*}
\]
since $\|\BX\|_F = \sqrt{nd}$ holds for $\BX\in\Od(d)^{\otimes n}.$

\section{Main results}\label{s:mainthm}

Now we are ready to introduce our first main result. 
The proof of the global convergence of the~\eqref{def:gpm} relies on spectral initialization, i.e., computing the top $d$ left singular vectors $\BU$ of $\BD$ in~\eqref{def:D} and then round each $d\times d$ block $\BU_i$ of $\BU$ to an orthogonal matrix. As a by-product, we first provide a block-wise error bound for the spectral estimation of $\BO$ in~\eqref{def:D}.
\begin{theorem}\label{thm:spectral}
Suppose we observe $\BA_i = \BO_i \BA + \sigma\BW_i$, $1\leq i\leq n$, with $\BA\in\RR^{d\times m}$ $(m\geq d)$ and additive Gaussian noise $\BW_i$. 
The spectral estimator $\BS^0\in\Od(d)^{\otimes n}$ in~\eqref{def:init} of $\BO\in\Od(d)^{\otimes n}$ satisfies
\begin{equation}
\min_{\BQ\in\Od(d)}\max_{1\leq i\leq n} \| \BS_i^0 - \BO_i\BQ \| \lesssim \frac{\sigma\kappa^2}{\sigma_{\min}(\BA)}\cdot \frac{(\sqrt{nd} + \sqrt{m}+\sqrt{2\gamma n\log n})}{\sqrt{n}}
\end{equation}
with probability at least $1-O(n^{-\gamma+2})$ if
\[
\sigma \lesssim  \frac{\sigma_{\min}(\BA)}{\kappa^4}\cdot \frac{\sqrt{n}}{\sqrt{nd} + \sqrt{m} + \sqrt{2\gamma n\log n}}, \quad \kappa := \frac{\sigma_{\max}(\BA)}{\sigma_{\min}(\BA)}.
\]
\end{theorem}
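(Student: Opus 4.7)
The plan is to first reduce to the case $\BO_i=\I_d$ (WLOG, since replacing $\BA_i$ by $\BO_i^{\top}\BA_i$ only rotates the ground truth and leaves the Gaussian noise invariant in distribution), so that $\BD=\BZ\BA+\sigma\BW$. Taking the thin SVD $\BA=\BU_A\BSigma_A\BV_A^{\top}$ gives $\BZ\BA=(\BZ\BU_A/\sqrt{n})(\sqrt{n}\BSigma_A)\BV_A^{\top}$, so the top-$d$ left singular vectors of the noiseless signal are $\BU^{\star}:=\BZ\BU_A/\sqrt{n}$, with every block equal to $\BU_A/\sqrt{n}$, and the smallest signal singular value equals $\sqrt{n}\,\sigma_{\min}(\BA)$. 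Together with the Gaussian operator-norm bound $\|\BW\|\lesssim\sqrt{nd}+\sqrt{m}$, the hypothesis on $\sigma$ puts us inside the Wedin regime and yields the baseline Frobenius estimate
\[
\|\BU\BH-\BU^{\star}\|_F \;\lesssim\; \frac{\sigma(\sqrt{nd}+\sqrt{m})}{\sqrt{n}\,\sigma_{\min}(\BA)}
\]
for some $\BH\in\Od(d)$. Passing to a single block via $\|\cdot\|\leq\|\cdot\|_F$ would waste a factor of $\sqrt{n}$, so I upgrade this to a block-wise bound by the leave-one-out device.

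For each $k$, form $\BD^{(k)}$ by replacing the $k$-th row block $\BA+\sigma\BW_k$ of $\BD$ by the noiseless $\BA$; this makes the top-$d$ SVD $\BU^{(k)}\BSigma^{(k)}(\BV^{(k)})^{\top}$ of $\BD^{(k)}$ statistically independent of $\BW_k$. The key identity, obtained by restricting $\BD\BV=\BU\BSigma$ to the $k$-th row block, is $\BU_k=(\BA+\sigma\BW_k)\BV\BSigma^{-1}$. Substituting the leave-one-out SVD gives
\[
\BU_k \;=\; \BA\BV^{(k)}(\BSigma^{(k)})^{-1} \;+\; \sigma\BW_k\BV^{(k)}(\BSigma^{(k)})^{-1} \;+\; (\BA+\sigma\BW_k)\bigl[\BV\BSigma^{-1}-\BV^{(k)}(\BSigma^{(k)})^{-1}\bigr],
\]
where the first term equals $\BU^{(k)}_k$ (since $\BD^{(k)}_k=\BA$) and is controlled by Wedin applied to $\BD^{(k)}$, and the third is a coupling error controlled by a second Wedin comparison between $\BD$ and $\BD^{(k)}$, which differ only on one row block and hence in operator norm by at most $\sigma\|\BW_k\|\lesssim\sigma(\sqrt{d}+\sqrt{m})$.

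The middle term is where the leave-one-out pays off: conditional on $\BV^{(k)}$, the matrix $\BW_k\BV^{(k)}\in\RR^{d\times d}$ has i.i.d.\ standard Gaussian entries, so $\|\BW_k\BV^{(k)}\|\lesssim\sqrt{d}+\sqrt{2\gamma\log n}$ with probability at least $1-n^{-\gamma}$, and a union bound over $k$ makes this uniform with probability $1-O(n^{-\gamma+1})$. Dividing by $\sigma_{\min}(\BSigma^{(k)})\asymp\sqrt{n}\,\sigma_{\min}(\BA)$ and assembling all three terms gives a uniform block-wise estimate of the form
\[
\min_{\BH^{\star}\in\Od(d)}\max_{1\leq k\leq n} \Bigl\|\BU_k - \tfrac{1}{\sqrt{n}}\BU_A\BH^{\star}\Bigr\| \;\lesssim\; \frac{\sigma\kappa^2\bigl(\sqrt{d}+\sqrt{m/n}+\sqrt{2\gamma\log n}\bigr)}{\sqrt{n}\,\sigma_{\min}(\BA)}.
\]
Because the reference block $(\BU_A/\sqrt{n})\BH^{\star}$ has smallest singular value $1/\sqrt{n}$, the standard Lipschitz bound for the polar factor ($\|\PP(\BY+\BE)-\PP(\BY)\|\lesssim\|\BE\|/\sigma_{\min}(\BY)$ whenever $\|\BE\|<\sigma_{\min}(\BY)/2$) converts the display into the claimed bound on $\BS_k^0=\PP(\BU_k)$ with $\BQ:=\BU_A\BH^{\star}$, since the $\sqrt{n}$ cancels.

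The main obstacle is not any individual estimate but aligning the per-$k$ rotations $\BH^{(k)}$ produced by the various Wedin steps with a single global rotation $\BH^{\star}$, so that the triangle-inequality losses do not exceed the leading Gaussian term. The Wedin step for the right singular vectors is especially delicate: controlling $\BV\BSigma^{-1}-\BV^{(k)}(\BSigma^{(k)})^{-1}$ mixes an angular perturbation of $\BV$ with a magnitude perturbation of $\BSigma^{-1}$, and doing this without blowing up the condition-number dependence beyond $\kappa^2$ in the conclusion (and beyond $\kappa^4$ in the signal-to-noise hypothesis) is where most of the technical work lies.
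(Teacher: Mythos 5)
Your overall strategy matches the paper's: reduce to $\BO_i=\I_d$, exploit the identity $\BU_k=(\BA+\sigma\BW_k)\BV\BSigma^{-1}$, deploy a leave-one-out comparison between $\BD$ and $\BD^{(k)}$ to decouple $\BW_k$ from the right singular vectors, and finish with the polar-factor Lipschitz bound. What differs, and where the gap lies, is the choice of anchor in the decomposition. You write $\BU_k=\BU^{(k)}_k+\sigma\BW_k\BV^{(k)}(\BSigma^{(k)})^{-1}+(\BA+\sigma\BW_k)[\BV\BSigma^{-1}-\BV^{(k)}(\BSigma^{(k)})^{-1}]$, so each block is compared to its own $k$-dependent reference $\BU^{(k)}_k=\BA\BV^{(k)}(\BSigma^{(k)})^{-1}$, and the coupling term involves the combined object $\BV\BSigma^{-1}-\BV^{(k)}(\BSigma^{(k)})^{-1}$. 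You then correctly flag the two resulting difficulties --- aligning the $n$ different Wedin rotations $\BH^{(k)}$ to a single global $\BH^{\star}$, and controlling a perturbation that mixes the angular change in $\BV$ with the magnitude change in $\BSigma^{-1}$ --- but you leave both unresolved, and those are precisely the places where the argument has to be done carefully to avoid extra factors of $\sqrt{n}$ or $\kappa$. As written, the proof is therefore incomplete.

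The paper's decomposition is designed to sidestep both of these. It anchors \emph{every} block to the one fixed matrix $\BA\BV_0\BR\BSigma^{-1}$, where $\BR=\PP(\BV_0^{\top}\BV)$ and $\BSigma$ come from the full-data SVD, so the comparison orthogonal matrix $\BQ:=\PP(\BA\BV_0\BR\BSigma^{-1})$ is automatically the same for all $i$ and no cross-alignment is needed. Writing $\BU_i-\BA\BV_0\BR\BSigma^{-1}=(\BA+\sigma\BW_i)(\BV-\BV_0\BR)\BSigma^{-1}+\sigma\BW_i\BV_0\BR\BSigma^{-1}$ keeps $\BSigma^{-1}$ as a separate multiplicative factor that is controlled once and for all by Weyl (so there is no mixed angular/magnitude quantity to estimate), and the leave-one-out device is applied only to the single problematic factor $\|\BW_i(\BV-\BV_0\BR)\|$ via the three-term split $\BW_i(\BV-\BV^{(i)}\BQ^{(i)})+\BW_i(\BV^{(i)}-\BV_0\BR^{(i)})+\BW_i\BV_0(\BR^{(i)}\BQ^{(i)}-\BR)$. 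Crucially, the last (rotation-mismatch) term needs no sharp alignment estimate at all: $\|\BR^{(i)}\BQ^{(i)}-\BR\|\leq 2$ trivially, and $\|\BW_i\BV_0\|\lesssim\sqrt{d}+\sqrt{\log n}$ is already of the right order since $\BV_0$ is deterministic. If you switch your anchor from $\BA\BV^{(k)}(\BSigma^{(k)})^{-1}$ to $\BA\BV_0\BR\BSigma^{-1}$ and push the leave-one-out one level deeper in this way, both of your stated obstacles disappear and the rest of your outline goes through.
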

Here we implicitly require the smallest singular value $\sigma_{\min}(\BA)$ of $\BA$ is strictly positive, i.e., $\BA$ is rank-$d$ and $m\geq d$.
This result is a natural generalization of the $\ell_{\infty}$-norm singular vector perturbation bound to the block-wise error bound~\cite{AFWZ20}. Note that applying Davis-Kahan~\cite{DK70} or Wedin~\cite{W72} type perturbation argument for singular vectors gives an $\ell_2$-norm bound:
\[
\min_{\BQ\in\Od(d)}\|\BU - \BO\BQ\| \lesssim \frac{\sigma\sqrt{n}\|\BW \|}{\sigma_{d}(\BD)} \lesssim \frac{\sigma(\sqrt{nd} + \sqrt{m})}{\sigma_{\min}(\BA)}
\]
where we choose to normalize the top $d$ left singular vectors $\BU\in\RR^{nd\times d}$ of $\BD$ as $\BU^{\top}\BU=n\I_d$ and $\sigma_d(\BD)$ is the $d$th largest singular value of $\BD$ which satisfies
\[
\sigma_d(\BD) \geq \sigma_{d}(\BZ\BA) - \sigma\|\BW\| \geq \sqrt{n}\sigma_{\min}(\BA) - \sigma (\sqrt{nd} + \sqrt{m}) \gtrsim \sqrt{n}\sigma_{\min}(\BA)
\]
under 
\[
\sigma \lesssim \frac{\sqrt{n}\sigma_{\min}(\BA)}{\sqrt{nd}+\sqrt{m}}.
\]
Theorem~\ref{thm:spectral} implies that the error incurred by the additive Gaussian random noise is evenly distributed to each block. Also it means the spectral method provides an initialization $\BS^0$ which is sufficiently close to the ground truth signal $\BO$ blockwisely. 

\vskip0.25cm

Note that the spectral initialization in general does not yield the exact maximum likelihood estimator to~\eqref{def:od}. Thus we will start from this initial guess~\eqref{def:init} and apply the generalized power method~\eqref{def:gpm}. 
Our second main result is the global linear convergence of the generalized power method with spectral initialization to the least squares estimator (also the maximum likelihood estimator) of the orthogonal matrices $\BO_i$. In fact, the tightness of the SDP relaxation is a by-product of the global convergence of the GPM.

\begin{theorem}\label{thm:main}
Suppose we observe $\BA_i = \BO_i \BA + \sigma\BW_i$, $1\leq i\leq n$, with $\BA\in\RR^{d\times m}$ and additive Gaussian noise $\BW_i$. With probability at least $1-O(n^{-\gamma+2})$, the sequence $\{\BS^t\}_{t\geq 0}$ from~\eqref{def:gpm} converges to a unique fixed point $\BS^{\infty}$ linearly 
\[
d_F(\BS^t,\BS^{\infty}) \leq \rho^t d_F(\BS^0,\BS^{\infty})
\]
where $\rho<1$ is a constant and the initialization is $\BS^0 = \PP_n(\BU)$ with the top $d$ left singular vectors $\BU\in\RR^{nd\times d}$ of $\BD\in\RR^{nd\times m}$, provided that
\begin{equation}\label{cond:sigma}
\sigma \lesssim  \frac{\sigma_{\min}(\BA)}{\kappa^4}\cdot \frac{\sqrt{n}}{\sqrt{d}(\sqrt{nd} + \sqrt{m} + \sqrt{2\gamma n\log n})}, \quad \kappa = \frac{\sigma_{\max}(\BA)}{\sigma_{\min}(\BA)}.
\end{equation}
Moreover, $\BS^{\infty}$ is the unique global maximizer to~\eqref{def:od} (modulo a global orthogonal transform) and $\BS^{\infty}(\BS^{\infty})^{\top}$ is the unique global maximizer to the~\eqref{def:sdp} relaxation.
\end{theorem}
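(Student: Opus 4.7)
The plan is to localize the analysis: Theorem~\ref{thm:spectral} places the spectral initialization $\BS^0$ in a small block-wise neighborhood of the ground truth $\BO=\BZ$ (WLOG $\BO_i=\I_d$), and I would show the GPM map $F(\BS):=\PP_n(\BC\BS)$ preserves this neighborhood and is a $d_F$-contraction with rate $\rho<1$ on it. The Banach fixed-point theorem then delivers a unique fixed point $\BS^\infty$ with $d_F(\BS^t,\BS^\infty)\leq \rho^t d_F(\BS^0,\BS^\infty)$. To identify $\BS^\infty$ as the maximum-likelihood estimator and conclude~\eqref{def:sdp} tightness with uniqueness, I would build a dual certificate directly from the polar decomposition of $\BC\BS^\infty$ forced by the fixed-point condition.

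Fix the neighborhood $\mathcal{N}_\delta := \{\BS\in\Od(d)^{\otimes n} : \min_{\BQ\in\Od(d)}\max_i\|\BS_i-\BO_i\BQ\|\leq\delta\}$ with $\delta$ small depending on $\kappa$, and decompose
\[
\BC=\BZ\BA\BA^\top\BZ^\top+\sigma(\BZ\BA\BW^\top+\BW\BA^\top\BZ^\top)+\sigma^2\BW\BW^\top.
\]
For invariance, the signal block $[\BZ\BA\BA^\top\BZ^\top\BS]_i=\BA\BA^\top\sum_j\BS_j$ is close to $n\BA\BA^\top\BQ^*$ with smallest singular value $\gtrsim n\sigma_{\min}(\BA)^2$, so its polar factor is close to $\BQ^*$; a block-wise leave-one-out bound on the noise terms, combined with the Gaussian estimate $\|\BW\|\lesssim\sqrt{nd}+\sqrt{m}+\sqrt{n\log n}$, shows $F(\BS)$ stays in $\mathcal{N}_\delta$ under~\eqref{cond:sigma}. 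For contraction, I would use that $\PP(\BX)$ is Lipschitz with constant $1/\sigma_{\min}(\BX)$ at well-conditioned $\BX$. After aligning $\BS$ and $\BS'$ by the optimal common rotation, the signal part of $\BC(\BS-\BS')$ averages across blocks (contributing a factor $1/\sqrt{n}$) and the noise part is controlled by $\sigma\|\BW\|$ together with leave-one-out block estimates, which together yield
\[
\rho\lesssim \frac{\sigma\kappa^2\sqrt{d}\,(\sqrt{nd}+\sqrt{m}+\sqrt{n\log n})}{\sqrt{n}\,\sigma_{\min}(\BA)}<1
\]
under~\eqref{cond:sigma}.

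For the optimality statement, the fixed-point identity $\BS^\infty_i=\PP([\BC\BS^\infty]_i)$ yields the polar decomposition $[\BC\BS^\infty]_i=\BS^\infty_i\BLambda_i$ with $\BLambda_i:=([\BC\BS^\infty]_i^\top[\BC\BS^\infty]_i)^{1/2}\succeq 0$ symmetric. Setting $\BLambda:=\blkdiag(\BLambda_1,\ldots,\BLambda_n)$, one has $(\BLambda-\BC)\BS^\infty=0$, so $\Ran(\BS^\infty)\subseteq\Ker(\BLambda-\BC)$ and $\BLambda$ is a natural KKT dual multiplier for~\eqref{def:sdp}. Tightness and uniqueness of the SDP primal optimum then reduce to verifying $\BLambda-\BC\succeq 0$ with $\rank(\BLambda-\BC)=nd-d$. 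Using the block-wise proximity of $\BS^\infty$ to $\BZ\BQ^*$ (inherited from the contraction), each $\BLambda_i$ is close to $n\BA\BA^\top$, and $\BC-\BZ\BA\BA^\top\BZ^\top$ is controlled by Gaussian random-matrix estimates, I would derive a spectral-gap bound
\[
\lambda_{\min}\bigl((\BLambda-\BC)|_{\Ran(\BS^\infty)^\perp}\bigr)\gtrsim n\sigma_{\min}(\BA)^2,
\]
which is strictly positive under~\eqref{cond:sigma} and simultaneously forces $\BS^\infty$ to be the unique global maximizer of~\eqref{def:od} modulo a global rotation.

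The main obstacle is maintaining block-wise operator-norm control on $\BS^t-\BZ\BQ^*$ throughout the iteration at the sharp rate, rather than only the Frobenius bound that falls out of the naive contraction estimate. Unlike group synchronization on~\eqref{def:odgroup}, where the noise on distinct edges is independent, here $\BC_{ij}=\BA_i\BA_j^\top$ shares the same noise matrix $\BW_i$ across all $j$, so the noise on different ``edges'' is strongly coupled and a naive block Frobenius bound loses a factor of roughly $\sqrt{n}$. The remedy is a leave-one-out construction: for each $i$, build an auxiliary iterate $\BS^{t,(i)}$ driven by the data with $\BW_i$ replaced by an independent copy $\widetilde\BW_i$, prove inductively that $\BS^{t,(i)}$ and $\BS^t$ agree outside block $i$ up to negligible error, and then use the independence of $\widetilde\BW_i$ from the coupled iterate to extract a sharp block-$i$ estimate. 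Propagating this control through both Theorem~\ref{thm:spectral} and the GPM iteration is the most delicate piece and is what upgrades the earlier bound of~\cite{L21a} to the near-optimal rate~\eqref{cond:sigma}.
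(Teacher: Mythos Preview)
Your overall strategy matches the paper's: leave-one-out keeps the GPM iterates localized, the map contracts in $d_F$ there, and a dual certificate built from the fixed-point polar decomposition certifies SDP tightness and uniqueness. The execution differs in a few places worth flagging.

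The paper does not work in your block-wise neighborhood $\mathcal{N}_\delta$. It instead uses $\mathcal{N}_\eps\cap\mathcal{N}_{\xi,\infty}$ with $\mathcal{N}_\eps=\{\BS:d_F(\BS,\BZ)\leq\eps\sqrt{nd}\}$ and the noise-specific set $\mathcal{N}_{\xi,\infty}=\{\BS:\max_i\|\BW_i\BW^\top\BS\|_F\leq\xi\sqrt{d}(\sqrt{nd}+\sqrt{m}+\sqrt{2\gamma n\log n})^2\}$. Membership in $\mathcal{N}_{\xi,\infty}$ is exactly what lower-bounds $\sigma_{\min}([{\cal L}\BS]_i)$ and makes $\PP_n$ Lipschitz, so the contraction (Proposition~\ref{prop:con}) holds for \emph{any} pair in this intersection. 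The leave-one-out (with $\BW_i$ zeroed out, not resampled by an independent copy) is then used only to show inductively that the specific iterates $\BS^t$ and their auxiliaries $\BS^{i,t}$ remain in $\mathcal{N}_{\xi,\infty}$ (Lemma~\ref{lem:key1}). This sidesteps what you identify as the ``main obstacle'': the paper never maintains block-wise operator-norm control of $\BS^t-\BZ\BQ$ along the iteration; such bounds appear only \emph{a posteriori} for $\BS^\infty$ in Theorem~\ref{thm:error}. Your framing as ``Banach on $\mathcal{N}_\delta$'' is therefore a little loose, since the leave-one-out controls only the actual iterates, not arbitrary elements of $\mathcal{N}_\delta$; the paper runs an induction on $t$ (and then a separate argument in Lemma~\ref{lem:convt} for $t>n$) rather than a one-shot Banach application. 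Relatedly, the signal contraction in Lemma~\ref{lem:L} comes from the alignment bound $\|\BZ^\top(\BX-\BY\BQ)\|_F\leq 2\eps\sqrt{nd}\,d_F(\BX,\BY)$ on $\mathcal{N}_\eps$, giving a factor $\eps\sqrt{d}$ rather than the ``$1/\sqrt{n}$ from averaging'' you describe.

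One small slip in the certificate: for $(\BLambda-\BC)\BS^\infty=0$ you need the \emph{left} polar decomposition $[\BC\BS^\infty]_i=\BLambda_i\BS_i^\infty$ with $\BLambda_i=([\BC\BS^\infty]_i[\BC\BS^\infty]_i^\top)^{1/2}$, as the paper takes. Your right-polar choice $\BLambda_i=([\BC\BS^\infty]_i^\top[\BC\BS^\infty]_i)^{1/2}$ gives $[\BC\BS^\infty]_i=\BS_i^\infty\BLambda_i$, which does not yield $\BLambda\BS^\infty=\BC\BS^\infty$ for block-diagonal $\BLambda$ unless each $\BLambda_i$ commutes with $\BS_i^\infty$.
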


Here are a few remarks on the Theorem~\ref{thm:main}. First, it provides a sufficient condition on the tightness of the SDP relaxation: solving~\eqref{def:sdp} suffices to produce the global optimal solution to the original program~\eqref{def:od}. Moreover, one does not need to solve the SDP directly using a standard SDP solver; instead, using efficient generalized power method would give the exact global optimal solution. In fact, the SDP relaxation and generalized power method have quite similar empirical performance, as shown in our numerical experiments in Section~\ref{s:numerics}.

The main difference of this work from the author's earlier work~\cite{L21a} is the improvement on the bound of $\sigma$. In~\cite{L21a}, it has been shown that $\sigma \lesssim \sigma_{\min}(\BA)/(\kappa^4\sqrt{d}(\sqrt{d} + \sqrt{m} + \sqrt{2 \log n}))$ is needed to ensure the tightness\footnote{While the result~\cite{L21a} holds for more general noise, it is suboptimal when applied to additive Gaussian noise.}: this bound essentially requires the additive noise in each point cloud is smaller than its signal strength. Therefore, the bound on $\sigma$ is nearly independent of the number of point clouds $n$. Instead, the new bound~\eqref{cond:sigma} integrates the information of all pairwise point clouds: as the sample size $n$ gets larger, the right hand side of~\eqref{cond:sigma}, i.e., maximum allowable noise level, increases
\[
\sigma \lesssim  \frac{\sigma_{\min}(\BA)}{\kappa^4}\cdot \frac{1}{\sqrt{d}(\sqrt{d} + \sqrt{m/n} + \sqrt{2\gamma\log n})}
\]
as $m/n$ decreases, which means the tightness of the SDP relaxation and the global convergence of the~\eqref{def:gpm} hold in the presence of stronger noise than the bound given in~\cite{L21a}.

One natural question is about the optimality of~\eqref{cond:sigma}. It has been shown in many applications that if the signal strength is stronger than the noise, it is possible to recover a meaningful solution~\cite{A17,AFWZ20,B16,CC18,S11,ZB18}. In our case where the data matrix $\BD = \BZ\BA + \sigma\BW$ is in the signal-plus-noise form, we would expect that the tightness of the SDP relaxation and  the convergence of the GPM hold
\[
 \frac{\sqrt{n}\|\BA\|}{\sigma\|\BW\|} > 1 \Longleftrightarrow \sigma < \frac{\sqrt{n}\|\BA\|}{\|\BW\|}.
\]
Note that $\|\BW\| \lesssim \sqrt{nd} + \sqrt{m}$ and thus we hope that 
\[
\sigma \lesssim \frac{\sqrt{n}\|\BA\|}{\sqrt{nd} + \sqrt{m}}
\]
is needed to yield a nontrivial recovery of $\{\BO_i\}_{i=1}^n$ via the GPM with spectral initialization. In fact, if we assume $\BA$ is partially orthogonal, i.e., $\BA\BA^{\top} = \I_d$, then it is believed that the information-theoretic bound for the detection these signals should be $\sigma \approx \sqrt{n}/(\sqrt{nd} + \sqrt{m})$~\cite{BN11,JCL20,OVW18,PWBM18}. 
Our result differs from this near-optimal bound by a factor of $\sqrt{d}$,  condition number $\kappa$, and a log term. Numerical simulations in Section~\ref{s:numerics} indicate that the presence of $\sqrt{d}$ and condition number $\kappa$ in~\eqref{cond:sigma} do not seem necessary. Hence, we conjecture that the tightness of the SDP relaxation would hold if
\[
\sigma\lesssim \frac{\sqrt{n} \sigma_{\min}(\BA) }{\sqrt{nd} +\sqrt{m} + c\sqrt{\log n}}
\]
for some constant $c$ with high probability.

The next main theorem characterizes the estimation blockwise error bound of least squares estimator $\BS_i^{\infty}$ to the true planted orthogonal matrix $\BO_i$, and also the estimation error $\widehat{\BA}$ of the hidden point  cloud $\BA$ up to a global orthogonal transform.

\begin{theorem}\label{thm:error}
The limiting point $\BS^{\infty}$ from the~\eqref{def:gpm} satisfies
\begin{align*}
\min_{\BQ\in\Od(d)}\max_{1\leq i\leq n} \| \BS_i^{\infty} - \BQ \|_F & \lesssim \frac{\kappa^2}{\|\BA\|}  \cdot \frac{\sigma \sqrt{d}(\sqrt{nd} + \sqrt{m} + \sqrt{2\gamma n\log n})}{\sqrt{n}}, \\
\min_{\BQ\in\Od(d)}\left\| \frac{1}{n}\sum_{i=1}^n\BS_i^{\infty}  \BA_i  - \BQ\BA\right\|_F &\lesssim \frac{\kappa^2\sigma \sqrt{d}(\sqrt{nd} + \sqrt{m} + \sqrt{2\gamma n\log n})}{\sqrt{n}},
\end{align*}
with probability at least $1-O(n^{-\gamma+2})$
provided that
\[
\sigma \lesssim  \frac{\sigma_{\min}(\BA)}{\kappa^4}\cdot \frac{\sqrt{n}}{\sqrt{d}(\sqrt{nd} + \sqrt{m}+\sqrt{2\gamma n\log n})}.
\]
\end{theorem}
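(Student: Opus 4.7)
The plan is to prove the blockwise bound on $\BS^{\infty}$ first and then derive the point-cloud estimate as a straightforward consequence. For the first inequality, I would work from the fixed-point identity $\BS_i^{\infty} = \PP([\BC\BS^{\infty}]_i)$ that $\BS^{\infty}$ must satisfy as a limit of the~\eqref{def:gpm} iterates. Assuming $\BO_i = \I_d$ without loss of generality and letting $\BQ = \PP((\BS^{\infty})^{\top}\BO)$ denote the best orthogonal alignment, I would decompose
\[
\BA_i \BD^{\top}\BS^{\infty} \;=\; n\BA\BA^{\top}\BQ \;+\; \BE_i .
\]
Since $\PP$ is locally Lipschitz around the invertible matrix $n\BA\BA^{\top}\BQ$ with Lipschitz constant of order $1/(n\sigma_{\min}^2(\BA))$, this reduces the target blockwise estimate to a uniform bound of the form $\|\BE_i\|_{\op} \lesssim \sqrt{n}\,\|\BA\|\,\sigma\sqrt{d}(\sqrt{nd}+\sqrt{m}+\sqrt{n\log n})$.

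The perturbation $\BE_i$ splits into four pieces by expanding $\BA_i = \BA + \sigma\BW_i$ and $\BA_j^{\top}\BS_j^{\infty} = (\BA + \sigma\BW_j)^{\top}\BS_j^{\infty}$: a signal-signal alignment term $\BA\sum_j \BA^{\top}(\BS_j^{\infty}-\BQ)$, a noise-average term $\sigma\BA(\sum_j \BW_j^{\top})\BS^{\infty}$, a single-block noise term $\sigma\BW_i\sum_j \BA^{\top}\BS_j^{\infty}$, and a quadratic noise term $\sigma^2\BW_i\sum_j \BW_j^{\top}\BS_j^{\infty}$. The alignment piece is controlled by the $d_F$-convergence established along the proof of Theorem~\ref{thm:main}; the noise-average and quadratic pieces are handled by standard Gaussian concentration on the averaged noise $\tfrac{1}{n}\sum_j \BW_j$ and on the mixed Gaussian-Gaussian product.

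The delicate piece is the single-block term $\sigma\BW_i\sum_j \BA^{\top}\BS_j^{\infty}$, whose naive operator-norm bound $n\sigma\|\BA\|(\sqrt{d}+\sqrt{m})$ is off by roughly $\sqrt{n/d}$ from what is needed. To tighten it I would invoke a leave-one-out construction: define $\BS^{\infty,(i)}$ as the fixed point of~\eqref{def:gpm} run on data with the $i$-th block replaced by the noiseless $\BA$ (so $\BW_i$ is removed), rerunning the Theorem~\ref{thm:main} argument in parallel to guarantee existence, uniqueness, and a sufficiently small bound on $\|\BS^{\infty}-\BS^{\infty,(i)}\|_F$. Because $\BS^{\infty,(i)}$ is independent of $\BW_i$, the quantity $\BW_i\sum_j \BA^{\top}\BS_j^{\infty,(i)}$ concentrates at the much tighter scale $\sqrt{d}\,\|\sum_j \BA^{\top}\BS_j^{\infty,(i)}\|_F$ via a Hanson-Wright type inequality, while the residual $\BW_i\sum_j \BA^{\top}(\BS_j^{\infty}-\BS_j^{\infty,(i)})$ is absorbed using the leave-one-out stability bound. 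A union bound over $1\leq i\leq n$ contributes the $\sqrt{\log n}$ factor appearing in the statement.

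For the point-cloud bound I would write
\[
\frac{1}{n}\sum_i \BS_i^{\infty}\BA_i - \BQ\BA \;=\; \Big(\frac{1}{n}\sum_i (\BS_i^{\infty}-\BQ)\Big)\BA \;+\; \sigma\BQ\bar{\BW} \;+\; \frac{\sigma}{n}\sum_i (\BS_i^{\infty}-\BQ)\BW_i,
\]
with $\bar{\BW} = \tfrac{1}{n}\sum_i \BW_i$. The first summand is at most $\|\BA\|\max_i \|\BS_i^{\infty}-\BQ\|_F$, which is exactly $\|\BA\|$ times the blockwise bound and gives the dominant contribution. The second summand is a Gaussian of size $\sigma\sqrt{dm/n}$, and the cross term is lower order by combining the blockwise bound with $\max_i\|\BW_i\|\lesssim\sqrt{d}+\sqrt{m}$. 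The main obstacle throughout is the leave-one-out step: unlike in standard group synchronization with edgewise-independent noise, here $\BW_i$ enters nonlinearly through $\BA_i\BD^{\top}$, so the decoupling argument must track both the appearance of $\BW_i$ on the left of the product and its implicit presence inside every $\BS_j^{\infty}$ on the right, ensuring that the stability estimate $\|\BS^{\infty}-\BS^{\infty,(i)}\|_F$ is strong enough to render the substitution essentially lossless at the target accuracy.
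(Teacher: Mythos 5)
Your high-level plan---fixed-point identity, Lipschitz continuity of $\PP$, reduce to a blockwise bound on a noise matrix $\BDelta_i^{\top}\BS^{\infty}$, then the point-cloud bound as a corollary---matches the paper's proof. Your decomposition of the noise into four pieces (signal alignment, noise average, single-block, quadratic) and your point-cloud expansion are also structurally equivalent. But you have misidentified which term is the hard one, and this misidentification is not a matter of taste: your proposed treatment of the term that is actually hard would not go through, while your elaborate machinery for the term you flag as hard is unnecessary.

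Concretely: the single-block term $\sigma\BW_i\BA^{\top}\BZ^{\top}\BS^{\infty}$ is not delicate and does not need a leave-one-out construction. Your ``naive'' bound $n\sigma\|\BA\|(\sqrt{d}+\sqrt{m})$ uses $\|\BW_i\|\lesssim\sqrt{d}+\sqrt{m}$, but $\BA^{\top}\in\RR^{m\times d}$ has rank $d$, so $\BW_i\BA^{\top}=\BW_i\BV_0\BSigma_0\BU_0^{\top}$ with $\BW_i\BV_0$ a $d\times d$ standard Gaussian, giving $\|\BW_i\BA^{\top}\|\lesssim(\sqrt{d}+\sqrt{\log n})\|\BA\|$ (this is \eqref{eq:gauss3d} in the paper). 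Moreover $\|\BZ^{\top}\BS^{\infty}\|\leq n$ deterministically, so there is no dependence between $\BW_i$ and the right factor to worry about. Plugging these in already gives the required order $n\sqrt{d}(\sqrt{d}+\sqrt{\log n})\|\BA\|\lesssim\sqrt{nd}(\sqrt{nd}+\sqrt{m}+\sqrt{n\log n})\|\BA\|$. Conversely, you dismiss the quadratic term $\sigma^2\BW_i\BW^{\top}\BS^{\infty}$ as ``handled by standard Gaussian concentration on the mixed Gaussian-Gaussian product,'' but this is exactly where the dependence of $\BS^{\infty}$ on $\BW_i$ bites: the naive bound $\|\BW_i\|\|\BW\|\|\BS^{\infty}\|_F$ misses the target by a factor of order $\sqrt{m}$ (compare the discussion around \eqref{eq:Nxi} in the paper), and one cannot decouple $\BW_i$ from $\BW^{\top}\BS^{\infty}$ because $\BS^{\infty}$ is a nonlinear function of all the noise including $\BW_i$. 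This term is the reason the set $\mathcal{N}_{\xi,\infty}$ is introduced at all, and the leave-one-out argument in Lemmas~\ref{lem:key1}--\ref{lem:convt} is run precisely to show $\|\BW_i\BW^{\top}\BS^t\|_F$ stays controlled along the whole GPM trajectory, hence at the limit $\BS^{\infty}\in\mathcal{N}_{\xi,\infty}$.

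So the fix is essentially to swap the roles you assign to the two noise terms: use the rank-$d$ structure of $\BA$ to bound the single-block term directly, and cite the already-established fact from the convergence proof of Theorem~\ref{thm:main} (namely $\BS^{\infty}\in\mathcal{N}_{\xi,\infty}$) to control the quadratic term, rather than invoking fresh concentration. Once those facts are in hand, your remaining reductions are correct, and your point-cloud decomposition is a slightly refined (but equivalent in order) variant of the paper's split $\frac{1}{n}(\BS^{\infty})^{\top}\BD-\BQ^{\top}\BA=\frac{1}{n}(\BZ^{\top}\BS^{\infty}-n\BQ)^{\top}\BA+\frac{\sigma}{n}(\BS^{\infty})^{\top}\BW$.
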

Theorem~\ref{thm:error} indicates that the estimation error of each orthogonal matrix is approximately of the same order. 

We conclude this section with a few possible future directions. One major problem is the possible sub-optimality of~\eqref{cond:sigma} which differs from the information-theoretic limit by a factor of $\sqrt{d}$. Similar to many other low-rank matrix recovery problems~\cite{L20c,MWCC20} via nonconvex approaches, the dependence on the rank of the planted signal (Here the rank equals $d$) is typically suboptimal. To overcome this issue, one may need to come up with a sharper convergence analysis. In our case, it would be the characterization of the contraction mapping induced by the generalized power method under operator norm instead of Frobenius norm~\eqref{def:df}. In fact, in our work, the major technical bottleneck arises in Lemma~\ref{lem:Pncon}: it is unclear to how to establish the Lipschitz continuity for the operator $\PP_n(\cdot)$ under operator norm $d_2(\BX,\BY): = \min_{\BQ\in\Od(d)}\|\BX-\BY\BQ\|$ instead of $d_F(\cdot,\cdot)$ in~\eqref{def:df}.
Another interesting problem would be how to find the exact critical threshold for the tightness of the SDP relaxation as our current analysis relies on the GPM to understand the output of the convex relaxation. However, it is very likely that there is still a performance gap (despite this gap could be small, as shown in Section~\ref{s:numerics}) between the GPM and convex relaxation, which cannot be fully characterized by our current methods. The similar issues also can be found in phase synchronization~\cite{BBS17,ZB18} and orthogonal group synchronization~\cite{L20c} since the ground truth rotation is not usually the global optimal solution to the SDP relaxation. We leave these problems to the future works.

\section{Numerics}\label{s:numerics}

This section is devoted to a few numerical experiments which complement our theoretical results. 

\subsection{Performance comparison between~\eqref{def:sdp} relaxation and~\eqref{def:gpm} }

Our theory on the tightness of the~\eqref{def:sdp}  relaxation is established via the global convergence of the generalized power method~\eqref{def:gpm}. Therefore, it is natural to ask if there is a performance gap between these two different approaches in obtaining the global optimal solution to~\eqref{def:od}. To answer this question, we set the underlying data matrix $\BA$ as $\BA\BA^{\top} = \I_d$ and $\BO_i=\I_d$ in~\eqref{def:model}, and rescale the noise strength $\sigma$ as
\[
\sigma = \frac{\eta\sqrt{n}}{\sqrt{nd}+\sqrt{m}}, \quad \eta>0.
\]
The noise parameter $\eta$ ranges from 0 to 1.2, and the dimension factor $(m,n,d)$ is fixed. For the tightness of~\eqref{def:sdp} relaxation, we use the SDPNAL+ solver~\cite{YST15,STYZ20} to obtain the~\eqref{def:sdp} solution exactly and then check if the resulting SDP solution is exactly rank $d$, i.e., the SDP solution produces the global solution to~\eqref{def:od}. For the convergence of the~\eqref{def:gpm}, we simply initialize the algorithm with spectral method~\eqref{def:init}, followed by the generalized power iteration. Then once the iterates stabilize or reach the maximum number of iterations, we certify the global optimality of the~\eqref{def:gpm} solution via the duality theory in convex optimization, more precisely, in Theorem~\ref{thm:cvx}. 

\begin{figure}[h!]
\centering
\begin{minipage}{0.48\textwidth}
\includegraphics[width=80mm]{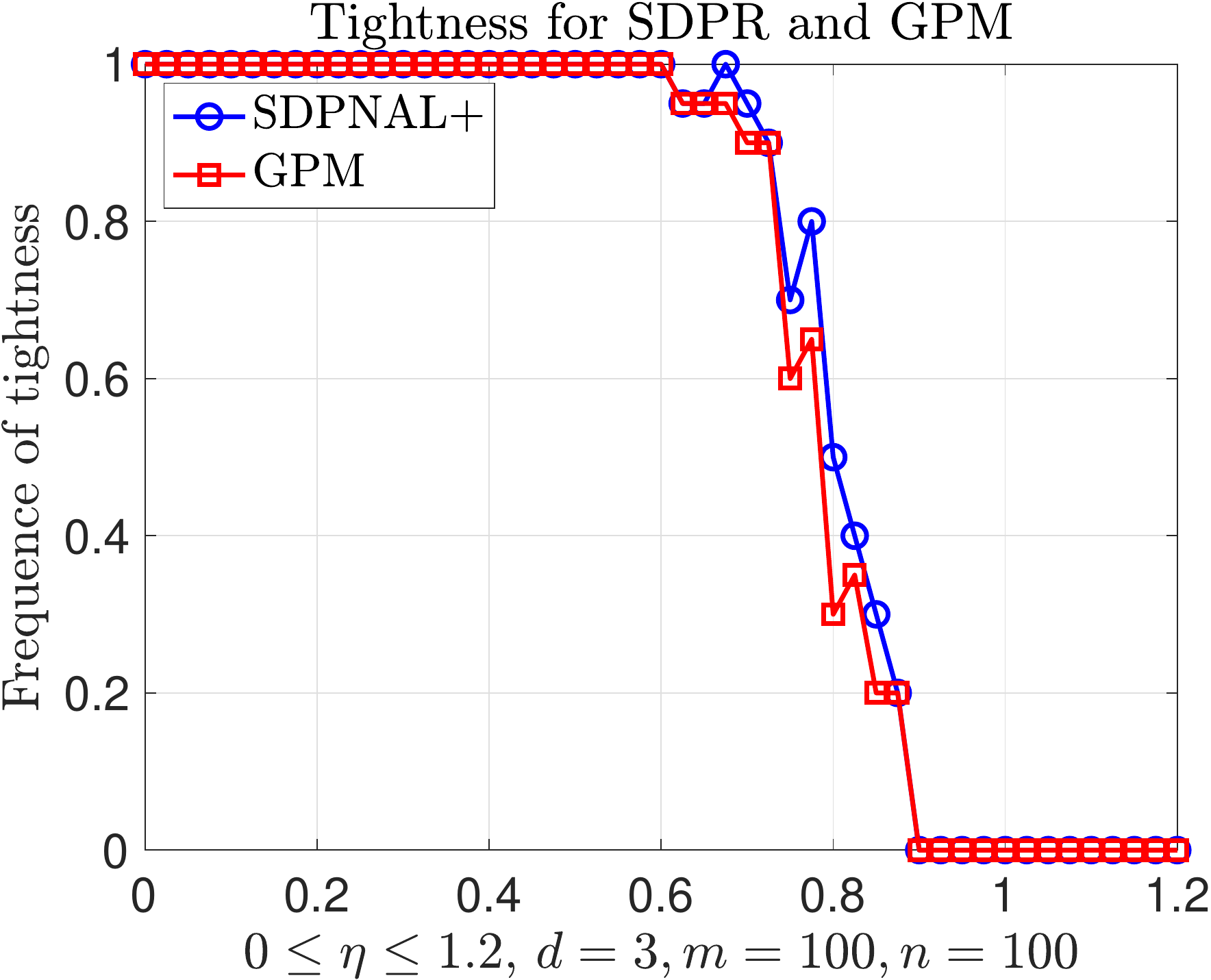}
\end{minipage}
\hfill
\begin{minipage}{0.48\textwidth}
\includegraphics[width=80mm]{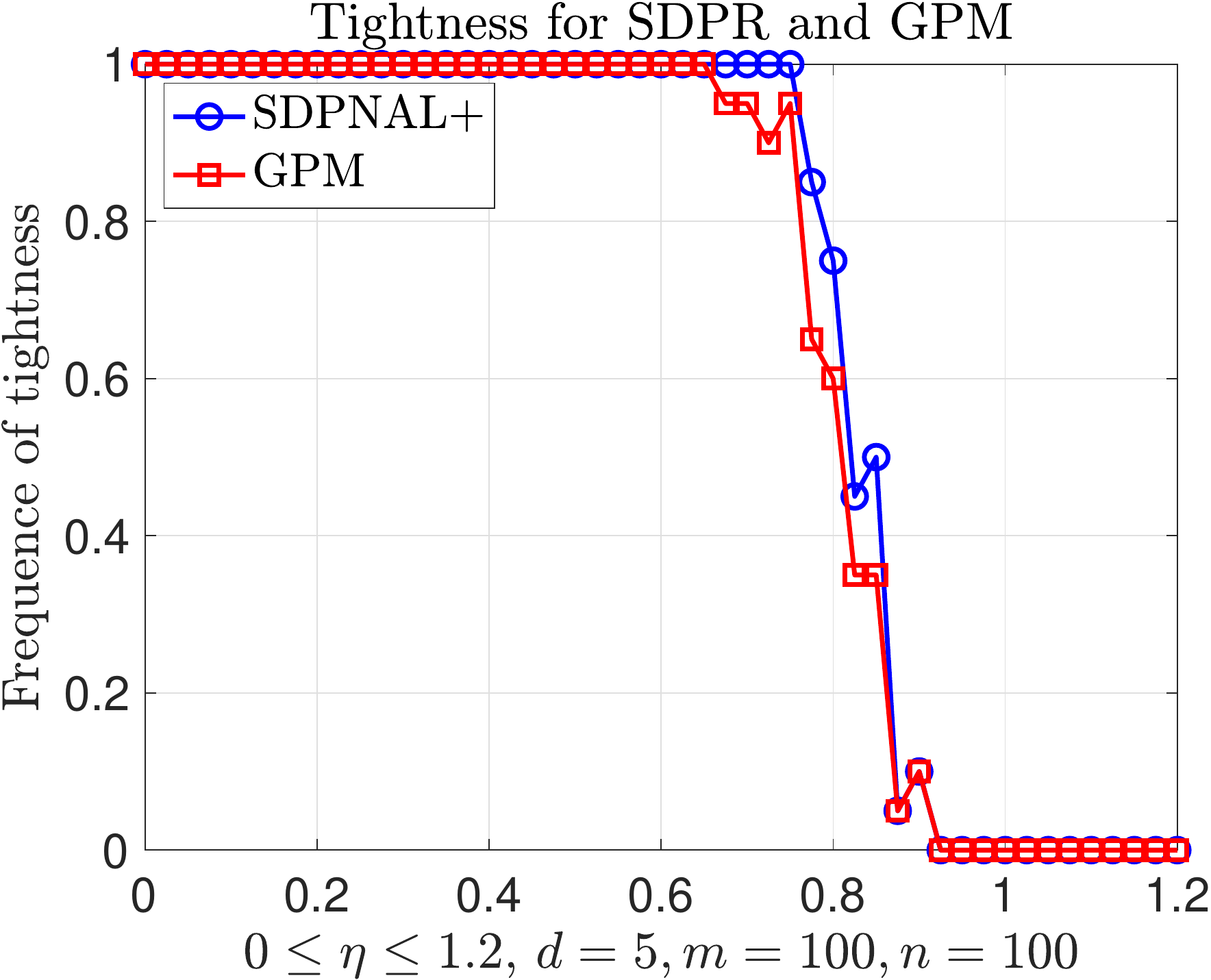}
\end{minipage}
\caption{Performance comparison between the SDP relaxation and the GPM: the parameters $(d,n,m)$ are fixed; for each $\eta$, we run 20 experiments and compute the frequency of the tightness of the SDPR (blue curve) and the global convergence of the GPM (red curve). }
\label{fig:1}
\end{figure}

Figure~\ref{fig:1} gives the comparison between the SDP relaxation (SDPR) and the GPM: both the SDPR and GPM successfully produce the tight SDP solution for $\eta<0.6$, and then the probability of tightness rapidly decays to 0 as $\eta$ increases from 0.6 to 0.9. In conclusion, we can see that the SDPR outperforms the GPM marginally but their overall performances are highly similar.

\subsection{Global convergence of the~\eqref{def:gpm}}

Our next experiment aims to verify if our bound~\eqref{cond:sigma} on $\sigma$ in Theorem~\ref{thm:main} has the optimal scaling, i.e., $\sigma\lesssim \sqrt{n}/(\sqrt{d} (\sqrt{nd}+ \sqrt{m} + \sqrt{2\gamma n \log n}))$ with probability at least $1-O(n^{-\gamma+1}).$ We set different $(d,n,m)$ in the experiment, and the noise level $\eta$ varies from 0 to 1.2. For each set of $(d,n,m,\eta)$, we run 20 experiments and calculate the proportion of the global convergence of the~\eqref{def:gpm}. 

\begin{figure}[h!]
\centering
\begin{minipage}{0.48\textwidth}
\includegraphics[width=80mm]{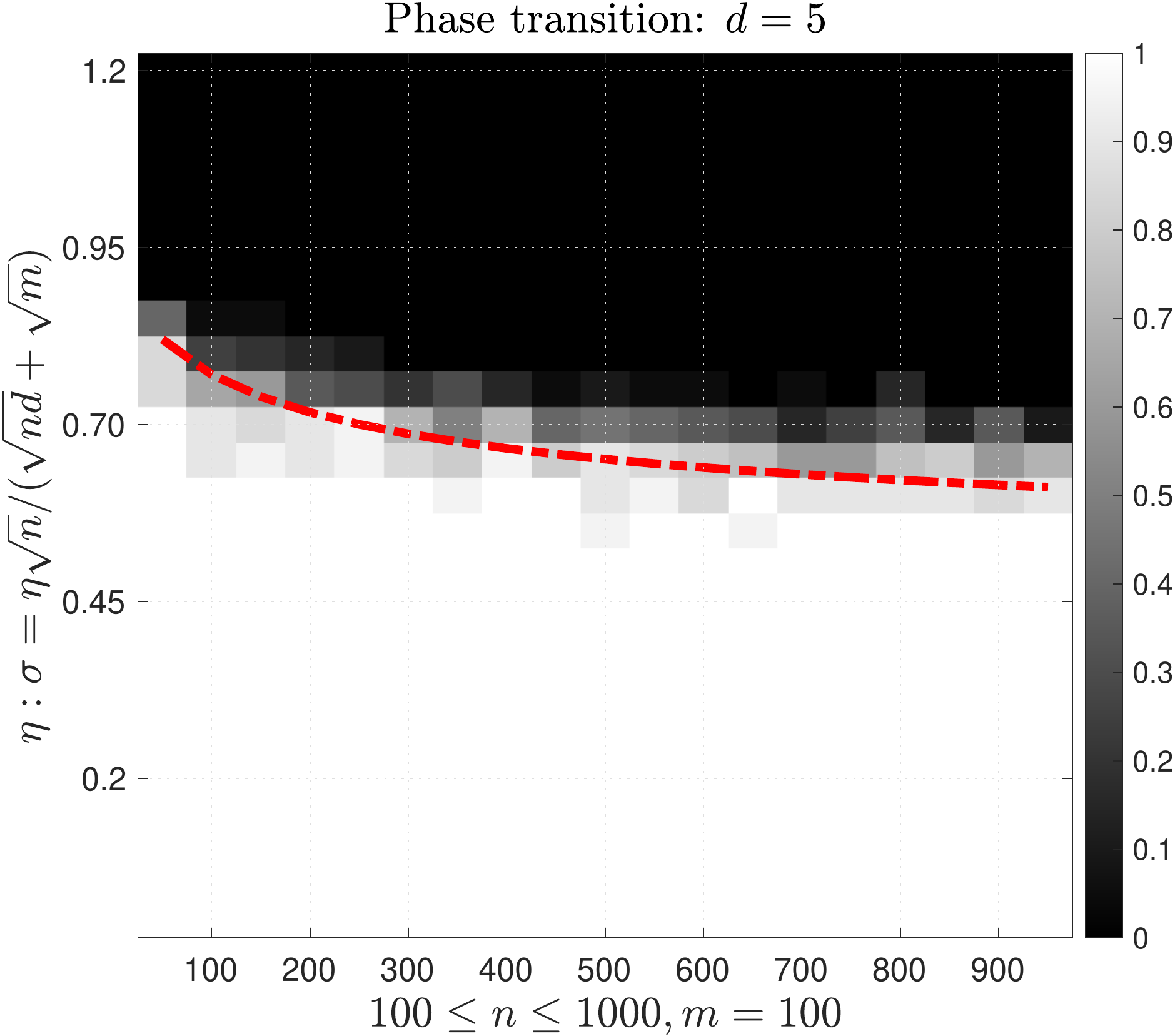}
\end{minipage}
\hfill
\begin{minipage}{0.48\textwidth}
\includegraphics[width=80mm]{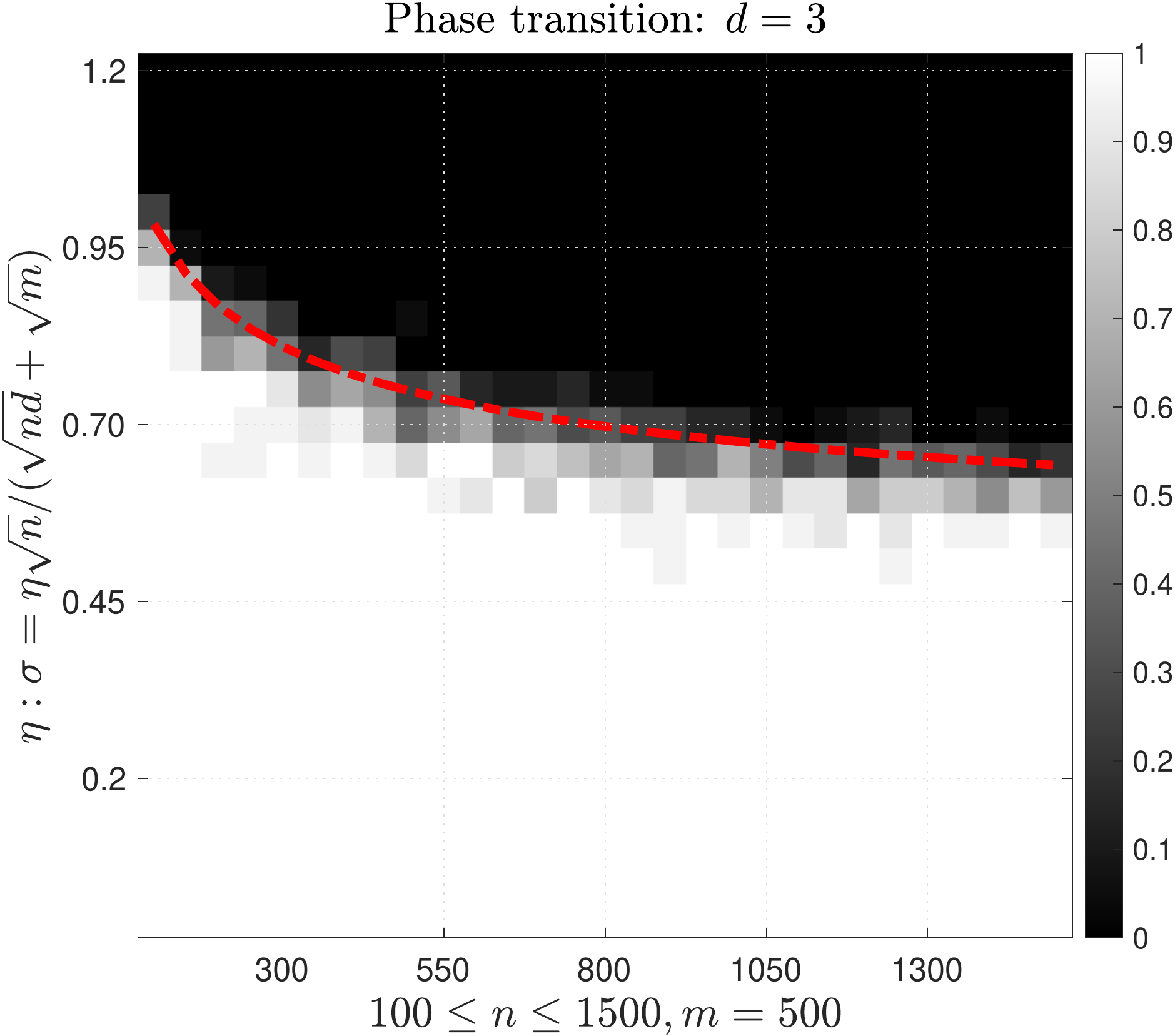}
\end{minipage}
\vfill
\vskip0.2cm
\begin{minipage}{0.48\textwidth}
\includegraphics[width=80mm]{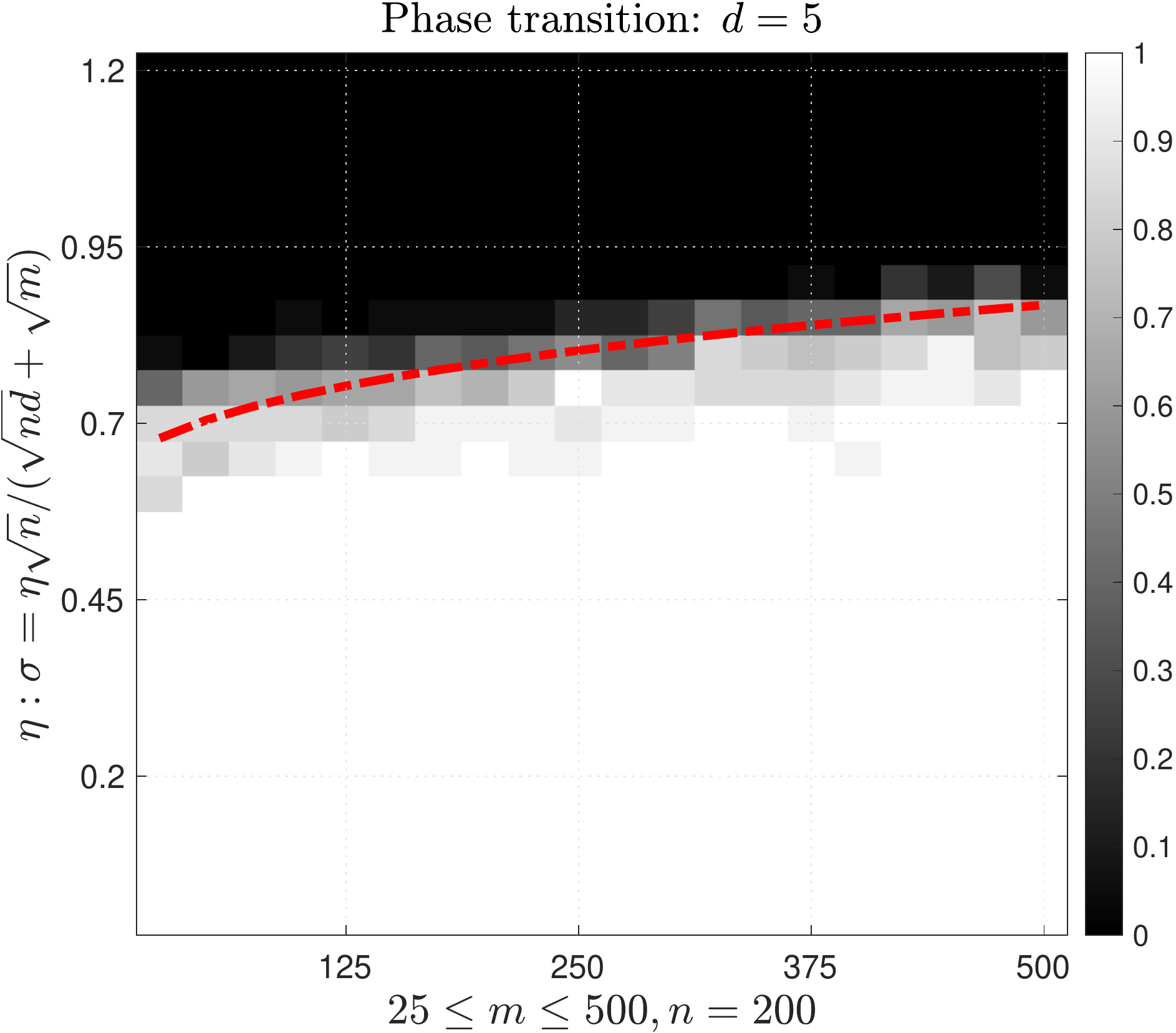}
\end{minipage}
\hfill
\begin{minipage}{0.48\textwidth}
\includegraphics[width=80mm]{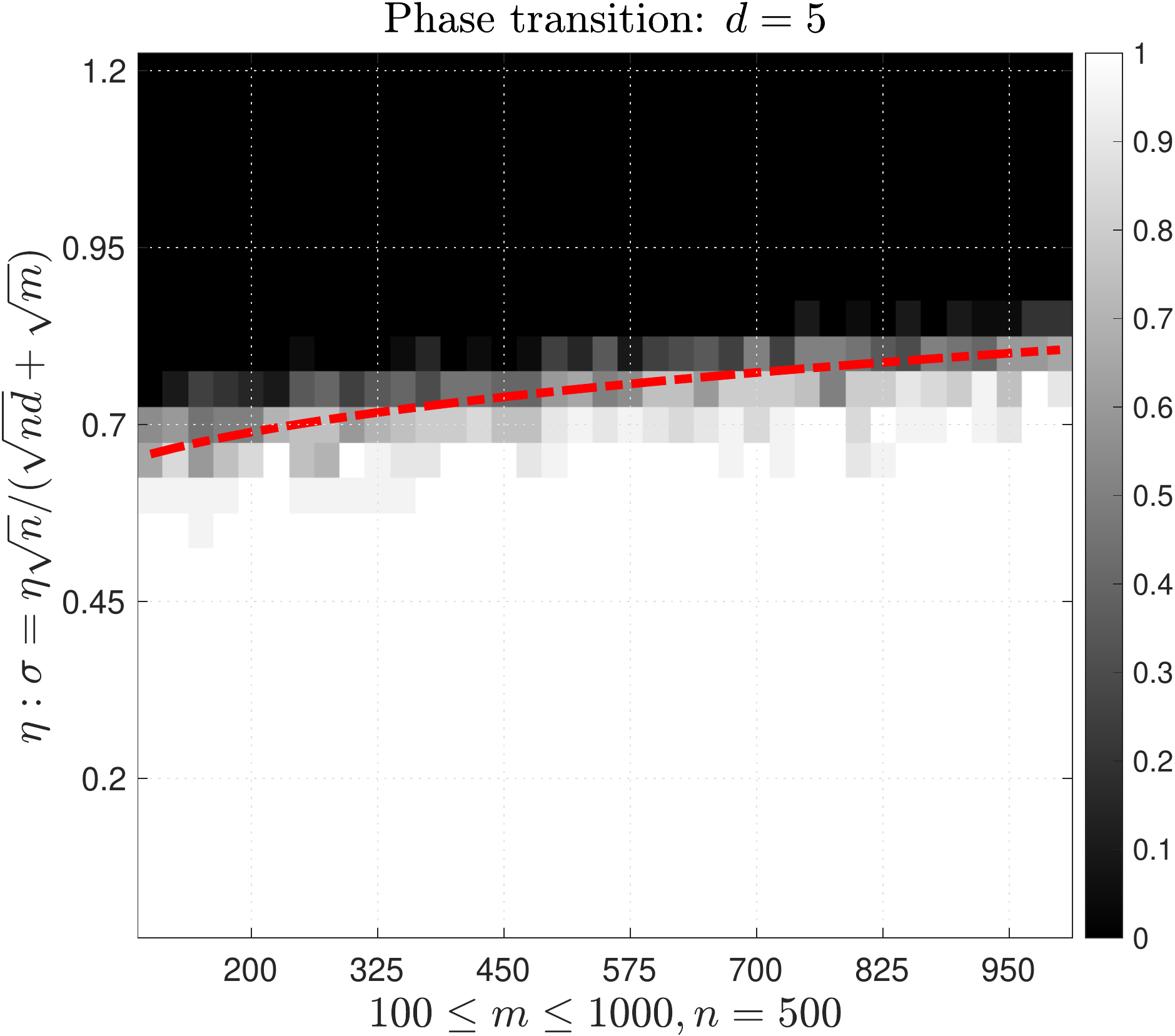}
\end{minipage}
\caption{Phase transition for the tightness/global convergence of the GPM. Black region: the tightness of the~\eqref{def:gpm} fails; white region: the tightness holds. Top: the parameters $(m,d)$ are fixed and $n$ varies; Bottom: the parameters $(n,d)$ are fixed and $m$ varies. The phase transition occurs approximately around the red dashed curve $\sigma = 1.89\sqrt{n}(\sqrt{nd}+\sqrt{m}+2\sqrt{n\log n})$. }
\label{fig:2}
\end{figure}

Figure~\ref{fig:2} presents the simulation experiments with various choices of $(d,n,m)$. 
We can see that the phase transition boundary between the black region (failure) and the white one (success) is quite sharp. In particular, we find the phase transition boundary is approximately 
\[
\sigma \approx \frac{1.89\sqrt{n}}{\sqrt{nd}+\sqrt{m} + 2\sqrt{n\log n}} 
\]
instead of $\sigma\approx \eta^*\sqrt{n}/(\sqrt{nd}+\sqrt{m})$ for some constant $\eta^*$. This means our theoretical bound~\eqref{cond:sigma} differs from the empirical phase transition boundary by a dimension factor $\sqrt{d}$ and a constant factor.

\subsection{Dependence on the condition number}

Note that our bound on $\sigma$ depends on the condition number $\kappa$ of $\BA$ in Theorem~\ref{thm:main}, i.e., for larger $\kappa$, we allow lower noise level for the tightness of~\eqref{def:sdp} and the convergence of~\eqref{def:gpm}. In order to check whether this dependence on $\kappa$ is necessary, we perform the following simulation: we simulate $\BA$ whose condition number ranges $\kappa$ from 1 to 10. In particular, we always let $\sigma_{\min}(\BA)=1$. Then we apply the generalized power method to this synthetic dataset and see if this algorithm  with spectral initialization would have the global convergence. For each pair of $(\kappa, \eta)$ with $1\leq \kappa\leq 10$ and $0\leq \eta\leq 2$, we run 20 experiments and calculate how many instances have the global convergence of the~\eqref{def:gpm}.
\begin{figure}[h!] 
\centering
\includegraphics[width=80mm]{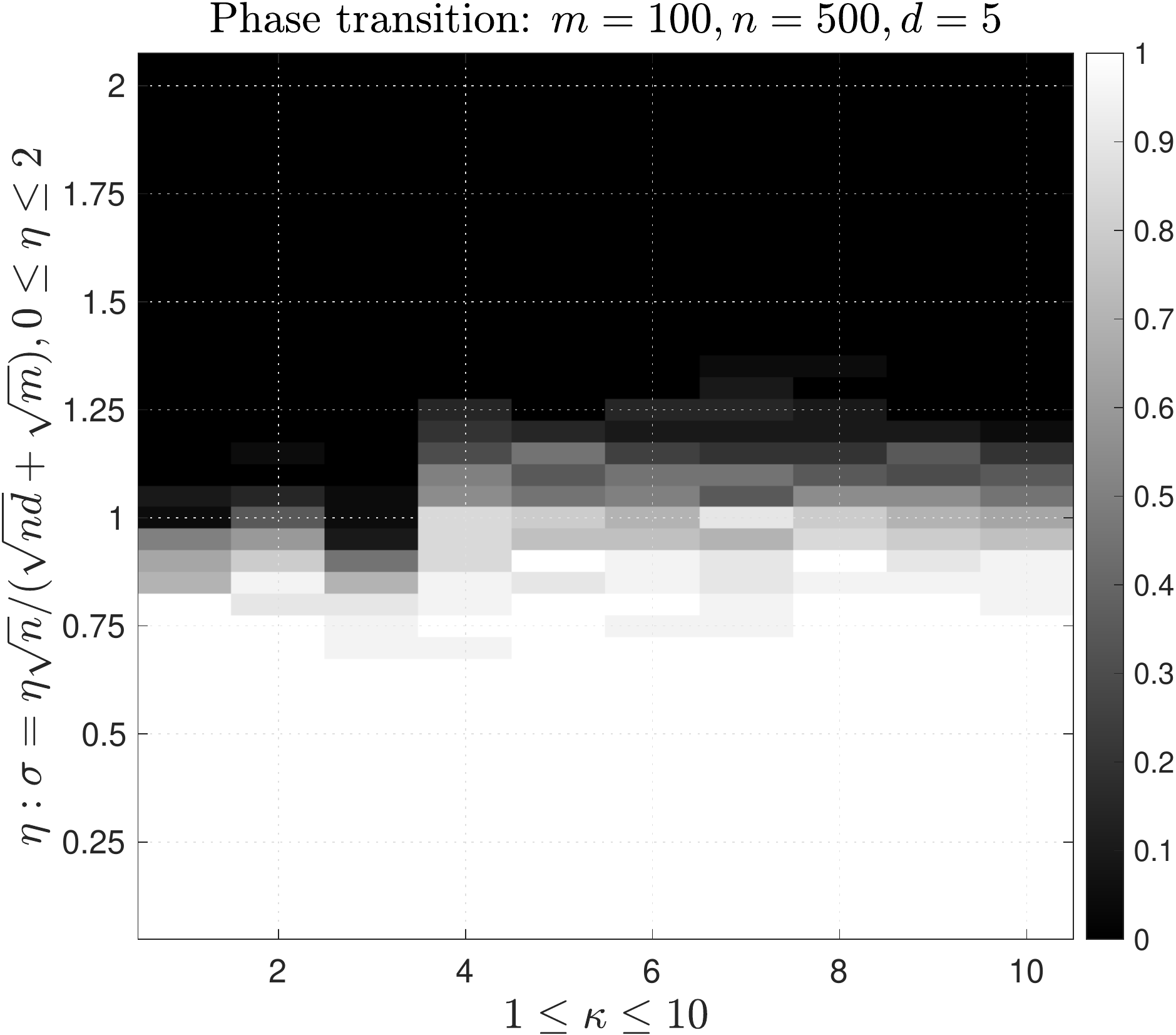}
\caption{Dependence of $\sigma$ on the condition number $\kappa$. }
\label{fig:3}
\end{figure}

The result is given in Figure~\ref{fig:3}: the boundary between the success (tightness holds) and failure stays almost constant (i.e., around $\eta\approx 1$) with respect to the condition number $\kappa.$ This indicates almost no dependence of $\sigma$ (i.e., equivalently $\eta$) on $\kappa$, and also that the appearance of $\kappa$ in~\eqref{cond:sigma} is likely an artifact of the proof. We expect that $\sigma$ may only depend on the smallest singular value $\sigma_{\min}(\BA)$ of $\BA$, instead of on $\kappa.$

\section{Proofs}\label{s:proof}

The convergence proof of the GPM for the generalized orthogonal Procrustes problem is inspired by the~\emph{leave-one-out} techniques used in group synchronization problem~\cite{L20c,ZB18}. Due to the inherent setting difference in these two problems, the technical parts differ. The major difference is the construction of the basin of attraction and also different types of additive noise. In our proof, it is more convenient to write the data matrix $\BC = \BD\BD^{\top}$ in~\eqref{def:fo} into the signal-plus-noise form:
\begin{equation}\label{def:SN}
\BC = \underbrace{\BZ\BA\BA^{\top}\BZ^{\top}}_{\text{signal}} + \underbrace{\sigma\BW\BA^{\top}\BZ^{\top} + \sigma \BZ\BA\BW^{\top} + \sigma^2\BW\BW^{\top}}_{\text{noise}}
\end{equation}
where $\BD = \BZ\BA +\sigma\BW\in\RR^{nd\times m}$ is given in~\eqref{def:Z}.

Similar to~\cite{L20c,ZB18}, we decompose the generalized power method into a composition of two operations: 
\begin{equation}\label{def:T}
\BS^{k+1} = {\cal T}(\BS^t)
\end{equation}
where ${\cal T} := \PP_n\circ {\cal L}$ with $\PP_n$ defined in~\eqref{def:Pn} and 
\[
{\cal L}:\RR^{nd\times d}\rightarrow\RR^{nd\times d}, \qquad {\cal L} \BS := \frac{\BC\BS}{n\|\BA\|^2}.
\]

\subsection{An overview of the convergence analysis: the basin of attraction and leave-one-out technique}
To show the global convergence of the GPM, it suffices to understand ${\cal T}$. In particular, it would be ideal to have ${\cal T}$ as a contraction mapping and then linear convergence follows directly. However, this is unclear yet except if the noise is sufficiently small as shown in~\cite{L21a}. Instead, we will show that if $\BS^0$ is initialized in a basin of attraction, i.e., sufficiently close to the ground truth, $\{\BS^t\}_{t\geq 0}$ is a Cauchy sequence in the basin of attraction under the metric $d_F(\cdot,\cdot)$ in~\eqref{def:df}. Then we argue that the limiting point $\BS^{\infty}$ and $\BS^{\infty}(\BS^{\infty})^{\top}$ must be the global maximizer to~\eqref{def:od} and~\eqref{def:sdp} respectively.

\vskip0.25cm 

Throughout our analysis, we will use the noise matrix $\BDelta$: 
\begin{equation}\label{def:delta}
\BDelta := \BW\BA^{\top}\BZ^{\top} +  \BZ\BA\BW^{\top} + \sigma\BW\BW^{\top}\in\RR^{nd\times nd}
\end{equation}
and then $\BC = \BZ\BA\BA^{\top}\BZ^{\top} + \sigma\BDelta.$ In addition, we assume that $\sigma$ satisfies $\sigma\|\BW\| \leq \sqrt{n}\|\BA\|$ which holds automatically under~\eqref{cond:sigma} and then
\begin{equation}\label{eq:normdelta}
\|\BDelta\| \leq 2\sqrt{n}\|\BA\|\|\BW\| + \sigma\|\BW\|^2 \leq 3\sqrt{n}\|\BA\|\|\BW\|.
\end{equation}

The basin of attraction is defined by 
\begin{align*}
\mathcal{N}_{\eps} & : =  \{\BS\in\Od(d)^{\otimes n}: d_F(\BS,\BZ)\leq \eps\sqrt{nd}\}, \\
\mathcal{N}_{\xi,\infty} & = \left\{ \BS\in\Od(d)^{\otimes n}: \max_{1\leq i\leq n}\|\BW_i\BW^{\top}\BS\|_F \leq \xi \sqrt{d}(\sqrt{nd} + \sqrt{m}+ \sqrt{2\gamma n\log n})^2 \right\},
\end{align*}
where we set $\eps,\xi$, and $\gamma$ as 
\begin{equation}\label{eq:par}
\eps=\frac{1}{32\kappa^2\sqrt{d}}, \quad \xi = 6, \quad \gamma\geq 2.
\end{equation}

The first set $\mathcal{N}_{\eps}$ consists of all matrices $\BS$ in $\Od(d)^{\otimes n}$ which are $\eps$-close to the ground truth $\BZ.$ For any $\BS\in\mathcal{N}_{\eps}$, we actually have
\begin{equation}\label{eq:neps_sigma}
d_F^2(\BS,\BZ) = 2nd - 2\|\BZ^{\top}\BS\|_* = 2\sum_{k=1}^d (n - \sigma_{k}(\BZ^{\top}\BS)) \leq \eps^2nd \Longrightarrow \sigma_{\min}(\BZ^{\top}\BS) \geq n(1-\eps^2d/2)
\end{equation}
since $\sigma_k(\BZ^{\top}\BS)\leq n$ for all $1\leq k\leq d.$

The motivation of introducing $\mathcal{N}_\eps$ and $\mathcal{N}_{\xi,\infty}$ is that if $\BS\in\mathcal{N}_{\eps}\cap\mathcal{N}_{\xi,\infty}$, then
\begin{equation}\label{eq:inco}
\max_{1\leq i\leq n}\sigma\|\BDelta_i^{\top}\BS\|_F \leq n
\end{equation}
provided that $\sigma$ satisfies~\eqref{cond:sigma} where $\BDelta_i$ is the $i$th block column of $\BDelta$
\begin{equation}\label{eq:Deltai}
\BDelta_i = \BW\BA^{\top} + \BZ\BA\BW_i^{\top} + \sigma\BW\BW_i^{\top}\in\RR^{nd\times d}.
\end{equation}
The condition~\eqref{eq:inco} is crucial since it ensures for any two elements in $\mathcal{N}_{\eps}\cap \mathcal{N}_{\xi,\infty}$, their distance shrinks after applying ${\cal T}$ to them, as shown in Lemma~\ref{lem:deltaS}.

We will later show that the whole sequence from the GPM will stay in $\mathcal{N}_{\eps}\cap\mathcal{N}_{\xi,\infty}.$ The difficulty comes from keeping the iterates $\BS^t$ in $\mathcal{N}_{\xi,\infty}$, i.e., $\|\BW_i\BW^{\top}\BS^t\|_F \lesssim \sqrt{d}(\sqrt{nd}+\sqrt{m})^2$ modulo the log factor, due to the~\emph{statistical dependence} of $\BW_i$ on $\BW^{\top}\BS^t$. Let's take a quick look at the upper bound of $\|\BW_i\BW^{\top}\BS^t\|_F$, by ignoring the possible logarithmic terms for the time being. Note that for the Gaussian random matrices $\BW_i$ and $\BW\in\RR^{nd\times m}$, we have 
\[
\|\BW_i\|\lesssim \sqrt{m}+\sqrt{d}, \qquad \|\BW\|\lesssim \sqrt{nd}+\sqrt{m}
\]
and
\[
\|\BW^{\top}\BS^t\|\leq \|\BW\|\|\BS^t\| \lesssim \sqrt{n}(\sqrt{nd}+\sqrt{m}), \quad \BS^t\in\Od(d)^{\otimes n}.
\]
Suppose $\BY\in\RR^{m\times d}$ is a matrix independent of $\BW_i$, then we would expect
\[
\|\BW_i\BY\|_F \leq \sqrt{d}\|\BW_i\BY\| \lesssim d\|\BY\| 
\] 
which is implied by classical results on Gaussian random matrix, see Lemma~\ref{lem:gauss} and Corollary~\ref{cor:gauss}. If $\BW^{\top}\BS^t$ was independent of $\BW_i$ (which is of course not), then we would have proven $\|\BW_i\BW^{\top}\BS^t\|_F\lesssim \sqrt{d}(\sqrt{nd}+\sqrt{m})^2.$ However, due to the statistical dependence of $\BW_i$ on $\BW^{\top}\BS^t$, we cannot apply the concentration inequality of Gaussian matrix directly to get a tight bound. On the other hand, using a naive bound on $\|\BW_i\BW^{\top}\BS^t\|_F$ gives
\begin{equation}\label{eq:Nxi}
\|\BW_i\BW^{\top}\BS^t\|_F\leq \|\BW_i\|\|\BW\|\|\BS^t\|_F \lesssim \sqrt{nd}(\sqrt{d} + \sqrt{m}) (\sqrt{nd} + \sqrt{m})
\end{equation}
where $\|\BW_i\|$ and $\|\BW\|$ are given by~\eqref{eq:gauss3a} and~\eqref{eq:gauss3b}.
Comparing this bound with $\mathcal{N}_{\xi,\infty}$, we can see that it differs by a factor of $\sqrt{m}$.

\vskip0.25cm

Therefore, to control $\|\BW_i\BW^{\top}\BS^t\|_F$ tightly, the main challenge is to show that $\BW_i$ and $\BW^{\top}\BS^t$ are ``nearly" independent. This is made possibly by the~\emph{leave-one-out} technique:
the idea is to replace $\BW^{\top}\BS^t$ with $(\BW^{(i)})^{\top}\BS^{i,t}$ where $\BS^{i,t}$ is the GPM sequence from the data matrix
\begin{equation}\label{def:CDi}
\BC^{(i)} = \BD^{(i)} (\BD^{(i)})^{\top}, \quad \BD^{(i)} = \BZ\BA + \sigma\BW^{(i)}
\end{equation}
where the $j$th block of $\BW^{(i)}$ is
\begin{equation}\label{def:Wi}
\BW^{(i)}_j = 
\begin{cases}
\BW_j, & j\neq i, \\
0, & j =i.
\end{cases}
\end{equation}
The noise  $\BW$ and $\BW^{(i)}$ only differ by its $i$th block $\BW_i$.
The strategy to approximate $\|\BW_i\BW^{\top}\BS^t\|_F$ via decomposing $\BW^{\top}\BS^t$ into
\begin{equation*}
\|\BW_i\BW^{\top}\BS^t\|_F  \leq \|\BW_i(\BW^{(i)})^{\top}\BS^{i,t}\|_F + \|\BW_i\left(\BW^{\top}\BS^t - (\BW^{(i)})^{\top}\BS^{i,t}\right)\|_F.
\end{equation*}
Then since $\BW_i$ and $(\BW^{(i)})^{\top}\BS^{i,t}\in\RR^{m\times d}$ are independent, conditioned on $\|(\BW^{(i)})^{\top}\BS^{i,t}\|\leq \sqrt{n}\|\BW^{(i)}\|$, Corollary~\ref{cor:gauss} implies that
\[
\|\BW_i(\BW^{(i)})^{\top}\BS^{i,t}\|_F\lesssim \sqrt{n}d\|\BW^{(i)}\| \lesssim \sqrt{n}d(\sqrt{nd}+\sqrt{m}).
\]

For the second term in the estimation of $\|\BW_i\BW^{\top}\BS^t\|_F $, we try to bound it by
\begin{align*}
\|\BW_i\left(\BW^{\top}\BS^t - (\BW^{(i)})^{\top}\BS^{i,t}\right)\|_F & \leq \|\BW_i\| \| \BW^{\top}\BS^t - (\BW^{(i)})^{\top}\BS^{i,t} \|_F \\
& \leq \|\BW_i\| \left(  \| \BW^{\top}(\BS^t - \BS^{i,t})) \|_F + \|( \BW-\BW^{(i)})^{\top}\BS^{i,t} \|_F \right).
\end{align*}
We can see that if $\BS^t$ and $\BS^{i,t}$ are close, then the second term is also small.

Our proof will heavily rely on the following useful facts about Gaussian random matrix, the proof of which can be found in~\cite[Theorem 2.26]{W19},~\cite[Theorem 7.3.1]{V18}, and~\cite[Chapter 1]{LT91}.
\begin{lemma}\label{lem:gauss}
For an $n\times m$ $\BX$ with independent $\mathcal{N}(0,1)$ entries, 
then
\begin{equation}
\E \|\BX\|\leq \sqrt{n}+\sqrt{m}
\end{equation}
and
\begin{equation}
\Pr( \left|\|\BX\| - \E\|\BX\| \right| \geq t ) \leq 2\exp(-t^2/2).
\end{equation}
\end{lemma}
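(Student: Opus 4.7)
The plan is to prove the two statements of Lemma~\ref{lem:gauss} separately, using two classical tools from Gaussian process theory: the Slepian/Gordon comparison inequality for the mean bound, and the Gaussian concentration of measure for Lipschitz functions for the tail bound.

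For the expectation bound $\E\|\BX\|\leq\sqrt{n}+\sqrt{m}$, the strategy is to express the operator norm as the supremum of a Gaussian process and dominate it by a simpler one. Writing $\|\BX\|=\sup_{u\in S^{n-1},\,v\in S^{m-1}} u^{\top}\BX v$, I define the two centered Gaussian processes indexed by $(u,v)\in S^{n-1}\times S^{m-1}$:
\[
X_{u,v} := u^{\top}\BX v, \qquad Y_{u,v} := \langle g,u\rangle + \langle h,v\rangle,
\]
where $g\sim\mathcal{N}(0,\I_n)$ and $h\sim\mathcal{N}(0,\I_m)$ are independent. A direct computation gives
\[
\E(X_{u,v}-X_{u',v'})^2 = 2-2\langle u,u'\rangle\langle v,v'\rangle, \qquad \E(Y_{u,v}-Y_{u',v'})^2 = 4-2\langle u,u'\rangle-2\langle v,v'\rangle,
\]
and the elementary inequality $(1-\langle u,u'\rangle)(1-\langle v,v'\rangle)\geq 0$ shows the $X$-increments are dominated by the $Y$-increments. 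Slepian's inequality then yields $\E\sup_{u,v} X_{u,v}\leq \E\sup_{u,v} Y_{u,v}=\E\|g\|+\E\|h\|$, and Jensen's inequality ($\E\|g\|\leq\sqrt{\E\|g\|^2}=\sqrt{n}$, similarly for $h$) finishes the first bound.

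For the tail bound, I would invoke the Borell--Tsirelson--Ibragimov--Sudakov Gaussian concentration inequality. The key observation is that, viewing $\BX$ as the vector $\mvec(\BX)\in\RR^{nm}$ with i.i.d.\ $\mathcal{N}(0,1)$ entries, the map $\BX\mapsto\|\BX\|$ is $1$-Lipschitz with respect to the Euclidean norm, because
\[
\bigl|\|\BX\|-\|\BY\|\bigr|\leq \|\BX-\BY\|\leq \|\BX-\BY\|_F = \|\mvec(\BX-\BY)\|_2.
\]
Applying the standard Gaussian concentration inequality for $1$-Lipschitz functions of a standard Gaussian vector then gives $\Pr(|\|\BX\|-\E\|\BX\||\geq t)\leq 2\exp(-t^2/2)$ directly.

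The main conceptual obstacle is really just the Slepian comparison step, specifically the verification of the increment inequality and the correct choice of auxiliary process $Y_{u,v}$; everything else is a clean appeal to off-the-shelf Gaussian inequalities. In truth, since both statements are textbook results (as the paper already acknowledges via its citations to~\cite{W19,V18,LT91}), the ``proof'' here is more accurately a reference assembly than a new argument, and the lemma serves only as a convenient black box for the subsequent leave-one-out analysis.
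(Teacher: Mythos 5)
Your proposal reconstructs exactly the standard argument to which the paper defers: the paper itself offers no proof and instead cites Wainwright~\cite{W19}, Vershynin~\cite{V18}, and Ledoux--Talagrand~\cite{LT91}, and what you have written is essentially the proof in those references (Gaussian comparison for the mean, Lipschitz concentration for the tail). Both the increment computation $\E(X_{u,v}-X_{u',v'})^2 = 2-2\langle u,u'\rangle\langle v,v'\rangle$ and the factorization $(1-\langle u,u'\rangle)(1-\langle v,v'\rangle)\geq 0$ are correct, the Jensen step $\E\|g\|\leq\sqrt{n}$ is fine, and the $1$-Lipschitz claim $\bigl|\|\BX\|-\|\BY\|\bigr|\leq\|\BX-\BY\|_F$ together with Borell--TIS gives the two-sided tail with the stated constant.

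One technical imprecision worth fixing: the comparison step should invoke the Sudakov--Fernique inequality (equivalently Chevet's theorem), not Slepian's inequality. Slepian's inequality requires the two processes to have matching variances, $\E X_{u,v}^2 = \E Y_{u,v}^2$, but here $\E X_{u,v}^2 = 1$ while $\E Y_{u,v}^2 = 2$, so its hypotheses are not met. Sudakov--Fernique asks only for the increment domination you verified and yields precisely the conclusion you need, $\E\sup X \leq \E\sup Y$, though not the stronger distributional comparison that Slepian provides. Since you only use the expectation bound, swapping the named theorem repairs the argument with no other changes.
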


A simple and useful corollary of Lemma~\ref{lem:gauss} is written as follows.
\begin{corollary}\label{cor:gauss}
For any matrix $\BY\in\RR^{m\times p}$ which is either independent of $\BX$ or deterministic where $\BX$ is an $n\times m$ Gaussian random, we have
\[
\| \BX\BY \| \leq \|\BY\| (\sqrt{n} + \sqrt{r}+t)
\]
with probability at least $1 - 2\exp(-t^2/2)$ where $r = \rank(\BY)\leq \min\{m,p\}.$
\end{corollary}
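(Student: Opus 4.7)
The plan is to reduce the statement to the square-loss tail bound in Lemma~\ref{lem:gauss} by exploiting the orthogonal/rotation invariance of Gaussian matrices. First I would factor $\BY$ by a (thin) singular value decomposition $\BY = \BU\BSigma\BV^{\top}$, where $\BU\in\RR^{m\times r}$ and $\BV\in\RR^{p\times r}$ have orthonormal columns and $\BSigma\in\RR^{r\times r}$ is diagonal with $\|\BSigma\|=\|\BY\|$. Since $\BV$ has orthonormal columns, it acts as an isometry on the right, so
\[
\|\BX\BY\| \;=\; \|\BX\BU\BSigma\BV^{\top}\| \;=\; \|\BX\BU\BSigma\| \;\leq\; \|\BX\BU\|\,\|\BSigma\| \;=\; \|\BX\BU\|\,\|\BY\|.
\]
Thus it suffices to control $\|\BX\BU\|$.

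Next I would condition on $\BY$ (equivalently on $\BU$), which is legitimate because $\BY$ is either deterministic or independent of $\BX$. Conditionally on $\BU$, the matrix $\BX\BU$ is $n\times r$ with i.i.d.\ $\mathcal{N}(0,1)$ entries: indeed, each row of $\BX$ is a standard Gaussian vector in $\RR^m$, so multiplying on the right by $\BU$ with $\BU^{\top}\BU=\I_r$ produces a standard Gaussian vector in $\RR^r$, and independence across rows is preserved. Applying Lemma~\ref{lem:gauss} to the $n\times r$ Gaussian matrix $\BX\BU$ yields
\[
\Pr\bigl(\|\BX\BU\| \geq \sqrt{n}+\sqrt{r}+t \,\bigm|\, \BU\bigr) \;\leq\; 2\exp(-t^2/2),
\]
since $\E\|\BX\BU\|\leq \sqrt{n}+\sqrt{r}$ and the operator norm is $1$-Lipschitz in the entries. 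Taking expectation over $\BU$ removes the conditioning, and combining with the previous display gives
\[
\|\BX\BY\| \;\leq\; \|\BY\|\bigl(\sqrt{n}+\sqrt{r}+t\bigr)
\]
with probability at least $1-2\exp(-t^2/2)$, as desired.

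There is no substantive obstacle here; the only subtle point worth flagging is the rotational invariance step, namely that $\BX\BU$ is a standard Gaussian matrix conditionally on $\BU$. Everything else is a one-line application of the submultiplicativity of the operator norm and of Lemma~\ref{lem:gauss}.
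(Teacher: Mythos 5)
Your proof is correct and matches the paper's approach exactly: the paper remarks that the result ``follows directly from using the economical SVD of $\BY$ and then applying Lemma~\ref{lem:gauss},'' which is precisely what you carry out. The only addition you make is spelling out the rotational-invariance step (that $\BX\BU$ is an $n\times r$ standard Gaussian matrix conditionally on $\BU$), which the paper leaves implicit.
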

The proof follows directly from using the economical SVD of $\BY$ and then applying Lemma~\ref{lem:gauss}.
An immediate application of Lemma~\ref{lem:gauss} and Corollary~\ref{cor:gauss} to the Gaussian noise matrices $\BW_i$ and $\BW$ gives
\begin{align}
\max_{1\leq i\leq n}\|\BW_i\| & \leq \sqrt{d} + \sqrt{m} +\sqrt{2 \gamma\log n}, \label{eq:gauss3a} \\
\|\BW\| & \leq \sqrt{nd} + \sqrt{m} +  \sqrt{2 \gamma\log n},\label{eq:gauss3b} \\
\max_{1\leq i\leq n}\|\BW^{(i)}\| & \leq \sqrt{nd} + \sqrt{m} +  \sqrt{2 \gamma\log n}, \label{eq:gauss3c} \\
\max_{1\leq i\leq n}\|\BW_i\BA^{\top}\| & \leq (2\sqrt{d} +  \sqrt{2 \gamma\log n})\|\BA\|, \label{eq:gauss3d}
\end{align}
with probability at least $1-10n^{-\gamma+1}$ for $\gamma>1$. 

\subsection{Contraction mapping}

This section is devoted to proving ${\cal T} = \PP_n\circ {\cal L}$ in~\eqref{def:T} is a contraction mapping which is similar to~\cite{L20c}. The proof consisting of showing that ${\cal L}$ is a contraction mapping and $\PP_n$ is Lipschitz on $\mathcal{N}_{\eps}\cap\mathcal{N}_{\xi,\infty}$.

\begin{lemma}[${\cal L}$ is a contraction mapping]\label{lem:L}
For $\BX$ and $\BY$ in $\mathcal{N}_{\eps}$, we have
\begin{align*}
d_F({\cal L}\BX, {\cal L}\BY) & \leq \left( 2\eps\sqrt{d} + \frac{3\sigma\|\BW\|}{\sqrt{n}\|\BA\|}  \right)d_F(\BX,\BY).
\end{align*}
Under $\eps= 1/(32\kappa^2\sqrt{d})$,~\eqref{cond:sigma}, and~\eqref{eq:gauss3b},
we have
\begin{equation}\label{eq:L2}
d_F({\cal L}\BX, {\cal L}\BY) \leq \frac{1-\eps^2 d}{2\kappa^2}d_F(\BX,\BY).
\end{equation}
\end{lemma}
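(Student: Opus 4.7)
My plan is to split $\mathcal{L}$ into its signal and noise pieces using~\eqref{def:SN}: I write $\mathcal{L}\BS = \mathcal{L}_0\BS + \mathcal{L}_{\BDelta}\BS$ with $\mathcal{L}_0\BS := \BZ\BA\BA^{\top}\BZ^{\top}\BS/(n\|\BA\|^2)$ and $\mathcal{L}_{\BDelta}\BS := \sigma\BDelta\BS/(n\|\BA\|^2)$, and bound each piece separately after exploiting linearity. Setting $\BQ^* := \argmin_{\BQ\in\Od(d)}\|\BX - \BY\BQ\|_F$ and $\BM := \BX - \BY\BQ^*$, we have $\|\BM\|_F = d_F(\BX,\BY)$ and $\mathcal{L}\BX - (\mathcal{L}\BY)\BQ^* = \mathcal{L}(\BM)$, so $d_F(\mathcal{L}\BX,\mathcal{L}\BY) \leq \|\mathcal{L}_0\BM\|_F + \|\mathcal{L}_{\BDelta}\BM\|_F$. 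The noise piece is immediate from~\eqref{eq:normdelta}:
\[
\|\mathcal{L}_{\BDelta}\BM\|_F \leq \tfrac{\sigma\|\BDelta\|}{n\|\BA\|^2}\|\BM\|_F \leq \tfrac{3\sigma\|\BW\|}{\sqrt{n}\|\BA\|}\,d_F(\BX,\BY),
\]
accounting for the second summand of the claimed prefactor. For the signal piece, using $\BZ^{\top}\BZ = n\I_d$ and $(\BA\BA^{\top})^2 \preceq \|\BA\|^4 \I_d$ one checks $\|\mathcal{L}_0\BM\|_F \leq \|\BZ^{\top}\BM\|_F/\sqrt{n}$, so the task reduces to the sharp block-averaging estimate $\|\BZ^{\top}\BM\|_F \leq 2\eps\sqrt{nd}\,\|\BM\|_F$.

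This estimate is the crux and the main obstacle: naive Cauchy--Schwarz only yields the useless $\|\BZ^{\top}\BM\|_F \leq \sqrt{n}\|\BM\|_F$, which would leave $\mathcal{L}_0$ with trivial operator norm $1$. To gain the extra factor $2\eps\sqrt{d}$, I absorb $\BQ^*$ into $\BY$ by setting $\BR := \BY\BQ^*$. Then $\BR$ still lies in $\Od(d)^{\otimes n}\cap\mathcal{N}_{\eps}$ by the right-invariance of $d_F$, and---crucially---the optimality of $\BQ^*$ forces the polar factor of $\BR^{\top}\BX = \sum_i \BR_i^{\top}\BX_i$ to be $\I_d$, so $\BR^{\top}\BX$ is symmetric (and PSD). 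Writing $\BR_i^{\top}\BX_i = \I_d + \BN_i$, the blockwise orthogonality $(\I_d + \BN_i)(\I_d + \BN_i)^{\top} = \I_d$ yields $\BN_i + \BN_i^{\top} = -\BN_i\BN_i^{\top}$, while the symmetry of $\sum_i \BN_i$ yields $\sum_i(\BN_i - \BN_i^{\top}) = 0$. Adding these two identities, the skew parts cancel to produce the key equation
\[
\sum_i \BN_i \;=\; -\tfrac{1}{2}\sum_i \BN_i \BN_i^{\top}.
\]
Since $\BR_i$ is orthogonal, $\|\BN_i\|_F = \|\BR_i^{\top}(\BX_i - \BR_i)\|_F = \|\BM_i\|_F$, so $\|\sum_i\BN_i\|_F \leq \tfrac{1}{2}\sum_i\|\BN_i\|_F^2 = \tfrac{1}{2}\|\BM\|_F^2$, quadratically small rather than the linear $\sqrt{n}\|\BM\|_F$ from Cauchy--Schwarz.

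To finish, I expand $\BZ^{\top}\BM = \sum_i (\BX_i - \BR_i) = \sum_i \BR_i \BN_i$ and peel off $\BQ_R := \PP(\BZ^{\top}\BR)$:
\[
\sum_i \BR_i \BN_i \;=\; \BQ_R\sum_i \BN_i + \sum_i(\BR_i - \BQ_R)\BN_i.
\]
The first term satisfies $\|\BQ_R\sum_i\BN_i\|_F = \|\sum_i\BN_i\|_F \leq \tfrac{1}{2}\|\BM\|_F^2$ by the cancellation identity, and Cauchy--Schwarz in $i$ controls the second by
\[
\Big\|\sum_i(\BR_i - \BQ_R)\BN_i\Big\|_F \leq \sqrt{\sum_i \|\BR_i - \BQ_R\|_F^2}\sqrt{\sum_i \|\BN_i\|_F^2} \leq \eps\sqrt{nd}\,\|\BM\|_F,
\]
using $\|\BR - \BZ\BQ_R\|_F = d_F(\BR,\BZ) \leq \eps\sqrt{nd}$ from $\BR\in\mathcal{N}_{\eps}$. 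Linearizing the quadratic term via the triangle inequality $\|\BM\|_F \leq d_F(\BX,\BZ) + d_F(\BZ,\BY) \leq 2\eps\sqrt{nd}$ then produces $\|\BZ^{\top}\BM\|_F \leq 2\eps\sqrt{nd}\,\|\BM\|_F$, and hence $\|\mathcal{L}_0\BM\|_F \leq 2\eps\sqrt{d}\,d_F(\BX,\BY)$, which combined with the noise bound gives the first displayed inequality. The sharper inequality~\eqref{eq:L2} is then a direct numerical check: plugging $\eps = 1/(32\kappa^2\sqrt{d})$ gives $2\eps\sqrt{d} = 1/(16\kappa^2)$, while~\eqref{cond:sigma} combined with~\eqref{eq:gauss3b} forces the noise prefactor $3\sigma\|\BW\|/(\sqrt{n}\|\BA\|)$ to be small enough that the total contraction factor is dominated by $(1-\eps^2 d)/(2\kappa^2)$.
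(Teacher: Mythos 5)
Your proof is correct and matches the paper's overall structure: you decompose $\mathcal{L}$ into a signal piece $\mathcal{L}_0$ and a noise piece $\mathcal{L}_{\BDelta}$, control the noise piece via~\eqref{eq:normdelta}, and reduce the signal piece to the block-averaging estimate $\|\BZ^{\top}\BM\|_F \leq 2\eps\sqrt{nd}\,\|\BM\|_F$. Where you genuinely diverge is that the paper obtains this last inequality by citing Lemma~4.5 of~\cite{L20c} as a black box, whereas you supply a self-contained proof. Your derivation is correct: the optimality of $\BQ^*$ forces $\BR^{\top}\BX$ to be symmetric PSD (its polar factor is the identity), and combining that global symmetry with the blockwise orthogonality identity $\BN_i+\BN_i^{\top}=-\BN_i\BN_i^{\top}$ yields the cancellation $\sum_i\BN_i=-\tfrac12\sum_i\BN_i\BN_i^{\top}$, upgrading the trivial $\sqrt{n}\|\BM\|_F$ bound to a quadratically small one; peeling off $\BQ_R=\PP(\BZ^{\top}\BR)$, using $\BR\in\mathcal{N}_\eps$ (valid by right-invariance of $d_F$), and linearizing via the triangle inequality $\|\BM\|_F\leq 2\eps\sqrt{nd}$ then delivers the stated factor $2\eps\sqrt{nd}$. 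The final numerical verification under~\eqref{eq:par},~\eqref{cond:sigma}, and~\eqref{eq:gauss3b} is the same as the paper's. Supplying the proof of the cited estimate makes your argument fully self-contained at the cost of length; the paper's citation keeps the lemma short but opaque to a reader who does not chase~\cite{L20c}.
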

\begin{proof}
Let $\BQ = \PP(\BY^{\top}\BX)$ be the orthogonal matrix which satisfies $d_F(\BX,\BY) = \|\BX - \BY\BQ\|_F.$ It holds that
\begin{align*}
d_F({\cal L}\BX, {\cal L}\BY) & \leq \frac{1}{n\|\BA\|^2} \left\| \BZ\BA\BA^{\top}\BZ^{\top}(\BX - \BY\BQ) + \sigma\BDelta(\BX-\BY\BQ)\right\|_F \\
& \leq \frac{1}{\sqrt{n}} \|\BZ^{\top}(\BX- \BY\BQ)\|_F + \frac{\sigma}{n\|\BA\|^2} \|\BDelta(\BX-\BY\BQ)\|_F \\
& = \left( 2\eps\sqrt{d} + \frac{3\sigma\|\BW\|}{\sqrt{n}\|\BA\|}  \right)d_F(\BX,\BY)
\end{align*}
where Lemma 4.5 in~\cite{L20c} gives
\[
\|\BZ^{\top}(\BX- \BY\BQ)\|_F \leq 2\eps\sqrt{nd}\cdot d_F(\BX,\BY)
\]
and~\eqref{eq:normdelta} gives $\|\BDelta\|\leq 3\sqrt{n}\|\BA\|\|\BW\|$. 

Under $\eps= 1/(32\kappa^2\sqrt{d})$,~\eqref{cond:sigma}, and~\eqref{eq:gauss3b},
\[
2\eps\sqrt{d} +  \frac{3\sigma\|\BW\|}{\sqrt{n}\|\BA\|}  \leq \frac{1}{16\kappa^2} + \frac{3\sigma(\sqrt{nd}+\sqrt{m}+\sqrt{2\gamma \log n})}{\sqrt{n}\|\BA\|} \leq \frac{1-\eps^2 d}{2\kappa^2}
\]
where
\[
\sigma\lesssim \frac{\sigma_{\min}(\BA)}{\kappa^2} \cdot \frac{\sqrt{n}}{\sqrt{d}(\sqrt{nd}+\sqrt{m}+\sqrt{2\gamma n\log n})},
\]
which gives~\eqref{eq:L2}.
\end{proof}

\begin{lemma}\label{lem:Pcon}
For any two $d\times d$ matrices $\BX$ and $\BY$, 
\[
\|\PP(\BX) - \PP(\BY)\| \leq \frac{2\|\BX-\BY\|}{\sigma_{\min}(\BX) + \sigma_{\min}(\BY)}
\]
and
\[
\|\PP(\BX) - \PP(\BY)\|_F \leq \frac{2\|\BX-\BY\|_F}{\sigma_{\min}(\BX) + \sigma_{\min}(\BY)}.
\]
\end{lemma}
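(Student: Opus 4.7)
The plan is to use the polar decompositions $\BX = \BU_X\BP_X$ and $\BY = \BU_Y\BP_Y$, where $\BU_X = \PP(\BX)$ and $\BU_Y = \PP(\BY)$ are orthogonal factors and $\BP_X = (\BX^\top\BX)^{1/2}$, $\BP_Y = (\BY^\top\BY)^{1/2}$ are positive semidefinite. I would first assume $\BX, \BY$ are invertible (so $\BP_X, \BP_Y$ are strictly positive definite and the polar factors are uniquely determined), and reduce the singular case by continuity, replacing $\BX, \BY$ with $\BX + \varepsilon \I, \BY + \varepsilon \I$ and letting $\varepsilon \to 0^+$.

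Next, using $\BU_X\BP_X = \BX$ and $\BU_Y\BP_Y = \BY$, I would establish by direct expansion the two companion identities
\[
(\BU_X - \BU_Y)\BP_X = (\BX - \BY) - \BU_Y(\BP_X - \BP_Y), \qquad (\BU_X - \BU_Y)\BP_Y = (\BX - \BY) - \BU_X(\BP_X - \BP_Y),
\]
and add them to obtain the central identity
\[
(\BU_X - \BU_Y)(\BP_X + \BP_Y) = 2(\BX - \BY) - (\BU_X + \BU_Y)(\BP_X - \BP_Y).
\]
Since $\BP_X + \BP_Y$ is PSD, Weyl's inequality gives $\sigma_{\min}(\BP_X + \BP_Y) \geq \sigma_{\min}(\BP_X) + \sigma_{\min}(\BP_Y) = \sigma_{\min}(\BX) + \sigma_{\min}(\BY)$, so that right-multiplication by $\BP_X + \BP_Y$ is bounded below by this quantity in any unitarily invariant norm. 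Combining with the central identity reduces the problem to bounding $\|(\BU_X - \BU_Y)(\BP_X + \BP_Y)\|$ by $2\|\BX - \BY\|$, and analogously for the Frobenius norm.

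The main obstacle is controlling the residual term $(\BU_X + \BU_Y)(\BP_X - \BP_Y)$ sharply enough to recover the constant 2. A naive triangle-type estimate combined with Kittaneh's bound $\|\BP_X - \BP_Y\| \leq \sqrt{2}\,\|\BX - \BY\|$ only yields the worse constant $2 + 2\sqrt{2}$. To extract the sharp constant, I would use the auxiliary identity
\[
2(\BP_X - \BP_Y) = (\BU_X - \BU_Y)^\top(\BX + \BY) + (\BU_X + \BU_Y)^\top(\BX - \BY),
\]
which is obtained by direct expansion (the cross terms involving $\BR := \BU_X^\top\BU_Y$ telescope away), and substitute this back into the central identity. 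The resulting equation in $\BU_X - \BU_Y$ is of Sylvester--Lyapunov type, governed by the operator $\BM \mapsto \BP_X\BM + \BM\BP_Y$ whose smallest singular value on the space of $d\times d$ matrices is exactly $\sigma_{\min}(\BX) + \sigma_{\min}(\BY)$; inverting this operator via its standard integral representation $\BM \mapsto \int_0^\infty e^{-\BP_X t}\,\cdot\, e^{-\BP_Y t}\,dt$ gives the sharp factor 2 simultaneously for the operator and Frobenius norms, recovering Li's polar-factor perturbation inequality in both norms.
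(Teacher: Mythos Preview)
The paper does not prove this lemma itself; it cites Li~\cite{L95} and the author's earlier paper~\cite{L20c}. Li's argument does go through a Sylvester equation, so your instinct is right, but the specific route you describe has a gap at the last step.

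Your central and auxiliary identities are both correct. The problem is the substitution. If you plug the expression for $2(\BP_X-\BP_Y)$ into the central identity and collect all terms containing $\BDelta:=\BU_X-\BU_Y$, then using $(\BU_X+\BU_Y)(\BU_X+\BU_Y)^{\top}=2\I+\BU_X\BU_Y^{\top}+\BU_Y\BU_X^{\top}$ and $(\BU_X-\BU_Y)(\BU_X-\BU_Y)^{\top}=2\I-\BU_X\BU_Y^{\top}-\BU_Y\BU_X^{\top}$ the result is
\[
\BDelta(\BP_X+\BP_Y)\;+\;\tfrac12(\BU_X+\BU_Y)\,\BDelta^{\top}(\BX+\BY)\;=\;\tfrac12\,\BDelta\BDelta^{\top}(\BX-\BY).
\]
This is not an equation of the form $\BP_X\BM+\BM\BP_Y=(\text{something of norm}\le 2\|\BX-\BY\|)$: the left side mixes $\BDelta$ and $\BDelta^{\top}$, and the right side is quadratic in $\BDelta$. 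So the claimed reduction to the Sylvester operator $\BM\mapsto\BP_X\BM+\BM\BP_Y$ with a right-hand side controlled by $2\|\BX-\BY\|$ does not follow from your substitution.

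Li's argument bypasses all of this. With $\BR=\BU_X^{\top}\BU_Y$ one checks in one line that
\[
\BU_X^{\top}(\BX-\BY)-(\BX-\BY)^{\top}\BU_Y=\BP_X(\I-\BR)+(\I-\BR)\BP_Y,
\]
which \emph{is} a Sylvester equation in $\I-\BR$, with right-hand side of operator (resp.\ Frobenius) norm at most $2\|\BX-\BY\|$ (resp.\ $2\|\BX-\BY\|_F$). Your integral-representation bound for the inverse of $\BM\mapsto\BP_X\BM+\BM\BP_Y$ then applies verbatim and, since $\|\BU_X-\BU_Y\|=\|\I-\BR\|$, gives the lemma with the sharp constant~$2$ in both norms. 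So keep the Sylvester idea and your integral bound, but replace the central/auxiliary-identity detour by the single identity above.
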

Lemma~\ref{lem:Pcon} implies that $\PP(\cdot)$ is a Lipschitz continuous function over square matrices whose smallest singular value is bounded away from 0. Its original proof can be found in~\cite{L95} and another proof is in~\cite{L20c}. 

\begin{lemma}\label{lem:Pncon}
For any $\BX$ and $\BY\in\RR^{nd\times d}$, then
\[
d_F(\PP_n(\BX), \PP_n(\BY)) \leq \frac{2d_F(\BX,\BY)}{ \min_{1\leq i\leq n}  \{\sigma_{\min}(\BX_i) + \sigma_{\min}(\BY_i)\}}
\]
where $\BX^{\top} = [\BX_1^{\top},\cdots,\BX_n^{\top}]\in\RR^{d\times nd}$ and $\BY^{\top} = [\BY_1^{\top},\cdots,\BY_n^{\top}]\in\RR^{d\times nd}.$
\end{lemma}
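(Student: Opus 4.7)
The plan is to reduce the blockwise bound to a direct application of Lemma~\ref{lem:Pcon}, after removing the global orthogonal ambiguity in $d_F$ by a clever choice of alignment. The key observation I would exploit first is that the projection $\PP$ commutes with right multiplication by an orthogonal matrix: if $\BY_i = \BU_i \BSigma_i \BV_i^{\top}$ is an SVD and $\BQ \in \Od(d)$, then $\BY_i \BQ = \BU_i \BSigma_i (\BQ^{\top}\BV_i)^{\top}$ is also an SVD, so $\PP(\BY_i \BQ) = \BU_i \BV_i^{\top}\BQ = \PP(\BY_i)\BQ$. Applied blockwise, this gives $\PP_n(\BY \BQ) = \PP_n(\BY) \BQ$ for every orthogonal $\BQ$, where $\BY\BQ$ refers to multiplying the matrix $\BY\in\RR^{nd\times d}$ by $\BQ$ on the right (equivalently multiplying each block by $\BQ$).

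Next, I would let $\BQ^{*} = \PP(\BY^{\top}\BX) \in \Od(d)$ be the minimizer achieving $d_F(\BX,\BY) = \|\BX - \BY\BQ^{*}\|_F$. By the commutation identity above and the definition of $d_F$ on the left-hand side, I can bound
\[
d_F(\PP_n(\BX),\PP_n(\BY)) \leq \|\PP_n(\BX) - \PP_n(\BY)\BQ^{*}\|_F = \|\PP_n(\BX) - \PP_n(\BY\BQ^{*})\|_F.
\]
Since $\PP_n$ acts blockwise, the squared right-hand side equals $\sum_{i=1}^{n}\|\PP(\BX_i) - \PP(\BY_i \BQ^{*})\|_F^2$.

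Now I would apply Lemma~\ref{lem:Pcon} to each block: $\|\PP(\BX_i) - \PP(\BY_i \BQ^{*})\|_F \leq 2\|\BX_i - \BY_i\BQ^{*}\|_F / (\sigma_{\min}(\BX_i) + \sigma_{\min}(\BY_i\BQ^{*}))$. Because $\BQ^{*}$ is orthogonal, $\sigma_{\min}(\BY_i \BQ^{*}) = \sigma_{\min}(\BY_i)$, so the denominator is exactly $\sigma_{\min}(\BX_i) + \sigma_{\min}(\BY_i)$, which is bounded below by $\mu := \min_{1\leq j\leq n}\{\sigma_{\min}(\BX_j) + \sigma_{\min}(\BY_j)\}$. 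Pulling this uniform bound out of the sum gives
\[
d_F(\PP_n(\BX),\PP_n(\BY))^2 \leq \frac{4}{\mu^2}\sum_{i=1}^{n}\|\BX_i - \BY_i \BQ^{*}\|_F^2 = \frac{4}{\mu^2}\|\BX - \BY\BQ^{*}\|_F^2 = \frac{4\, d_F(\BX,\BY)^2}{\mu^2}.
\]
Taking square roots yields the claim. There is no real obstacle here; the only subtle step is recognizing the equivariance $\PP_n(\BY\BQ) = \PP_n(\BY)\BQ$, which is what lets us turn a bound involving $d_F$ (defined modulo $\Od(d)$) on the left into the unconstrained Frobenius bound of Lemma~\ref{lem:Pcon} on the right.
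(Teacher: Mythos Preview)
Your proof is correct and follows the same approach as the paper, which simply states that the result follows from applying Lemma~\ref{lem:Pcon} to each $d\times d$ block. You have spelled out the details the paper omits, in particular the equivariance $\PP_n(\BY\BQ)=\PP_n(\BY)\BQ$ needed to pass from the $d_F$ distance to a blockwise Frobenius bound.
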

\begin{proof}
The proof directly follows from applying  Lemma~\ref{lem:Pcon} to each $d\times d$ block of $\BX$ and $\BY.$
\end{proof}

In order to show that ${\cal T}$ is a contraction mapping, it suffices to show that each block $[{\cal L}\BX]_i$ with $\BX\in\mathcal{N}_{\eps}\cap\mathcal{N}_{\xi,\infty}$ has smallest singular value bounded away from 0 and then combine Lemma~\ref{lem:L} with Lemma~\ref{lem:Pncon}.

\begin{proposition}\label{prop:con}
For any $\BS\in\mathcal{N}_{\xi,\infty}$, we have
\begin{equation}\label{lem:deltaS}
\|\BDelta_i^{\top}\BS\| \leq 3 \sqrt{nd}(\sqrt{nd} + \sqrt{m} + \sqrt{2\gamma n\log n})\|\BA\|.
\end{equation}
In addition, if $\BS\in\mathcal{N}_{\eps}\cap\mathcal{N}_{\xi,\infty}$, it holds that
\begin{equation}\label{eq:sigmamin}
\sigma_{\min}\left([{\cal L}\BS]_i\right) \geq \frac{1 - \eps^2 d}{\kappa^2}
\end{equation}
provided that 
\[
\sigma\leq \frac{\eps^2 d}{6\kappa^2}\cdot\frac{\sqrt{n}\|\BA\|}{\sqrt{d}(\sqrt{nd} + \sqrt{m}+\sqrt{2\gamma n\log n})}.
\]

As a result, for any $\BX$ and $\BY\in\mathcal{N}_{\eps}\cap\mathcal{N}_{\xi,\infty}$, we have
\begin{equation}\label{eq:contra}
d_F({\cal T}(\BX), {\cal T}(\BY)) \leq \frac{\kappa^2}{1-\eps^2d} \left( 2\eps\sqrt{d} + \frac{3\sigma\|\BW\|}{\sqrt{n}\|\BA\|}\right) d_F(\BX,\BY) \leq \frac{1}{2}d_F(\BX,\BY)
\end{equation}
by combining Lemma~\ref{lem:Pncon} with~\eqref{eq:sigmamin} and~\eqref{eq:L2}.
\end{proposition}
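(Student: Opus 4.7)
The plan is to prove the three claims in order, using~\eqref{lem:deltaS} to deduce~\eqref{eq:sigmamin}, and then combining~\eqref{eq:sigmamin} with Lemmas~\ref{lem:L} and~\ref{lem:Pncon} to obtain~\eqref{eq:contra}. For the noise bound~\eqref{lem:deltaS}, I would first use the symmetry of $\BDelta$ (immediate from~\eqref{def:delta}) together with~\eqref{eq:Deltai} to write $\BDelta_i^{\top}\BS = \BA\BW^{\top}\BS + \BW_i\BA^{\top}\BZ^{\top}\BS + \sigma\BW_i\BW^{\top}\BS$ and bound each piece separately. The first two pieces I would control deterministically on the event that the Gaussian operator-norm bounds~\eqref{eq:gauss3b} and~\eqref{eq:gauss3d} hold, combined with the elementary identities $\|\BS\|=\sqrt{n}$ and $\|\BZ^{\top}\BS\|\leq n$ for $\BS\in\Od(d)^{\otimes n}$. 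The third piece is where the hypothesis $\BS\in\mathcal{N}_{\xi,\infty}$ is consumed: $\|\BW_i\BW^{\top}\BS\|\leq\|\BW_i\BW^{\top}\BS\|_F\leq\xi\sqrt{d}(\sqrt{nd}+\sqrt{m}+\sqrt{2\gamma n\log n})^2$, and multiplying by $\sigma$ and invoking~\eqref{cond:sigma} absorbs one factor of $(\sqrt{nd}+\sqrt{m}+\sqrt{2\gamma n\log n})$, bringing this piece in line with the common target. Summing the three pieces yields the stated constant $3$.

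For the smallest singular value bound~\eqref{eq:sigmamin}, the symmetry of $\BDelta$ gives
\[
[\mathcal{L}\BS]_i = \frac{1}{n\|\BA\|^2}\bigl(\BA\BA^{\top}\BZ^{\top}\BS + \sigma\BDelta_i^{\top}\BS\bigr),
\]
so Weyl's inequality yields $\sigma_{\min}([\mathcal{L}\BS]_i)\geq \sigma_{\min}(\BA\BA^{\top}\BZ^{\top}\BS)/(n\|\BA\|^2) - \sigma\|\BDelta_i^{\top}\BS\|/(n\|\BA\|^2)$. I would lower bound the first term by $\sigma_{\min}(\BA)^{2}\sigma_{\min}(\BZ^{\top}\BS)/(n\|\BA\|^2)$, where~\eqref{eq:neps_sigma} supplies $\sigma_{\min}(\BZ^{\top}\BS)\geq n(1-\eps^2 d/2)$, producing $(1-\eps^2 d/2)/\kappa^2$. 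The second term, via~\eqref{lem:deltaS}, is at most $3\sigma\sqrt{d}(\sqrt{nd}+\sqrt{m}+\sqrt{2\gamma n\log n})/(\sqrt{n}\|\BA\|)$, and the stated hypothesis on $\sigma$ makes this $\leq(\eps^2 d/2)/\kappa^2$. Subtracting closes~\eqref{eq:sigmamin}.

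For the contraction bound~\eqref{eq:contra}, since $\mathcal{T}=\PP_n\circ\mathcal{L}$, Lemma~\ref{lem:Pncon} applied to $\mathcal{L}\BX$ and $\mathcal{L}\BY$ gives $d_F(\mathcal{T}\BX,\mathcal{T}\BY)\leq 2\,d_F(\mathcal{L}\BX,\mathcal{L}\BY)/\min_i\{\sigma_{\min}([\mathcal{L}\BX]_i)+\sigma_{\min}([\mathcal{L}\BY]_i)\}$. Substituting the lower bound from~\eqref{eq:sigmamin} in the denominator and the first conclusion of Lemma~\ref{lem:L} in the numerator yields the first inequality in~\eqref{eq:contra}. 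The final bound by $\tfrac{1}{2}d_F(\BX,\BY)$ is then immediate from~\eqref{eq:L2}, which says precisely that $(2\eps\sqrt{d}+3\sigma\|\BW\|/(\sqrt{n}\|\BA\|))\leq (1-\eps^2 d)/(2\kappa^2)$ under the choice $\eps=1/(32\kappa^2\sqrt{d})$ and~\eqref{cond:sigma}; multiplying by $\kappa^2/(1-\eps^2 d)$ gives the factor $\tfrac{1}{2}$.

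The main obstacle I anticipate is the careful bookkeeping in~\eqref{lem:deltaS}: the three pieces of $\BDelta_i^{\top}\BS$ carry different mixtures of $\sqrt{d}$, $\sqrt{n}$, $\sqrt{m}$, and log factors — some with $\sqrt{2\gamma\log n}$ (from deterministic Gaussian tails) and one with $\sqrt{2\gamma n\log n}$ (through $\mathcal{N}_{\xi,\infty}$) — and one must verify that each separately fits inside the common target $\sqrt{nd}(\sqrt{nd}+\sqrt{m}+\sqrt{2\gamma n\log n})\|\BA\|$. Only the cross-term $\sigma\BW_i\BW^{\top}\BS$ genuinely exploits the leave-one-out invariant baked into $\mathcal{N}_{\xi,\infty}$; the first two pieces are straightforward Gaussian operator-norm estimates. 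Once~\eqref{lem:deltaS} is in hand, parts~\eqref{eq:sigmamin} and~\eqref{eq:contra} are essentially algebraic consequences of Weyl's inequality together with Lemmas~\ref{lem:L} and~\ref{lem:Pncon}.
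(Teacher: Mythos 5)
Your proof follows essentially the same route as the paper: the same three-term decomposition of $\BDelta_i^{\top}\BS$ with the $\mathcal{N}_{\xi,\infty}$ hypothesis consumed only by the $\sigma\BW_i\BW^{\top}\BS$ cross-term, the same Weyl-plus-$\sigma_{\min}(\BZ^{\top}\BS)$ estimate for~\eqref{eq:sigmamin}, and the same composition of Lemmas~\ref{lem:Pncon} and~\ref{lem:L} for~\eqref{eq:contra}. The only cosmetic difference is that the paper bounds $\|\BDelta_i^{\top}\BS\|_F$ by converting operator bounds via a $\sqrt{d}$ factor (since the product is $d\times d$) whereas you bound the operator norm directly on the first two pieces, but both yield the same estimate and the paper needs the Frobenius version anyway in the surrounding argument.
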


\begin{proof}
For the $i$th block column $\BDelta_i$ in~\eqref{eq:Deltai} of $\BDelta,$ we have
\begin{align*}
\BDelta_i^{\top}\BS & = ( \BW\BA^{\top} + \BZ\BA\BW_i^{\top} + \sigma\BW\BW_i^{\top})^{\top}\BS \\
& = \BA\BW^{\top}\BS + \BW_i\BA^{\top}\BZ^{\top}\BS + \sigma\BW_i\BW^{\top}\BS.
\end{align*}

Then it holds that
\begin{align*}
\|\BDelta_i^{\top}\BS\|_F & \leq \| \BA\BW^{\top}\BS + \BW_i\BA^{\top}\BZ^{\top}\BS + \sigma\BW_i\BW^{\top}\BS \|_F \\
& \leq \sqrt{d}(\|\BA\|\|\BW\|\|\BS\| + \|\BW_i\BA^{\top}\|\|\BZ^{\top}\BS\|) + \sigma\|\BW_i\BW^{\top}\BS\|_F \\
& \leq \sqrt{nd}(\sqrt{nd} + \sqrt{m} + \sqrt{2\gamma\log n})\|\BA\| + n\sqrt{d}(2\sqrt{d}+\sqrt{2\gamma\log n})\|\BA\| \\
& \qquad + \sigma\xi\sqrt{d}(\sqrt{nd} + \sqrt{m} + \sqrt{2\gamma n\log n})^2 \\
& \leq  3 \sqrt{nd}(\sqrt{nd} + \sqrt{m} + \sqrt{2\gamma n\log n})\|\BA\|
\end{align*}
where $\BS\in\mathcal{N}_{\xi,\infty}$, $\xi=6$, $\eps \sqrt{d} \leq 1/(32\kappa^2)$, and
\[
\sigma\leq \frac{\eps^2 d\sqrt{n}\|\BA\|}{6\kappa^2\sqrt{d}(\sqrt{nd} + \sqrt{m} + \sqrt{2\gamma n\log n} )},
\]
and Lemma~\ref{lem:gauss} and Corollary~\ref{cor:gauss} imply
\begin{align*}
& \|\BW\| \leq \sqrt{nd} + \sqrt{m} + \sqrt{2\gamma\log n}, \\
& \max_{1\leq i\leq n}\|\BW_i\BA^{\top}\| \leq (2\sqrt{d}+\sqrt{2\gamma\log n})\|\BA\|
\end{align*}
hold with probability at least $1-O(n^{-\gamma+1})$. 
As a result, we have
\[
\sigma \|\BDelta_i^{\top}\BS\|_F \leq \frac{\eps^2d\cdot \sqrt{n}\|\BA\|}{6\kappa^2\sqrt{d}(\sqrt{nd} + \sqrt{m}+ \sqrt{2\gamma n\log n} )} \cdot  3\sqrt{nd}(\sqrt{nd} + \sqrt{m} + \sqrt{2\gamma n\log n})\|\BA\| \leq \frac{\eps^2 dn\|\BA\|^2}{2\kappa^2}.
\]

For each block of ${\cal L}\BS$ where $\BS\in \mathcal{N}_{\eps}$, we have
\[
[{\cal L}\BS]_i = \frac{1}{n\|\BA\|^2}\left(\BA\BA^{\top}\BZ^{\top}\BS + \sigma\BDelta_i^{\top}\BS\right)
\]
and for $1\leq i\leq n$, it holds that
\[
\sigma_{\min}([{\cal L}\BS]_i) \geq \frac{\kappa^2}{n} \sigma_{\min}(\BZ^{\top}\BS) - \frac{\sigma \|\BDelta_i^{\top}\BS\|}{n\|\BA\|^2}.
\]
Suppose that $\BS\in {\cal N}_{\eps}$, then
\[
\sigma_{\min}(\BZ^{\top}\BS)\geq n\left( 1- \frac{\eps^2 d}{2}\right)
\]
and thus
\[
\sigma_{\min}([{\cal L}\BS]_i) \geq \frac{1}{\kappa^2}\left(1-\frac{\eps^2d}{2}\right) - \frac{\eps^2d}{2\kappa^2} \geq \frac{1-\eps^2 d}{\kappa^2}
\]
which gives~\eqref{eq:sigmamin}.

Now combining~\eqref{eq:sigmamin} with Lemma~\ref{lem:Pncon} leads to
\begin{align*}
d_F({\cal T}(\BX), {\cal T}(\BY)) & \leq \frac{2}{\min_{1\leq i\leq n} \sigma_{\min}([{\cal L}\BX]_i) + \sigma_{\min}([{\cal L}\BY]_i) } d_F({\cal L}\BX, {\cal L}\BY)  \\
& \leq \frac{\kappa^2}{1-\eps^2 d} \cdot d_F({\cal L}\BX, {\cal L}\BY) \\
& \leq \frac{\kappa^2}{1-\eps^2d} \left(2 \eps\sqrt{d} + \frac{3\sigma\|\BW\|}{\sqrt{n}\|\BA\|}\right) d_F(\BX,\BY)
\end{align*}
for any $\BX$ and $\BY$ in $\mathcal{N}_{\eps}\cap\mathcal{N}_{\xi,\infty}.$
\end{proof}

\subsection{Convergence analysis via leave-one-out technique}
Before we analyze the evolution of the GPM, we first show that the spectral estimator $\BS^0$ belongs to $\mathcal{N}_{\eps}\cap\mathcal{N}_{\xi,\infty}$ and is also close to $\BS^{i,0}$, i.e., the spectral estimator associated with the data matrix $\BD^{(i)}$ (or equivalently $\BC^{(i)}$) in~\eqref{def:CDi}.


\begin{lemma}[{\bf Initialization}]\label{lem:init}
The spectral initialization~\eqref{def:init} produces $\BS^0$ such that
\[
\min_{\BQ\in\Od(d)}\max_{1\leq i\leq n}\| \BS^0_i - \BQ\|  \leq \frac{120\kappa^2\sigma (\sqrt{nd} + \sqrt{m} + \sqrt{2\gamma n\log n})  }{\sqrt{n}\sigma_{\min}(\BA)}. 
\]
Moreover, we have
\begin{align}
d_F(\BS^0, \BZ) & \leq \frac{\eps\sqrt{nd}}{2}, \label{eq:init1} \\
d_F(\BS^0, \BS^{i,0}) & \leq \eps\sqrt{d}, ~~\forall 1\leq i\leq n, \label{eq:init2}\\
\|\BW_i\BW^{\top}\BS^0\|_F &\leq \frac{\xi}{2} \sqrt{d} (\sqrt{nd} + \sqrt{m} + \sqrt{2\gamma n\log n})^2, \label{eq:init3}
\end{align}
with probability at least $1-O(n^{-\gamma+1})$ provided that
\[
\sigma \lesssim \frac{\sigma_{\min}(\BA)}{\kappa^4}\cdot\frac{\sqrt{n}}{\sqrt{d}(\sqrt{nd} + \sqrt{m} + \sqrt{2\gamma n\log n})}.
\]
\end{lemma}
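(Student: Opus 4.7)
The plan is to reduce all three claims to the blockwise spectral bound in Theorem~\ref{thm:spectral} combined with a leave-one-out comparison between $\BS^{0}$ and $\BS^{i,0}$. Since we have reduced to $\BO_i=\I_d$, applying Theorem~\ref{thm:spectral} directly to $\BD$ yields the first displayed inequality (the constant $120$ is absorbed into the implicit constant of that theorem). Writing $\delta:=\max_{j}\|\BS^{0}_j-\BQ^{*}\|$ for the blockwise operator-norm error, I would convert it into the Frobenius distance through
\[
d_F(\BS^{0},\BZ)\le\sqrt{\sum_{j=1}^{n}\|\BS^{0}_j-\BQ^{*}\|_F^{2}}\le\sqrt{nd}\,\delta,
\]
and the hypothesis on $\sigma$ (which carries an extra $\sqrt{d}$ and a large numerical constant in the denominator) forces $\sqrt{nd}\,\delta\le\eps\sqrt{nd}/2=\sqrt{nd}/(64\kappa^{2}\sqrt{d})$, giving \eqref{eq:init1}.

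The main obstacle is \eqref{eq:init2}, because simply using the blockwise bound on both $\BS^{0}$ and $\BS^{i,0}$ with the triangle inequality would produce $d_F(\BS^{0},\BS^{i,0})=O(\sqrt{nd}\,\delta)$, off by a factor of $\sqrt{n}$. To recover that factor I would exploit the fact that the leave-one-out perturbation $\BD-\BD^{(i)}$ is supported on the $i$-th block, so that $\BC-\BC^{(i)}$ has rank at most $3d$ and admits the low-rank decomposition
\[
\BC-\BC^{(i)}=\sigma\BE_i\BW_i(\BD^{(i)})^{\top}+\sigma\BD^{(i)}\BW_i^{\top}\BE_i^{\top}+\sigma^{2}\BE_i\BW_i\BW_i^{\top}\BE_i^{\top},
\]
where $\BE_i\in\RR^{nd\times d}$ is the $i$-th block-indicator matrix. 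A Davis--Kahan (Wedin) $\sin\Theta$ bound in Frobenius norm for the top-$d$ eigenspaces, using the spectral gap $\sigma_{d}(\BC^{(i)})\gtrsim n\sigma_{\min}(\BA)^{2}$ (valid under the hypothesis since $\sigma\|\BW^{(i)}\|\ll \sqrt{n}\|\BA\|$), then controls $\min_{\BR\in\Od(d)}\|\BU-\BU^{(i)}\BR\|_F$ by the product of $\sqrt{n}$, the low Frobenius norm of the perturbation, and the inverse gap. Lemma~\ref{lem:Pncon} applied together with the observation $\sigma_{\min}(\BU_j),\sigma_{\min}(\BU^{(i)}_j)\gtrsim 1$ (a byproduct of Theorem~\ref{thm:spectral}) transfers this estimate to $d_F(\BS^{0},\BS^{i,0})$, and the hypothesis forces it below $\eps\sqrt{d}=1/(32\kappa^{2})$.

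Finally, for \eqref{eq:init3} I would use Frobenius invariance under right multiplication by orthogonal matrices to write $\|\BW_i\BW^{\top}\BS^{0}\|_F=\|\BW_i\BW^{\top}\tilde\BS\|_F$, where $\tilde\BS=\BS^{0}\BR^{\top}$ and $\BR$ realises $d_F(\BS^{0},\BS^{i,0})$, so $\|\tilde\BS-\BS^{i,0}\|_F\le\eps\sqrt{d}$ by the previous step. Since $\BW-\BW^{(i)}$ has only its $i$-th block nonzero (equal to $\BW_i$), decompose
\[
\BW^{\top}\tilde\BS=(\BW^{(i)})^{\top}\BS^{i,0}+\BW_i^{\top}[\BS^{i,0}]_i+\BW^{\top}(\tilde\BS-\BS^{i,0}),
\]
and bound the three resulting Frobenius norms separately. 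The first, $\|\BW_i(\BW^{(i)})^{\top}\BS^{i,0}\|_F$, is the leave-one-out term: conditional on $\BW^{(i)}$, the factor $(\BW^{(i)})^{\top}\BS^{i,0}$ is a deterministic rank-$d$ matrix independent of $\BW_i$ with operator norm $\lesssim\sqrt{n}(\sqrt{nd}+\sqrt{m}+\sqrt{2\gamma\log n})$, so Corollary~\ref{cor:gauss} yields $\lesssim d\sqrt{n}(\sqrt{nd}+\sqrt{m}+\sqrt{2\gamma\log n})$. The second is bounded by $\|\BW_i\|^{2}\sqrt{d}$ via~\eqref{eq:gauss3a}. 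The third is at most $\|\BW_i\|\|\BW\|\cdot\eps\sqrt{d}\lesssim (\sqrt{nd}+\sqrt{m})^{2}/\kappa^{2}$. Each of the three pieces fits comfortably inside $\tfrac{\xi}{2}\sqrt{d}(\sqrt{nd}+\sqrt{m}+\sqrt{2\gamma n\log n})^{2}$, and a union bound over $1\le i\le n$ and over the Gaussian events~\eqref{eq:gauss3a}--\eqref{eq:gauss3d} produces the stated probability.
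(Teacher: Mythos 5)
Your treatment of \eqref{eq:init1} and \eqref{eq:init3} is essentially sound and parallels the paper: for \eqref{eq:init1}, converting the block-operator-norm bound to Frobenius distance via $d_F(\BS^0,\BZ)\le\sqrt{nd}\max_j\|\BS_j^0-\BQ^*\|$ is exactly how the paper proceeds (with the caveat that the paper proves Theorem~\ref{thm:spectral} and Lemma~\ref{lem:init} simultaneously, so one must actually establish the block bound rather than merely cite it); and for \eqref{eq:init3}, your three-term decomposition of $\BW^\top\tilde\BS$ and the resulting estimates are in substance the same as the paper's deferral to Step~3 of Lemma~\ref{lem:key1}, with the key leave-one-out conditioning on $(\BW^{(i)})^\top\BS^{i,0}$ done correctly.

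The gap is in \eqref{eq:init2}, and it is a quantitative one. Applying a Frobenius Davis--Kahan bound to the covariance perturbation with the quantity ``$\|\BC-\BC^{(i)}\|_F$'' does not close the argument under the stated hypothesis on $\sigma$. Concretely, $\BC-\BC^{(i)}$ has rank at most $3d$ but operator norm $\sim\sigma\|\BW_i\|\|\BD^{(i)}\|\sim\sigma(\sqrt m+\sqrt d)\sqrt n\|\BA\|$, so $\|\BC-\BC^{(i)}\|_F\lesssim\sqrt d\,\sigma(\sqrt m+\sqrt d)\sqrt n\|\BA\|$; dividing by the gap $\sim n\sigma_{\min}(\BA)^2$ and then multiplying by the $\sqrt n\kappa$ Lipschitz constant of $\PP_n$ (from $\sigma_{\min}(\BU_j)\gtrsim 1/(\sqrt n\kappa)$) yields $d_F(\BS^0,\BS^{i,0})\lesssim\kappa^2\sqrt d\,\sigma(\sqrt m+\sqrt d)/\sigma_{\min}(\BA)$. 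To make this $\le\eps\sqrt d$ you would need $\sigma\lesssim\sigma_{\min}(\BA)/(\kappa^4\sqrt d(\sqrt m+\sqrt d))$, which is \emph{not} implied by the lemma's hypothesis $\sigma\lesssim\sigma_{\min}(\BA)\sqrt n/(\kappa^4\sqrt d(\sqrt{nd}+\sqrt m+\sqrt{2\gamma n\log n}))$ — the latter is weaker by a factor of order $\sqrt n$ when $m\gg nd$. (The stray factor ``$\sqrt n$'' you insert into the Davis--Kahan bound is unexplained and, if taken literally, makes the estimate worse rather than better.) The paper recovers the missing factor by applying Wedin's theorem to $\BD$ vs.\ $\BD^{(i)}$ and bounding the \emph{directional} quantities $\|(\BW-\BW^{(i)})\BV^{(i)}\|=\|\BW_i\BV^{(i)}\|$ and $\|(\BW-\BW^{(i)})^\top\BU^{(i)}\|=\|\BW_i^\top\BU_i^{(i)}\|$: the first is $\lesssim\sqrt d+\sqrt{\log n}$ because $\BV^{(i)}\in\RR^{m\times d}$ has orthonormal columns and is independent of $\BW_i$, and the second is $\lesssim(\sqrt m+\sqrt d)\|\BU_i^{(i)}\|\lesssim\kappa(\sqrt m+\sqrt d)/\sqrt n$ because the $i$th block of $\BU^{(i)}$ has norm $O(1/\sqrt n)$. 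Both of these gains are invisible to the undirected bound $\|\BC-\BC^{(i)}\|_F$. Your approach can be salvaged by replacing ``the low Frobenius norm of the perturbation'' with $\|(\BC-\BC^{(i)})\BU^{(i)}\|_F$ and expanding it as you did for $\BC-\BC^{(i)}$, noting $(\BD^{(i)})^\top\BU^{(i)}=\BV^{(i)}\BSigma^{(i)}$ and $\BE_i^\top\BU^{(i)}=\BU_i^{(i)}$ so that the same two directional smallness facts appear — but as written the argument is too coarse and would fail under the stated noise level.
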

We leave the proof of the initialization step in Section~\ref{ss:init}.
Next, we will argue that in the first $T$ steps (for simplicity, we set $T= n$), the auxiliary sequence $\BS^{i,t}$ is close to $\BS^t$ for all $1\leq i\leq n$, and moreover $\BS^t$ stays in $\mathcal{N}_{\eps}\cap \mathcal{N}_{\xi,\infty}$. 
We need the following lemma to achieve this goal.
\begin{lemma}
For $0\leq t\leq T$ and $1\leq i\leq n$, we have
\begin{align}
\|\BW_i (\BW^{(i)})^{\top}\BS^{i,t}\| & \leq (2\sqrt{d} + \sqrt{2\gamma\log n}) \| (\BW^{(i)})^{\top}\BS^{i,t} \| \nonumber \\
& \leq \sqrt{n}(2\sqrt{d} + \sqrt{2\gamma\log n})(\sqrt{nd}+\sqrt{m}+\sqrt{2\gamma\log n}), \label{eq:ws1} \\
\|\BW_i (\BW^{(i)})^{\top}\BS^{i,t}\|_F & \leq \sqrt{nd}(2\sqrt{d} + \sqrt{2\gamma\log n})(\sqrt{nd}+\sqrt{m}+\sqrt{2\gamma\log n}), \nonumber \\
& \leq \sqrt{d}\left(\sqrt{nd} + \sqrt{m} + \sqrt{2\gamma n\log n}\right)^2 \label{eq:ws2}
\end{align}
with probability at least $1 - O(T\cdot n^{-\gamma+1}).$
\end{lemma}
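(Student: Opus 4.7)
The plan is to exploit the \emph{statistical independence} of $\BW_i$ from $\BW^{(i)}$---and therefore from $\BS^{i,t}$---which is the whole point of introducing the leave-one-out sequence. By the construction in~\eqref{def:Wi}, $\BW^{(i)}$ zeroes out its $i$-th block, so the entries of $\BW^{(i)}$ and those of $\BW_i$ are independent Gaussian random variables. Since $\BS^{i,t}$ is produced by the generalized power method applied to $\BC^{(i)} = \BD^{(i)}(\BD^{(i)})^{\top}$ with $\BD^{(i)} = \BZ\BA + \sigma\BW^{(i)}$ (together with a spectral initialization also computed from $\BD^{(i)}$), it is a deterministic function of $\BW^{(i)}$ alone. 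Hence $Y := (\BW^{(i)})^{\top}\BS^{i,t}\in\RR^{m\times d}$ is independent of $\BW_i\in\RR^{d\times m}$.

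With this decoupling in hand, I would condition on $Y$ and invoke Corollary~\ref{cor:gauss}: because $\BW_i$ is a $d\times m$ standard Gaussian matrix independent of $Y$, and $Y$ has rank at most $d$,
\[
\|\BW_i Y\| \;\leq\; \|Y\|\,(\sqrt{d}+\sqrt{d}+t)
\]
with probability at least $1 - 2e^{-t^2/2}$. Choosing $t = \sqrt{2\gamma\log n}$ and union-bounding over $1\leq i\leq n$ and $0\leq t\leq T$ gives the first line of~\eqref{eq:ws1} with total failure probability $O(Tn^{-\gamma+1})$. The elementary bound $\|Y\|\leq \|\BW^{(i)}\|\,\|\BS^{i,t}\|$ then delivers the second line: one has $\|\BS^{i,t}\|=\sqrt{n}$ because $\BS^{i,t}\in\Od(d)^{\otimes n}$, and $\|\BW^{(i)}\|\leq\sqrt{nd}+\sqrt{m}+\sqrt{2\gamma\log n}$ by Lemma~\ref{lem:gauss} applied under the same union bound constant. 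For the Frobenius statement~\eqref{eq:ws2}, I would use $\|\BW_i Y\|_F\leq\sqrt{d}\,\|\BW_i Y\|$ (since $\BW_i Y\in\RR^{d\times d}$) to obtain the first inequality, and then the cleaner second form follows by simple monotonicity such as $\sqrt{2\gamma\log n}\leq\sqrt{2\gamma n\log n}$.

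The only genuinely non-routine step is verifying the independence: one has to check carefully that every piece of the construction of $\BS^{i,t}$---the matrix $\BC^{(i)}$, its top-$d$ singular vectors used for spectral initialization, and each subsequent generalized power iterate---depends on the randomness only through $\BW^{(i)}$ and never through $\BW_i$. Once that decoupling is granted, the remainder is standard Gaussian concentration (Lemma~\ref{lem:gauss} and Corollary~\ref{cor:gauss}) plus union bounds over the time index $t$ (contributing a factor $T+1$) and the leave-one-out index $i$ (contributing a factor $n$), which together produce exactly the $O(T\cdot n^{-\gamma+1})$ failure probability stated in the lemma.
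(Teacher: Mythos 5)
Your proposal is correct and follows exactly the route the paper intends: the paper's own proof is the single sentence ``The inequalities follow directly from applying Corollary~\ref{cor:gauss} and the independence between $\BW_i$ and $(\BW^{(i)})^{\top}\BS^{i,t}$,'' which is precisely your argument spelled out. Your steps---verifying that $\BS^{i,t}$ is a deterministic function of $\BW^{(i)}$ alone, applying Corollary~\ref{cor:gauss} with rank $\leq d$ and $t=\sqrt{2\gamma\log n}$, using $\|\BS^{i,t}\|=\sqrt{n}$ and the operator-norm bound on $\BW^{(i)}$, converting to Frobenius norm via $\|\cdot\|_F\leq\sqrt{d}\|\cdot\|$ on a $d\times d$ matrix, and union-bounding over $i$ and $t$---are exactly what is needed and match the paper.
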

The inequalities~\eqref{eq:ws1} and~\eqref{eq:ws2} follow directly from applying Corollary~\ref{cor:gauss} and the independence between $\BW_i$ and $(\BW^{(i)})^{\top}\BS^{i,t}$.

Now we are ready to show that $\BS^t\in\mathcal{N}_{\eps}\cap \mathcal{N}_{\xi,\infty}$ in the first $T$ steps with high probability by using the leave-one-out technique.

\begin{lemma}\label{lem:key1}
Conditioned on~\eqref{eq:ws2}, we have
\begin{align}
d_F(\BS^t, \BS^{i,t}) & \leq \eps\sqrt{d}, ~~\forall 1\leq i\leq n,  \label{eq:basin1} \\
\|\BW_i\BW^{\top}\BS^t\|_F &\leq \frac{\xi}{2} \sqrt{d}(\sqrt{nd} + \sqrt{m}+\sqrt{2\gamma n\log n})^2, \label{eq:basin2} \\
d_F(\BS^t, \BZ) & \leq \frac{\eps\sqrt{nd}}{2}, \label{eq:basin3} 
\end{align}
for $1\leq t\leq T$ with high probability provided that
\begin{equation}\label{cond:sigma_lem}
\sigma \leq \frac{\delta\|\BA\|}{\kappa^2}\cdot \frac{\sqrt{n}}{\sqrt{nd} + \sqrt{m} + \sqrt{2\gamma n\log n}}, \qquad \delta \leq \frac{(1-\eps^2d)\eps}{12}.
\end{equation}
Given $\BS^t\in \mathcal{N}_{\eps}\cap \mathcal{N}_{\xi,\infty}$, Proposition~\ref{prop:con} implies
\[
d_F(\BS^{k+1}, \BS^t)\leq \frac{1}{2}d_F(\BS^{t}, \BS^{t-1})\leq 2^{-t}d_F(\BS^1,\BS^0)
\]
for $1\leq t\leq T$.
\end{lemma}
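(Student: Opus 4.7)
The plan is to prove \eqref{eq:basin1}, \eqref{eq:basin2}, and \eqref{eq:basin3} simultaneously by induction on $t$, with base case $t=0$ supplied by Lemma~\ref{lem:init}. These three estimates are intertwined: \eqref{eq:basin2} and \eqref{eq:basin3} together say that $\BS^t\in\mathcal{N}_\eps\cap\mathcal{N}_{\xi,\infty}$, which activates Proposition~\ref{prop:con} and makes $\mathcal{T}$ a $\tfrac12$-contraction at $\BS^t$, while \eqref{eq:basin1} transfers quantitative control between the main iterate and each leave-one-out iterate. An analogous statement holds for the leave-one-out operator $\mathcal{T}^{(i)}$ (defined analogously using $\BC^{(i)}$), since $\BW^{(i)}$ still obeys the Gaussian bounds \eqref{eq:gauss3a}--\eqref{eq:gauss3d}; hence, along the inductive step, I may treat both chains as evolving inside the basin of attraction.

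For the inductive step at time $t+1$, I would first establish \eqref{eq:basin1} via
\[
d_F(\BS^{t+1},\BS^{i,t+1})\le d_F(\mathcal{T}\BS^t,\mathcal{T}\BS^{i,t})+d_F(\mathcal{T}\BS^{i,t},\mathcal{T}^{(i)}\BS^{i,t}).
\]
The first piece is bounded by $\tfrac12 d_F(\BS^t,\BS^{i,t})\le\tfrac{\eps\sqrt d}{2}$ by contraction and the induction hypothesis. The second piece is a one-block perturbation: $(\BC-\BC^{(i)})\BS^{i,t}$ only feels the $i$-th block column and decomposes into $\sigma\BW_i\BA^\top$-, $\sigma\BA\BW_i^\top$-, and $\sigma^2\BW_i\BW^\top$-type terms, each of which is controlled by \eqref{eq:gauss3a}--\eqref{eq:gauss3d}. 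Dividing by $n\|\BA\|^2$ and invoking Lemma~\ref{lem:Pncon} together with the lower bound $\sigma_{\min}([\mathcal{L}\BS^{i,t}]_j),\sigma_{\min}([\mathcal{L}^{(i)}\BS^{i,t}]_j)\gtrsim 1/\kappa^2$ from Proposition~\ref{prop:con} yields a second piece no larger than $\tfrac{\eps\sqrt d}{2}$ under \eqref{cond:sigma_lem}.

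Claim \eqref{eq:basin3} then follows from the contraction: since $\BS^s\in\mathcal{N}_\eps\cap\mathcal{N}_{\xi,\infty}$ for $s\le t$, Proposition~\ref{prop:con} gives $d_F(\BS^{s+1},\BS^s)\le 2^{-s}d_F(\BS^1,\BS^0)$, so the iterates are Cauchy with total travel at most $2d_F(\BS^1,\BS^0)$. A single-step estimate $d_F(\BS^1,\BS^0)\lesssim \sigma\|\BW\|/(\sqrt n\|\BA\|)$ (obtained by comparing $\mathcal{T}\BS^0$ to $\BZ$ through \eqref{eq:init1} and \eqref{eq:normdelta}) ensures that, under \eqref{cond:sigma_lem}, this cumulative travel is a small fraction of $\eps\sqrt{nd}/2$, so $d_F(\BS^{t+1},\BZ)\le d_F(\BS^0,\BZ)+2d_F(\BS^1,\BS^0)\le\tfrac{\eps\sqrt{nd}}{2}$.

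The critical step is \eqref{eq:basin2}, where leave-one-out is indispensable. Writing $\BW^\top\BS^{t+1}=(\BW^{(i)})^\top\BS^{t+1}+\BW_i^\top\BS^{t+1}_i$ and inserting $\BS^{i,t+1}\BQ^\star$ with $\BQ^\star=\PP\bigl((\BS^{i,t+1})^\top\BS^{t+1}\bigr)$, the rotational invariance of $\|\BW_i\BW^\top\BS^{t+1}\|_F$ gives
\[
\|\BW_i\BW^\top\BS^{t+1}\|_F\le\|\BW_i(\BW^{(i)})^\top\BS^{i,t+1}\|_F+\|\BW_i\|\|\BW^{(i)}\|\,d_F(\BS^{t+1},\BS^{i,t+1})+\|\BW_i\|^2.
\]
The first term is at most $\sqrt d(\sqrt{nd}+\sqrt m+\sqrt{2\gamma n\log n})^2$ by the conditioning on \eqref{eq:ws2}; the second is controlled by the already-established \eqref{eq:basin1} at step $t+1$ combined with \eqref{eq:gauss3a} and \eqref{eq:gauss3c}; the third uses only \eqref{eq:gauss3a}. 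The choice $\eps=1/(32\kappa^2\sqrt d)$ is calibrated so that the three contributions sum to at most $\tfrac{\xi}{2}\sqrt d(\sqrt{nd}+\sqrt m+\sqrt{2\gamma n\log n})^2$, closing the induction. The chief obstacle is the statistical dependence of $\BS^{i,t+1}$ on $\BW^{(i)}$ combined with its independence of the block $\BW_i$: it is precisely this decoupling that lets \eqref{eq:ws2} do its job, and without it the naive bound \eqref{eq:Nxi} loses a factor of $\sqrt m$, which would be fatal for the near-optimal SNR scaling.
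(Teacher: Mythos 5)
Your plan for \eqref{eq:basin1} and \eqref{eq:basin2} matches the paper's (the paper writes the first bound as $d_F(\PP_n(\mathcal L\BS^t),\PP_n(\mathcal L^{(i)}\BS^{i,t}))\le\frac{\kappa^2}{1-\eps^2 d}d_F(\mathcal L\BS^t,\mathcal L^{(i)}\BS^{i,t})$ and then splits inside, while you split at the $\mathcal T$-level and then apply Lemma~\ref{lem:Pncon} to the second piece only, but these are equivalent); and your \eqref{eq:basin2} step is the paper's Step~3 up to replacing $\|\BW_i\|_F$ by $\|\BW_i\|$, which is harmless. You also need, and correctly flag, that $\BS^{i,t}\in\mathcal N_\eps\cap\mathcal N_{\xi,\infty}$ (the paper's Step~1); that is a quick consequence of the inductive hypotheses \eqref{eq:basin1}--\eqref{eq:basin3} and should be carried out explicitly.

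The genuine gap is in your argument for \eqref{eq:basin3}. You propose to control $d_F(\BS^{t+1},\BZ)$ by the cumulative travel of the Cauchy sequence, $d_F(\BS^{t+1},\BZ)\le d_F(\BS^0,\BZ)+2d_F(\BS^1,\BS^0)$, together with a one-step estimate $d_F(\BS^1,\BS^0)\lesssim\sigma\|\BW\|/(\sqrt n\|\BA\|)$. That one-step bound is not justified and is in fact false at that scale: comparing $\mathcal T\BS^0$ to $\BZ$ (which is all \eqref{eq:init1} and \eqref{eq:normdelta} give you) yields $d_F(\BS^1,\BZ)$, a quantity of size comparable to $\eps\sqrt{nd}$; hence $d_F(\BS^1,\BS^0)\le d_F(\BS^1,\BZ)+d_F(\BS^0,\BZ)$ is of order $\eps\sqrt{nd}$, not $\sigma\|\BW\|/(\sqrt n\|\BA\|)\approx\eps/\kappa^2$. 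With $d_F(\BS^0,\BZ)$ already as large as $\eps\sqrt{nd}/2$ from Lemma~\ref{lem:init}, the sum $d_F(\BS^0,\BZ)+2d_F(\BS^1,\BS^0)$ then exceeds $\eps\sqrt{nd}$ and the inductive claim does not close. The contraction alone cannot do this: it controls how fast the iterates approach the (unknown) fixed point $\BS^\infty$, not how far $\BS^\infty$ sits from $\BZ$, so ``total travel from $\BS^0$'' is not a priori small relative to the radius of $\mathcal N_\eps$. What is needed instead is the paper's direct invariance argument (its Step~4): write $\BZ=\PP_n\bigl(\BZ\BA\BA^\top/\|\BA\|^2\bigr)$, compare $\BS^{t+1}=\PP_n(\mathcal L\BS^t)$ to this via Lemma~\ref{lem:Pcon}, and bound $\bigl\|\mathcal L\BS^t-\BZ\BA\BA^\top\BQ/\|\BA\|^2\bigr\|_F\le n^{-1/2}\|\BZ^\top\BS^t-n\BQ\|_F+\sigma\sqrt{d/n}\,\|\BDelta\|/\|\BA\|^2$, whose first term is $O(\eps^2\sqrt n\,d)=O(\eps\sqrt d\cdot\eps\sqrt{nd})$ and is hence contracted by the factor $\kappa^2\eps\sqrt d\le 1/32$. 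That is where the mechanism that keeps $\BS^{t+1}$ inside $\mathcal N_\eps$ actually lives, and it is not recoverable from the Cauchy-travel bookkeeping.
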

\begin{remark}
Note that the condition on $\sigma$ in~\eqref{cond:sigma_lem} is equivalent to requiring
\[
\sigma\leq \frac{(1-\eps^2d)\eps\sqrt{d}\|\BA\|}{12\kappa^2} \cdot \frac{\sqrt{n}}{\sqrt{d}(\sqrt{nd} + \sqrt{m} + \sqrt{2\gamma n\log n})}
\]
and therefore~\eqref{cond:sigma} since $\eps^2d \leq 1/(32\kappa^2)$.
\end{remark}

\begin{proof}

We prove~\eqref{eq:basin1}-\eqref{eq:basin3} by induction. Note that the statements hold automatically for $t=0$. Now we assume~\eqref{eq:basin1}-\eqref{eq:basin3} hold for $t\leq k$, we are about to prove the same hold for $t=k+1$.

\vskip0.25cm

\noindent{\bf Step 1: Proof of $\BS^{i,k}\in \mathcal{N}_{\eps}\cap\mathcal{N}_{\xi,\infty}$ for $1\leq i\leq n.$}

Combining~\eqref{eq:basin3} with~\eqref{eq:basin1} immediately implies
\[
d_F(\BS^{i,k},\BZ) \leq  d_F(\BS^k,\BZ) + d_F(\BS^k,\BS^{i,k}) \leq \frac{\eps\sqrt{nd}}{2} + \eps\sqrt{d} \leq \eps\sqrt{nd}.
\]
Let $\BQ = \PP((\BS^{i,k})^{\top}\BS^k)$ and $d_F(\BS^k, \BS^{i,k}) = \|\BS^k - \BS^{i,k}\BQ\|_F.$ Then 
\begin{align*}
\|\BW_i\BW^{\top}\BS^{i,k}\|_F & \leq \|\BW_i\BW^{\top}\BS^k\|_F + \|\BW_i\BW^{\top}(\BS^k - \BS^{i,k}\BQ)\|_F\\
& \leq  \frac{1}{2}\xi \sqrt{d} (\sqrt{nd} + \sqrt{m} + \sqrt{2\gamma n\log n})^2 + \|\BW_i\|\|\BW\| d_F(\BS^k,\BS^{i,k}) \\
& \leq \frac{1}{2} \xi \sqrt{d} (\sqrt{nd} + \sqrt{m} + \sqrt{2\gamma n\log n})^2  \\
& \qquad +(\sqrt{d} + \sqrt{m}+\sqrt{2\gamma \log n}) (\sqrt{nd} + \sqrt{m}+\sqrt{2\gamma \log n}) \eps\sqrt{d} \\
& = \left( \frac{\xi}{2} + \eps\right)\sqrt{d} (\sqrt{nd} + \sqrt{m} + \sqrt{2\gamma n\log n})^2 \\
&  \leq \xi\sqrt{d} (\sqrt{nd} + \sqrt{m} + \sqrt{2\gamma n\log n})^2
\end{align*}
if $\xi \geq 2\eps.$ Therefore, Proposition~\ref{prop:con} implies 
\[
\sigma_{\min}\left( [{\cal L}\BS^{i,k}]_{\ell}\right) \geq \frac{1-\eps^2d}{\kappa^2}, \quad \forall 1\leq \ell\leq n.
\]

\noindent{\bf Step 2: Proof of~\eqref{eq:basin1}.}

Define
\[
{\cal L}^{(i)} = \frac{\BC^{(i)}}{n\|\BA\|^2}
\]
where $\BC^{(i)}$ is defined in~\eqref{def:CDi}.
Then we have $\BS^{i,k+1} = \PP_n({\cal L}^{(i)} \BS^{i,k} )$

Now applying Lemma~\ref{lem:Pncon} gives
\begin{align*}
d_F(\BS^{k+1}, \BS^{i,k+1}) & \leq \frac{\kappa^2}{1-\eps^2d}\cdot d_F({\cal L}\BS^k, {\cal L}^{(i)} \BS^{i,k})
\end{align*}
since both $\BS^k$ and $\BS^{i,k}$ belong to $\mathcal{N}_{\eps}\cap\mathcal{N}_{\xi,\infty}$, and $\sigma_{\min}([{\cal L}\BS^{i,k}]_{\ell})$ and $\sigma_{\min}([{\cal L}^{(i)}\BS^{i,k}]_{\ell})$ are at least $(1-\eps^2d)/\kappa^2$ for $1\leq \ell\leq n.$
To control $d_F(\BS^{k+1},\BS^{i,k+1})$, it suffices to estimate $d_F({\cal L}\BS^k, {\cal L}^{(i)} \BS^{i,k})$:
\begin{align*}
d_F({\cal L}\BS^k, {\cal L}^{(i)} \BS^{i,k}) & \leq d_F({\cal L}\BS^k, {\cal L} \BS^{i,k})  + d_F({\cal L}\BS^{i,k}, {\cal L}^{(i)} \BS^{i,k}) \\
& \leq \frac{1-\eps^2d}{2\kappa^2}\cdot d_F(\BS^k,\BS^{i,k}) + d_F({\cal L}\BS^{i,k}, {\cal L}^{(i)} \BS^{i,k}).
\end{align*}
where the first term is bounded by Lemma~\ref{lem:L}.
For the second term above, we have 
\begin{align*}
d_F({\cal L}\BS^{i,k}, {\cal L}^{(i)} \BS^{i,k}) 
& \leq \| ({\cal L}- {\cal L}^{(i)}) \BS^{i,k}  \|_F \\
& = \frac{\sigma}{n\|\BA\|^2} \left(  \BZ\BA (\BW - \BW^{(i)})^{\top} + (\BW - \BW^{(i)})(\BZ\BA)^{\top}  \right) \BS^{i,k} \\
& \qquad + \frac{\sigma^2}{n\|\BA\|^2} \left(  \BW\BW^{\top} - \BW^{(i)} (\BW^{(i)})^{\top} \right) \BS^{i,k}
\end{align*}
where
\[
 \BW\BW^{\top} - \BW^{(i)} (\BW^{(i)})^{\top} = \BW (\BW - \BW^{(i)})^{\top} + (\BW - \BW^{(i)})(\BW^{(i)})^{\top}.
\]

Now we look at each block of $({\cal L}- {\cal L}^{(i)}) \BS^{i,k}$:
\begin{align*}
& n\|\BA\|^2[({\cal L}- {\cal L}^{(i)}) \BS^{i,k}]_{\ell}  \\
& =
\begin{cases}
\sigma\BA \BW_i^{\top}\BS^{i,k}_i + \sigma^2 \BW_{\ell}\BW^{\top}_i\BS^{i,k}_i, & \ell\neq i, \\
\sigma\BA \BW_i^{\top}\BS^{i,k}_i +  \sigma^2 \BW_i \BW_i^{\top}\BS^{i,k}_i +\sigma\BW_i \BA^{\top} \BZ^{\top}\BS^{i,k} + \sigma^2 \BW_i (\BW^{(i)})^{\top}\BS^{i,k}, & \ell = i.
\end{cases}
\end{align*}

As a result, we have
\begin{align*}
n\|\BA\|^2\|({\cal L}- {\cal L}^{(i)}) \BS^{i,k} \|_F 
& \leq \sigma \|\BZ\BA\BW^{\top}_i \BS^{i,k}_i\|_F +\sigma \|\BW_i \BA^{\top} \BZ^{\top}\BS^{i,k}\|_F  \\
& \quad + \sigma^2 \| \BW\BW_i^{\top}\BS_i^{i,k}\|_F + \sigma^2 \|\BW_i (\BW^{(i)})^{\top}\BS^{i,k}\|_F \\
& \leq 2\sigma n\sqrt{d}(2\sqrt{d}+\sqrt{2\gamma\log n})\|\BA\|  + 2\sigma^2\sqrt{d} \left( \sqrt{nd} + \sqrt{m} + \sqrt{2\gamma n\log n}\right)^2 \\
& \leq 2n\sqrt{d}(2\sqrt{d}+\sqrt{2\gamma\log n})\|\BA\| \cdot \frac{\delta\|\BA\|}{\kappa^2}\cdot \frac{\sqrt{n}}{\sqrt{nd} + \sqrt{m}+ \sqrt{2\gamma n\log n}} \\
& \qquad + \frac{2\delta^2n\sqrt{d}\|\BA\|^2}{\kappa^4} \leq \frac{6\delta n\sqrt{d} \|\BA\|^2}{\kappa^2}
\end{align*}
where $\delta < 1$ and
\[
\sigma \leq \frac{\delta\|\BA\|}{\kappa^2}\cdot \frac{\sqrt{n}}{\sqrt{nd} + \sqrt{m}+ \sqrt{2\gamma n\log n}}
\]
hold under the assumption.

Here Lemma~\ref{lem:gauss} and Corollary~\ref{cor:gauss} indicate that the following estimates hold with probability at least $1-O(T\cdot n^{-\gamma+1})$:
\begin{align*}
\|\BZ\BA\BW^{\top}_i \BS^{i,k}_i\| & \leq \sqrt{n}(2\sqrt{d}+\sqrt{2\gamma\log n})\|\BA\|, \\
\|\BW_i \BA^{\top} \BZ^{\top}\BS^{i,k}\| & \leq n(2\sqrt{d} + \sqrt{2\gamma \log n})\|\BA\|, \\
\| \BW\BW_i^{\top}\BS_i^{i,k}\| & \leq (\sqrt{nd} + \sqrt{m} + \sqrt{2\gamma \log n})(\sqrt{m} + \sqrt{d} + \sqrt{2\gamma \log n}),  \\
\|\BW_i (\BW^{(i)})^{\top}\BS^{i,k}\| & \leq \sqrt{n}(2\sqrt{d}+\sqrt{2\gamma\log n})(\sqrt{m} + \sqrt{d} + \sqrt{2\gamma \log n}),
\end{align*}
where $\BW_i$ and $(\BW^{(i)})^{\top}\BS^{i,k}\in\RR^{m\times d}$ are independent Gaussian matrices, and $\|(\BW^{(i)})^{\top}\BS^{i,k}\| \leq \sqrt{n}(\sqrt{m}+\sqrt{d} + \sqrt{2\gamma\log n})$.

Therefore, we have
\[
\|({\cal L}- {\cal L}^{(i)}) \BS^{i,k} \|_F  \leq \frac{1}{n\|\BA\|^2}\cdot  \frac{6\delta n\sqrt{d} \|\BA\|^2}{\kappa^2} = \frac{6\delta \sqrt{d}}{\kappa^2}.
\]
Under
\[
\delta \leq \frac{(1-\eps^2d)\eps}{12},
\]
we have
\begin{align*}
d_F({\cal L}\BS^k, {\cal L}^{(i)} \BS^{i,k}) & \leq \frac{1-\eps^2d }{2\kappa^2}d_F(\BS^k,\BS^{i,k}) + d_F({\cal L}\BS^{i,k}, {\cal L}^{(i)} \BS^{i,k}) \\
& \leq \frac{(1-\eps^2d)\eps\sqrt{d}}{2\kappa^2} + \frac{6\delta\sqrt{d}}{\kappa^2} \\
&  \leq \frac{1-\eps^2d}{\kappa^2}\cdot \eps\sqrt{d}.
\end{align*}
Then we have
\[
d_F(\BS^{k+1}, \BS^{i,k+1}) \leq \frac{\kappa^2}{1-\eps^2d} d_F({\cal L}\BS^k, {\cal L}^{(i)} \BS^{i,k}) \leq \eps\sqrt{d}.
\]

\vskip0.25cm

\noindent {\bf Step 3: Proof of~\eqref{eq:basin2}, i.e., $\BS^{k+1}\in \mathcal{N}_{\xi,\infty}.$}

Next we proceed to show that $\BS^{k+1}\in \mathcal{N}_{\xi,\infty}$ which follows from~\eqref{eq:basin1} and~\eqref{eq:ws2}. Let $\BQ = \PP((\BS^{i,k+1})^{\top}\BS^{k+1})$ be the orthogonal matrix such that $d_F(\BS^{k+1}, \BS^{i,k+1}) = \| \BS^{k+1} - \BS^{i,k+1}\BQ \|_F.$
 For any $1\leq i\leq n$, we have
\begin{align*}
\|\BW_i \BW^{\top} \BS^{k+1}\|_F & \leq \|\BW_i (\BW^{\top} \BS^{k+1} - (\BW^{(i)})^{\top} \BS^{i,k+1}\BQ)\|_F  + \|\BW_i (\BW^{(i)})^{\top}\BS^{i,k+1}\|_F  
\end{align*}
For the second term above,~\eqref{eq:ws2} gives $ \|\BW_i (\BW^{(i)})^{\top}\BS^{i,k+1}\|_F  \leq \sqrt{d}(\sqrt{nd} + \sqrt{m} + \sqrt{2\gamma n\log n})^2.$ We will focus on the first term which can be decomposed into
\begin{align*}
\|\BW^{\top} \BS^{k+1} - (\BW^{(i)})^{\top} \BS^{i,k+1}\BQ \|_F& \leq\|\BW^{\top} ( \BS^{k+1} -  \BS^{i,k+1}\BQ)\|_F + \|(\BW-\BW^{(i)})^{\top} \BS^{i,k+1}\BQ\|_F \\
& \leq \|\BW\| \cdot d_F(\BS^{k+1},\BS^{i,k+1}) + \|\BW_i\|_F
\end{align*}
Then
\begin{align*}
& \|\BW_i (\BW^{\top} \BS^{k+1} - (\BW^{(i)})^{\top} \BS^{i,k+1}\BQ)\|_F \\
& \quad \leq \|\BW_i\| \left( \|\BW\| d_F(\BS^{k+1},\BS^{i,k+1}) + \|\BW_i\|_F \right) \\
&\quad \leq \sqrt{d} (\sqrt{m} + \sqrt{d} + \sqrt{2\gamma \log n}) \left( (\sqrt{nd} + \sqrt{m}+\sqrt{2\gamma \log n}) \eps + (\sqrt{m}+\sqrt{d}+\sqrt{2\gamma \log n}) \right) \\
&\quad \leq (\eps+1)\sqrt{d} (\sqrt{nd} + \sqrt{m}+\sqrt{2\gamma \log n})^2
\end{align*}
where $d_F(\BS^{k+1},\BS^{i,k+1})\leq \eps\sqrt{d}$ follows from~\eqref{eq:basin1}.

Now combining these estimates above leads to
\begin{align*}
\|\BW_i \BW^{\top} \BS^{k+1}\|_F & \leq (\eps+1)\sqrt{d} (\sqrt{nd} + \sqrt{m}+\sqrt{2\gamma\log n})^2 + \sqrt{d}(\sqrt{nd} + \sqrt{m}+\sqrt{2\gamma n\log n})^2 \\
& \leq  (\eps+2) \sqrt{d} (\sqrt{nd} + \sqrt{m}+\sqrt{2\gamma n\log n})^2 \\
& \leq \frac{\xi}{2} \sqrt{d} (\sqrt{nd} + \sqrt{m}+\sqrt{2\gamma n\log n})^2
\end{align*}
where $\xi =6 > 2(\eps+2)$.

\vskip0.25cm

\noindent{\bf Step 4: Proof of~\eqref{eq:basin3}, i.e., $\BS^{k+1}\in\mathcal{N}_{\eps}.$}

Finally, we show that $d_F(\BS^{k+1},\BZ)\leq \eps\sqrt{nd}.$ 
Now we invoke Lemma~\ref{lem:Pcon} to estimate $d_F(\BS^{k+1},\BZ)$.
Note that  for $\BA\BA^{\top}\succ 0$, we have
\[
\PP_n\left(\frac{ \BZ\BA\BA^{\top}}{\|\BA\|^2}\right)  = \BZ, \qquad
\sigma_{\min}\left(\frac{[\BZ\BA\BA^{\top}]_i}{\|\BA\|^2} \right) = \sigma_{\min}\left(\frac{\BA\BA^{\top}}{\|\BA\|^2} \right) \geq \frac{1}{\kappa^2}.
\]
Also we have $\sigma_{\min}([{\cal L}\BS^k]_i) \geq (1-\eps^2d)/\kappa^2$, which follows from $\BS^k\in\mathcal{N}_{\eps}\cap\mathcal{N}_{\xi,\infty}$ and~\eqref{eq:sigmamin}. Then
\begin{align*}
d_F(\BS^{k+1},\BZ) 
& = d_F\left(\PP_n({\cal L}\BS^k), \PP_n\left(\frac{\BZ\BA\BA^{\top}}{\|\BA\|^2}\right)\right)  \leq \frac{\kappa^2}{1-\eps^2d/2} \cdot d_F\left({\cal L}\BS^k,\frac{\BZ\BA\BA^{\top}}{\|\BA\|^2}\right).
\end{align*}
Let $\BQ = \PP(\BZ^{\top}\BS^k)$ and we have
\begin{align*}
d_F\left({\cal L}\BS^k,\frac{\BZ\BA\BA^{\top}}{\|\BA\|^2}\right) & \leq \left\| {\cal L}\BS^k - \frac{\BZ\BA\BA^{\top} \BQ}{\|\BA\|^2} \right\|_F \\
& = \frac{1}{n\|\BA\|^2}\left\| \BZ\BA\BA^{\top}\left( \BZ^{\top}\BS^k - n\BQ\right)+ \sigma\BDelta \BS^k \right\|_F \\
(\|\BS^k\|_F=\sqrt{nd})\qquad & \leq \frac{1}{\sqrt{n}}\|\BZ^{\top}\BS^k - n\BQ\|_F + \sigma\sqrt{\frac{d}{n}} \frac{\|\BDelta\|}{\|\BA\|^2}.
\end{align*}

Note that
\[
\|\BZ^{\top}\BS^k-n\BQ\|_F \leq d_F^2(\BS^t,\BZ)\leq \eps^2nd, \quad \|\BDelta\| \leq 3\sqrt{n}\|\BA\|\|\BW\|
\]
which follow from~\eqref{eq:basin3} and~\eqref{eq:normdelta}.
We have
\begin{align*}
d_F(\BS^{k+1},\BZ) & \leq \frac{\kappa^2}{1-\eps^2d/2}\cdot d_F\left({\cal L}\BS^k,\frac{\BZ\BA\BA^{\top}}{\|\BA\|^2}\right) \\
& \leq \frac{\kappa^2}{1-\eps^2d/2}\cdot \left( \eps^2\sqrt{n}d + \sigma \sqrt{\frac{d}{n}}\cdot\frac{3\sqrt{n}\|\BA\|\|\BW\|}{\|\BA\|^2} \right) \\
& \leq \frac{\kappa^2}{1-\eps^2d/2}\left( \eps\sqrt{d} +   \frac{3\sigma\|\BW\|}{\eps\sqrt{n}\|\BA\|}\right)   \eps\sqrt{nd}.
\end{align*} 
Note that $\|\BW\|\leq \sqrt{nd} + \sqrt{m} + \sqrt{2\gamma \log n}$. 
Then it holds
\[
\eps\sqrt{d} +  \frac{3\sigma\|\BW\|}{\eps\sqrt{n}\|\BA\|} \leq \frac{1-\eps^2d/2}{\kappa^2} \Longrightarrow d_F(\BS^{k+1},\BZ) \leq \eps\sqrt{nd}.
\]
provided that $\eps= 1/(32\kappa^2\sqrt{d})$ and 
\[
\sigma \leq \frac{\eps\sqrt{d}}{6\kappa^2}\cdot \frac{ \sqrt{n}\|\BA\|}{\sqrt{d}(\sqrt{nd} + \sqrt{m}+\sqrt{2\gamma n\log n})}\Longrightarrow  \frac{3\sigma\|\BW\|}{\eps\sqrt{n}\|\BA\|} \leq \frac{1}{2\kappa^2}.
\]
Then
\[
d_F(\BS^{k+1},\BZ) \leq \frac{\kappa^2}{1-\eps^2d/2}\cdot \frac{1-\eps^2d/2}{\kappa^2}\eps\sqrt{nd} = \eps\sqrt{nd}.
\]
\end{proof}

\begin{lemma}\label{lem:convt}
Conditioned on the fact that for any $0\leq t\leq n$, 
\begin{align*}
\|\BW_i\BW^{\top}\BS^t\|_F &\leq \frac{\xi}{2} \sqrt{d}(\sqrt{nd} + \sqrt{m}+\sqrt{2\gamma n\log n})^2, \\
d_F(\BS^t, \BS^{i,t}) & \leq \eps\sqrt{d}, ~~\forall 1\leq i\leq n,\\
d_F(\BS^t, \BZ) & \leq \frac{\eps\sqrt{nd}}{2},
\end{align*}
we have $\BS^t\in\mathcal{N}_{\eps}\cap\mathcal{N}_{\xi,\infty}$ and
\[
d_F(\BS^t, \BS^{t-1})\leq \frac{1}{2}d_F(\BS^{t-1}, \BS^{t-2})\leq 2^{-t+1}d_F(\BS^1,\BS^0)
\]
for all $t\geq 0$.
\end{lemma}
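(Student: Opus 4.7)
The approach is to bootstrap from the hypothesis, which already places the iterates $\BS^t$ for $0 \leq t \leq n$ strictly inside $\mathcal{N}_{\eps}\cap\mathcal{N}_{\xi,\infty}$ (the hypothesis uses $\eps/2$ and $\xi/2$, leaving a factor-of-$2$ cushion), and then to extend this membership to all $t \geq 0$ by a tail-of-geometric-series argument. For $0\leq t\leq n$, inclusion in the basin is immediate; Proposition~\ref{prop:con} applied to the pair $\BS^{t-1},\BS^{t-2}\in \mathcal{N}_{\eps}\cap\mathcal{N}_{\xi,\infty}$ yields $d_F(\BS^t,\BS^{t-1})\leq \tfrac{1}{2} d_F(\BS^{t-1},\BS^{t-2})$, which telescopes into $d_F(\BS^t,\BS^{t-1})\leq 2^{-t+1}d_F(\BS^1,\BS^0)$.

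For $t>n$ I would induct. Assume $\BS^s\in\mathcal{N}_{\eps}\cap\mathcal{N}_{\xi,\infty}$ and the geometric bound $d_F(\BS^s,\BS^{s-1})\leq 2^{-s+1}d_F(\BS^1,\BS^0)$ for all $n\leq s\leq t-1$. The total displacement since step $n$ is controlled by the tail
\[
d_F(\BS^t,\BS^n)\leq \sum_{k=n+1}^{t}d_F(\BS^k,\BS^{k-1})\leq 2^{-n+1}d_F(\BS^1,\BS^0)\leq 2^{-n+2}\sqrt{nd},
\]
using the trivial bound $d_F(\BS^1,\BS^0)\leq 2\sqrt{nd}$. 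Since $\eps = 1/(32\kappa^2\sqrt{d})$, the inequality $2^{-n+2}\sqrt{nd}\leq \eps\sqrt{nd}/2$ holds once $n\gtrsim \log_2(\kappa^2\sqrt{d})$, which is a mild condition fully within the regime of Theorem~\ref{thm:main}. Then by triangle inequality on $d_F$,
\[
d_F(\BS^t,\BZ)\leq d_F(\BS^n,\BZ)+d_F(\BS^t,\BS^n)\leq \tfrac{1}{2}\eps\sqrt{nd}+2^{-n+2}\sqrt{nd}\leq \eps\sqrt{nd},
\]
so $\BS^t\in \mathcal{N}_{\eps}$.

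To verify $\BS^t\in\mathcal{N}_{\xi,\infty}$, let $\BQ=\PP((\BS^n)^{\top}\BS^t)$ realize $d_F(\BS^t,\BS^n)$. Then
\[
\|\BW_i\BW^{\top}\BS^t\|_F \leq \|\BW_i\BW^{\top}\BS^n\BQ\|_F + \|\BW_i\|\|\BW\| \, d_F(\BS^t,\BS^n).
\]
The first term is bounded by $(\xi/2)\sqrt{d}(\sqrt{nd}+\sqrt{m}+\sqrt{2\gamma n\log n})^2$ by the standing hypothesis at time $n$; using the Gaussian norm bounds~\eqref{eq:gauss3a}--\eqref{eq:gauss3b}, the second is of order $2^{-n+2}\sqrt{d}(\sqrt{nd}+\sqrt{m}+\sqrt{2\gamma n\log n})^2$, negligible against $(\xi/2)\sqrt{d}(\sqrt{nd}+\sqrt{m}+\sqrt{2\gamma n\log n})^2$ under the same mild largeness of $n$. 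Adding the two gives a quantity at most $\xi\sqrt{d}(\sqrt{nd}+\sqrt{m}+\sqrt{2\gamma n\log n})^2$, i.e.\ $\BS^t\in \mathcal{N}_{\xi,\infty}$. Now Proposition~\ref{prop:con} applied to the adjacent pair $\BS^{t-1},\BS^{t-2}\in \mathcal{N}_{\eps}\cap\mathcal{N}_{\xi,\infty}$ gives $d_F(\BS^t,\BS^{t-1})\leq \tfrac{1}{2}d_F(\BS^{t-1},\BS^{t-2})\leq 2^{-t+1}d_F(\BS^1,\BS^0)$, closing the induction.

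The only delicate point is confirming that the exponential tail $2^{-n}$ dominates the polynomial factors $\sqrt{d}$ and $\kappa$ appearing in $\eps$ and in the $\mathcal{N}_{\xi,\infty}$ threshold; this is routine and follows from a logarithmic requirement on $n$ that is harmlessly absorbed into the parameter regime already assumed in Theorem~\ref{thm:main}. In particular no new probabilistic event is invoked beyond those already conditioned on, so the conclusion holds on the same high-probability event as Lemma~\ref{lem:key1}.
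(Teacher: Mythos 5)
Your overall architecture — immediate basin membership for $0\le t\le n$, then induction on $t>n$ via a tail-of-geometric-series bound on $d_F(\BS^t,\BS^n)$, a triangle inequality against $\BZ$, and a parallel argument for the $\mathcal{N}_{\xi,\infty}$ constraint — is the same as the paper's. The one place you diverge is in bounding $d_F(\BS^1,\BS^0)$. You invoke the crude diameter bound $d_F(\BS^1,\BS^0)\le 2\sqrt{nd}$, which leaves you with a tail $d_F(\BS^t,\BS^n)\lesssim 2^{-n}\sqrt{nd}$, and you then must \emph{require} $n\gtrsim\log_2(\kappa^2\sqrt{d})$ so that this beats the basin radius $\eps\sqrt{nd}/2$ with $\eps=1/(32\kappa^2\sqrt d)$. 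That extra largeness condition on $n$ is not part of the lemma's hypotheses, nor is it stated in Theorem~\ref{thm:main}; it is a genuine (though mild) gap, not something ``harmlessly absorbed,'' and you flag it yourself as the delicate point without actually closing it.

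The paper sidesteps this entirely by noticing that the hypothesis already controls $d_F(\BS^1,\BS^0)$ much more tightly: since $d_F(\BS^0,\BZ)\le\eps\sqrt{nd}/2$ and $d_F(\BS^1,\BZ)\le\eps\sqrt{nd}/2$ are both assumed for $0\le t\le n$, the triangle inequality gives $d_F(\BS^1,\BS^0)\le\eps\sqrt{nd}$. With this extra factor of $\eps$ in hand, the tail bound becomes $d_F(\BS^{n+k+1},\BS^n)\le d_F(\BS^n,\BS^{n-1})\le 2^{-n+1}\eps\sqrt{nd}$, which is already $\le\eps\sqrt{nd}/2$ for all $n\ge 2$, with no dependence on $\kappa$ or $d$. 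The same refinement also removes the implicit dependence in your $\mathcal{N}_{\xi,\infty}$ step (there you need $2^{-n+2}\sqrt{n}\lesssim 1$, which is harmless; the binding constraint is the $\mathcal{N}_\eps$ one). So: same proof, but replace the trivial $d_F(\BS^1,\BS^0)\le 2\sqrt{nd}$ with the hypothesis-driven $d_F(\BS^1,\BS^0)\le\eps\sqrt{nd}$ and the extraneous condition on $n$ disappears.
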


\begin{proof}
We will prove the statement by induction. Suppose for $t \leq n + k$ is true, we will prove it is also true for $t= n+k+1.$ Note that this is true for $k=0$: Lemma~\ref{lem:key1} implies $d_F(\BS^{t+1},\BS^t)\leq 2^{-1}d_F(\BS^t,\BS^{t-1})$ for $0\leq t\leq n$ since $\{\BS^{t}\}_{t=0}^n\in\mathcal{N}_{\eps}\cap\mathcal{N}_{\xi,\infty}$.  

Then
\[
d_F(\BS^{n+k+1},\BS^{n+k}) \leq \frac{1}{2}d_F(\BS^{n+k}, \BS^{n+k-1})\leq 2^{-k}d_F(\BS^n, \BS^{n-1})
\]
since $\BS^t\in \mathcal{N}_{\eps}\cap\mathcal{N}_{\xi,\infty}$ for $t\leq n+k.$ 

We start with showing that $d_F(\BS^{n+k+1},\BS^n)\leq d_F(\BS^n,\BS^{n-1})$:
\begin{align*}
d_F(\BS^{n+k+1},\BS^n) & \leq \sum_{\ell=0}^k d_F(\BS^{n+\ell+1},\BS^{n+\ell})   \leq \sum_{\ell=0}^k 2^{-\ell-1} d_F(\BS^{n},\BS^{n-1}) \leq d_F(\BS^n,\BS^{n-1}).
\end{align*}
Also Lemma~\ref{lem:key1} implies that
\[
d_F(\BS^{n+k+1},\BS^n)  \leq d_F(\BS^n,\BS^{n-1}) \leq 2^{-n+1}d_F(\BS^1,\BS^0) \leq 2^{-n+1}\eps\sqrt{nd}
\]
where $d_F(\BS^1,\BS^0) \leq d_F(\BS^1,\BZ)  + d_F(\BS^0,\BZ)\leq \eps\sqrt{nd}.$

Note that
\begin{align*}
d_F(\BS^{n+k+1},\BZ) & \leq d_F(\BS^{n+k+1},\BS^{n}) +  d_F(\BS^{n},\BZ)  \\
& \leq d_F(\BS^{n},\BS^{n-1})  + \frac{\eps\sqrt{nd}}{2}\\
& \leq 2^{-n+1}\cdot \eps\sqrt{nd} +\frac{\eps\sqrt{nd}}{2} \\
& \leq \eps\sqrt{nd},
\end{align*}
which gives $\BS^{n+k+1}\in\mathcal{N}_{\eps}.$

We proceed to show $\BS^{n+k+1}\in\mathcal{N}_{\xi,\infty}$: by letting $\BQ=\PP( (\BS^n)^{\top}\BS^{n+k+1} )$, for any $1\leq i\leq n$, it holds that
\begin{align*}
\|\BW_i\BW^{\top}\BS^{n+k+1}\|_F & \leq \|\BW_i\BW^{\top}(\BS^{n+k+1} - \BS^n\BQ)\|_F +  \|\BW_i\BW^{\top} \BS^n\|_F \\
& \leq \|\BW_i\|\|\BW\|d_F(\BS^{n+k+1},\BS^n) +\|\BW_i\BW^{\top} \BS^n\|_F  \\
~\eqref{eq:gauss3a}\text{ and}~\eqref{eq:gauss3b}\qquad & \leq (\sqrt{m}+\sqrt{d} + \sqrt{2\gamma \log n})(\sqrt{nd} + \sqrt{m} + \sqrt{2\gamma \log n})2^{-n+1}\eps\sqrt{nd} \\
(\BS^n\in\mathcal{N}_{\xi,\infty}) \qquad & \qquad + \frac{1}{2}\xi\sqrt{d}(\sqrt{nd} + \sqrt{m}+\sqrt{2\gamma n\log n})^2 \\
& \leq \xi \sqrt{d} (\sqrt{nd} + \sqrt{m}+\sqrt{2\gamma n\log n})^2
\end{align*}
where $2^{-n+1}\eps\sqrt{n} \leq \eps< \xi$ which gives $\BS^{n+k+1}\in\mathcal{N}_{\xi,\infty}.$

Then using Proposition~\ref{prop:con} leads to
\[
d_F(\BS^{t+1},\BS^{t})\leq \frac{1}{2} d_F(\BS^{t},\BS^{t-1}) \leq 2^{-t+1}d_F(\BS^1,\BS^0)
\]
for $t\leq n+k+1$ and thus for all $t$ by induction.
\end{proof}

The global optimality of the SDP relaxation is characterized by the following theorem. 
\begin{theorem}\label{thm:cvx}
The matrix $\BX = \BS\BS^{\top}$ with $\BS\in \Od(d)^{\otimes n}$ is a global optimal solution to~\eqref{def:sdp} if there exists a block-diagonal matrix $\BLambda\in\RR^{nd\times nd}$ such that
\begin{equation}\label{cond:opt}
\BC\BS = \BLambda \BS, \quad \BLambda - \BC\succeq 0.
\end{equation}
In addition, if $\BLambda - \BC$ is of rank $(n-1)d$, then $\BX$ is the unique global maximizer. 
\end{theorem}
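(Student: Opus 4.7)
The plan is to prove this via standard SDP duality, recognizing the condition~\eqref{cond:opt} as the KKT optimality condition for~\eqref{def:sdp}. First I would write down the Lagrangian dual: since the constraints $\BG_{ii}=\I_d$ can be dualized with a symmetric $d\times d$ multiplier $\BLambda_{ii}$ for each $i$, assembling these into a block-diagonal $\BLambda\in\RR^{nd\times nd}$, the dual problem becomes
\[
\min~\Tr(\BLambda) \quad \text{subject to} \quad \BLambda - \BC \succeq 0, \quad \BLambda \text{ block-diagonal}.
\]
Weak duality gives $\langle \BC,\BG\rangle \leq \Tr(\BLambda)$ for any primal-feasible $\BG$ and dual-feasible $\BLambda$, since $\langle \BLambda - \BC,\BG\rangle \geq 0$ (inner product of two PSD matrices) and $\langle \BLambda,\BG\rangle = \sum_i \Tr(\BLambda_{ii}\BG_{ii}) = \Tr(\BLambda)$.

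Next, I would verify that $\BX = \BS\BS^{\top}$ together with the hypothesized $\BLambda$ attains equality in this inequality. Primal feasibility of $\BX$ is immediate: $\BX \succeq 0$ and $\BX_{ii} = \BS_i\BS_i^{\top} = \I_d$ because $\BS_i\in\Od(d)$. Dual feasibility is given by $\BLambda - \BC \succeq 0$. The equation $\BC\BS = \BLambda\BS$ yields $(\BLambda - \BC)\BS = 0$, hence $(\BLambda - \BC)\BX = 0$, which is exactly complementary slackness; taking trace gives $\langle \BC,\BX\rangle = \langle \BLambda,\BX\rangle = \Tr(\BLambda)$. By weak duality, $\BX$ is a global maximizer.

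For the uniqueness part under the additional rank hypothesis, I would argue as follows. Let $\BG^\star$ be any other optimal primal solution. Strong duality with the same $\BLambda$ forces $\langle \BLambda - \BC,\BG^\star\rangle = 0$, and since both matrices are PSD this gives $(\BLambda - \BC)\BG^\star = 0$, i.e., $\Ran(\BG^\star) \subseteq \Ker(\BLambda - \BC)$. The rank condition $\rank(\BLambda - \BC) = (n-1)d$ means $\dim\Ker(\BLambda - \BC) = d$, and since $\BS\in\RR^{nd\times d}$ satisfies $(\BLambda-\BC)\BS=0$ with $\BS^{\top}\BS = n\I_d$ (hence rank $d$), the kernel is exactly $\Ran(\BS)$. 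Therefore $\BG^\star = \BS\BN\BS^{\top}$ for some PSD $\BN\in\RR^{d\times d}$. The block-diagonal constraint $\BG^\star_{ii} = \BS_i\BN\BS_i^{\top} = \I_d$ combined with $\BS_i\in\Od(d)$ forces $\BN = \I_d$, so $\BG^\star = \BS\BS^{\top} = \BX$.

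I don't expect any serious obstacles here; this is a textbook SDP duality argument. The only point requiring a little care is the uniqueness step — specifically, identifying $\Ker(\BLambda - \BC)$ with $\Ran(\BS)$ (using the rank equality to turn the inclusion into an identity) and then exploiting the block-diagonal constraint to pin down $\BN = \I_d$. Everything else is a direct verification of primal/dual feasibility and complementary slackness.
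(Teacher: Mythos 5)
Your proof is correct and follows exactly the route the paper points to: the paper does not spell out a proof but only remarks that Theorem~\ref{thm:cvx} ``follows from the standard routine of duality theory in convex optimization'' (citing~\cite{L20a,RCBL19}), and your argument is a complete, careful write-up of precisely that standard SDP duality argument (weak duality via the block-diagonal dual, complementary slackness from $(\BLambda-\BC)\BS=0$, and uniqueness from the kernel-dimension count together with the block-diagonal constraint forcing $\BN=\I_d$).
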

This optimality condition can be found in several places including~\cite[Proposition 5.1]{L20a} and~\cite[Theorem 7]{RCBL19}.
The derivation of Theorem~\ref{thm:cvx} follows from the standard routine of duality theory in convex optimization. 

\begin{proof}[\bf Proof of Theorem~\ref{thm:main} and~\ref{thm:error}]
First, following from Lemma~\ref{lem:convt}, we have $d_F(\BS^t,\BS^{t-1})\leq 2^{-t+1}d_F(\BS^1,\BS^0).$ 
It is easy to see that $\{\BS^t\}_{t\geq 0}$ is a Cauchy sequence under $d_F(\cdot,\cdot)$ on the closed and compact set $\Od(d)^{\otimes n}.$ More precisely, for any $t'> t$, we have
\[
d_F(\BS^t,\BS^{t'}) \leq \sum_{\ell=0}^{t'-t-1} d_F(\BS^{t+\ell},\BS^{t+\ell+1}) \leq \sum_{\ell=0}^{t'-t-1} 2^{-\ell} d_F(\BS^{t+1},\BS^t) \leq 2d_F(\BS^{t+1},\BS^t) 
\]
which converges to 0 as $t\rightarrow\infty$. Therefore, there exists a unique limiting point $\BS^{\infty}$. Note that $\{\BS^t\}$ belong to the compact set $\mathcal{N}_{\eps}\cap\mathcal{N}_{\xi,\infty}$, and so is $\BS^{\infty}$. Now we will show that $\BS^{\infty}$ is also a fixed point of the nonlinear mapping ${\cal T}$.
\begin{align*}
d_F({\cal T}(\BS^{\infty}), \BS^{\infty}) & \leq d_F({\cal T} (\BS^{\infty}), {\cal T}(\BS^t) ) + d_F(\BS^{t+1}, \BS^{\infty}) \\
&\leq \frac{3}{2}d_F(\BS^{t+1},\BS^{\infty}) \longrightarrow 0
\end{align*}
as $t\rightarrow\infty$ since $\{\BS^t\}$ is a Cauchy sequence. As a result, we have $d_F({\cal T}(\BS^{\infty}), \BS^{\infty}) = 0$ holds and it implies ${\cal T}(\BS^{\infty}) = \BS^{\infty}\BQ$ for some orthogonal matrix $\BQ.$ 

Now we will show $\BQ=\I_d$.
Note that we have $\BS^{\infty}\in\mathcal{N}_{\eps}\cap\mathcal{N}_{\xi,\infty}$, and thus $\sigma_{\min}([{\cal L}\BS^{\infty}]_i) > 0$ for all $1\leq i\leq n$, following from~\eqref{eq:sigmamin}. It immediately holds
\[
\PP([{\cal L}\BS^{\infty}]_i)=([{\cal L}\BS^{\infty}]_i [{\cal L}\BS^{\infty}]_i^{\top} )^{-1/2} [{\cal L}\BS^{\infty}]_i  = \BS_i^{\infty}\BQ 
\]
which equals
\[
{\cal L}\BS^{\infty} = \BLambda\BS^{\infty}\BQ
\]
for some block diagonal matrices $\BLambda$ whose $i$th diagonal block $\BLambda_{ii}$ equals $ ([{\cal L}\BS^{\infty}]_i [{\cal L}\BS^{\infty}]_i^{\top})^{1/2}$ and $\BLambda_{ii}\succ 0.$
 Then
\[
(\BS_i^{\infty})^{\top} [{\cal L}\BS^{\infty}]_i = \BLambda_{ii} \BS^{\infty}_i\BQ \Longrightarrow (\BS^{\infty})^{\top}{\cal L}\BS^{\infty} = \sum_{i=1}^n ( \BS^{\infty}_i)^{\top} \BLambda_{ii} \BS^{\infty}_i\BQ
\]
with $( \BS^{\infty}_i)^{\top} \BLambda_{ii} \BS^{\infty}_i\succ 0.$

Direct computation gives that
\begin{align*}
(\BS^{\infty})^{\top}{\cal L}\BS^{\infty}= \frac{1}{n\|\BA\|^2} (\BS^{\infty})^{\top} \left(  \BZ\BA\BA^{\top}\BZ^{\top} + \sigma\BDelta \right) \BS^{\infty}
\end{align*}
and then
\begin{align*}
\lambda_{\min}((\BS^{\infty})^{\top}{\cal L}\BS^{\infty}) & \geq \frac{\lambda_{\min}(\BA\BA^{\top})}{n\|\BA\|^2} \sigma_{\min}^2(\BZ^{\top}\BS^{\infty}) - \frac{\sigma\|\BDelta\| }{\|\BA\|^2}  \\
& \geq \frac{n}{\kappa^2}\left( 1- \frac{\eps^2d}{2}\right)^2 - \frac{3\sigma \sqrt{n}\|\BW\|}{\|\BA\|} \\
& \geq \frac{n}{\kappa^2}\left( 1- \frac{\eps^2d}{2}\right)^2 - \frac{3\sigma \sqrt{n}(\sqrt{nd}+\sqrt{m} + \sqrt{2\gamma \log n})}{\|\BA\|}  > 0
\end{align*}
under~\eqref{eq:gauss3b} and
\[
\sigma \leq \frac{(1-\eps^2d/2)^2\|\BA\|}{3\kappa^2}\cdot\frac{\sqrt{n}}{\sqrt{nd} + \sqrt{m} + \sqrt{2\gamma\log n}}.
\]
The only orthogonal matrix which makes 
\[
(\BS^{\infty})^{\top}{\cal L}\BS^{\infty} = \sum_{i=1}^n ( \BS^{\infty}_i)^{\top} \BLambda_{ii} \BS^{\infty}_i\BQ
\]
hold for two strictly positive semidefinite matrices $(\BS^{\infty})^{\top}{\cal L}\BS^{\infty} $ and $\sum_{i=1}^n( \BS^{\infty}_i)^{\top} \BLambda_{ii} \BS^{\infty}_i$ must be $\I_d.$

Now we can conclude that the limiting point satisfies
\[
{\cal T}(\BS^{\infty}) =\BS^{\infty}
\]
which also implies $(\BLambda - {\cal L})\BS^{\infty} = 0.$ The linear convergence of $\BS^t$ to $\BS^{\infty}$ follows directly from~\eqref{eq:contra}:
\[
d_F(\BS^{\infty},\BS^t) = d_F({\cal T}(\BS^{\infty}), {\cal T}(\BS^{t-1})) \leq \frac{1}{2}d_F(\BS^{\infty},\BS^{t-1}) \leq \cdots \leq 2^{-t}d_F(\BS^{\infty},\BS^0)
\] 
since all $\{\BS^t\}_{t\geq 0}$ are in $\mathcal{N}_{\eps}\cap\mathcal{N}_{\xi,\infty}.$

Now to argue $\BS^{\infty}(\BS^{\infty})^{\top}$ is the unique global optimal solution to the SDP, it suffices to show that $\BLambda - {\cal L}\succeq 0$ with ${\cal L} = \BC/(n\|\BA\|^2)$ and the $(d+1)$-th smallest eigenvalue is strictly positive, according to Theorem~\ref{thm:cvx}. Remember that $\lambda_{\min}\left( \BLambda_{ii}\right) \geq (1-\eps^2d)/\kappa^2$ since $\BLambda_{ii} = [{\cal L}\BS^{\infty}]_i [{\cal L}\BS^{\infty}]_i^{\top}$ holds with $\sigma_{\min}([{\cal L}\BS^{\infty}]_i)\geq (1-\eps^2d)/\kappa^2$. It suffices to show that for any unit vector $\bu$ which is perpendicular with all $d$ columns of $\BS^{\infty}$, we have $\bu^{\top}(\BLambda-{\cal L})\bu>0$. First, 
\begin{align*}
\bu^{\top}\BC\bu & \leq \bu^{\top}(\BZ\BA\BA^{\top}\BZ^{\top}+ \sigma\BDelta)\bu \\
& \leq  \|\BA\|^2\|\BZ^{\top}\bu\|^2 + \sigma \|\BDelta\| \\
& \leq \|\BA\|^2\|(\BZ\BQ - \BS^{\infty})^{\top}\bu\|^2 + \sigma \|\BDelta\| \\
& \leq \|\BA\|^2 \eps^2 nd + 3\sqrt{n}\sigma\|\BA\|\|\BW\|
\end{align*}
where $\BQ = \PP(\BZ^{\top}\BS^{\infty}).$
Thus
\[
\bu^{\top}{\cal L}\bu = \frac{1}{n\|\BA\|^2}\bu^{\top}{\cal L}\bu  \leq  \eps^2 d + \frac{3\sigma\|\BW\|}{\sqrt{n}\|\BA\|}
\]
for any unit vector $\bu$ with $\bu^{\top}\BS^{\infty} = 0.$
In conclusion, we have
\begin{align*}
\bu^{\top}(\BLambda-{\cal L})\bu & \geq \lambda_{\min}(\BLambda) - \max_{\|\bu\|=1,\bu^{\top}\BS^{\infty} =0} \bu^{\top}{\cal L}\bu \\
&  \geq \frac{1-\eps^2d}{\kappa^2} - \left(\eps^2 d + \frac{3\sigma\|\BW\|}{\sqrt{n}\|\BA\|}\right) \\
& \geq \frac{0.99}{\kappa^2} - \frac{3\sigma\|\BW\|}{\sqrt{n}\|\BA\|}
\end{align*}
where $\eps\sqrt{d}=1/(32\kappa^2)$.
We need to have
\[
\frac{3\sigma\|\BW\|}{\sqrt{n}\|\BA\|} \leq \frac{0.9}{\kappa^2} \Longleftrightarrow \sigma\leq\frac{0.3\|\BA\|}{\kappa^2}\cdot \frac{\sqrt{n}}{\sqrt{nd}+ \sqrt{m}+\sqrt{2\gamma \log n}}
\]
to ensure the $(d+1)$-th smallest eigenvalue of $\BLambda-{\cal L}$ is strictly positive. As a result, $\BS^{\infty}(\BS^{\infty})^{\top}$ is the unique optimal solution to the~\eqref{def:sdp} relaxation.

\vskip0.25cm

For the proof of Theorem~\ref{thm:error}, i.e., the blockwise error bound of $\BS^{\infty}$, it suffices to consider
\begin{align*}
\left\| [{\cal L}\BS^{\infty}]_i - \frac{\BA\BA^{\top}}{n\|\BA\|^2}\BZ^{\top}\BS^{\infty}   \right\|_F & = \left\| \frac{1}{n\|\BA\|^2}( \BA\BA^{\top}\BZ^{\top}\BS^{\infty} + \sigma\BDelta_i^{\top}\BS^{\infty} ) - \frac{\BA\BA^{\top}}{n\|\BA\|^2}\BZ^{\top}\BS^{\infty} \right\|_F \\
& \leq \frac{\sigma}{n\|\BA\|^2}\|\BDelta_i^{\top}\BS^{\infty}\|_F \\
~\eqref{lem:deltaS}\qquad & \leq \frac{6\sigma \sqrt{d}(\sqrt{nd} + \sqrt{m} + \sqrt{2\gamma n\log n})}{\sqrt{n}\|\BA\|}.
\end{align*}

Note that
\begin{align*}
\sigma_{\min}\left( \frac{\BA\BA^{\top}}{n\|\BA\|^2}\BZ^{\top}\BS^{\infty}  \right) & \geq \frac{\sigma_{\min}(\BZ^{\top}\BS^{\infty})}{n\kappa^2} \geq \frac{1-\eps^2d}{\kappa^2}
\end{align*}
which follows from $\BS^{\infty}\in\mathcal{N}_{\eps}\cap\mathcal{N}_{\xi,\infty}$,~\eqref{eq:neps_sigma} and~\eqref{eq:sigmamin}. Now we let 
\[
\BQ:= \PP_n\left(\frac{\BA\BA^{\top}}{n\|\BA\|^2}\BZ^{\top}\BS^{\infty}  \right).
\]
Then applying Lemma~\ref{lem:Pcon} gives
\begin{align*}
\|\BS_i^{\infty}-\BQ\|_F& =  \| \PP( [{\cal L}\BS^{\infty}]_i) - \BQ \|_F \\
& \leq \frac{\kappa^2}{1-\eps^2d} \left\| [{\cal L}\BS^{\infty}]_i - \frac{\BA\BA^{\top}}{n\|\BA\|^2}\BZ^{\top}\BS^{\infty} \right\|_F \\
& \leq \frac{6\kappa^2}{(1-\eps^2d)\|\BA\|}  \cdot \frac{\sigma \sqrt{d}(\sqrt{nd} + \sqrt{m} + \sqrt{2\gamma n\log n})}{\sqrt{n}}
\end{align*}
where $\sigma_{\min}([{\cal L}\BS^{\infty}]_i)\geq (1-\eps^2d )/\kappa^2.$

Finally, we will estimate $\min_{\BR\in\Od(d)}\|\widehat{\BA} - \BR\BA\|_F$:  it holds
\begin{align*}
\min_{\BR\in\Od(d)}\left\|\widehat{\BA} - \BR\BA\right\|_F 
& \leq \left\| \frac{1}{n} (\BS^{\infty})^{\top}\BD - \BQ^{\top}\BA\right\|_F \\
& \leq \left\| \left(\frac{1}{n} \BZ^{\top} \BS^{\infty}- \BQ\right)^{\top}\BA\right\|_F + \frac{\sigma}{n}\left\| \BW^{\top}\BS^{\infty}\right\|_F
\end{align*}
where $\BD = \BZ\BA + \sigma\BW$ and
\[
\widehat{\BA}  = \frac{1}{n}\sum_{i=1}^n (\BS_i^{\infty})^{\top}\BA_i = \frac{1}{n}(\BS^{\infty})^{\top}\BD.
\]
Note that 
\begin{align*}
\| \BZ^{\top}\BS^{\infty}-n\BQ \|_F & = \|\BZ^{\top}(\BS^{\infty} - \BZ\BQ)\|_F \leq \sqrt{n}\| \BS^{\infty} - \BZ\BQ \|_F \\
& \leq n\max_{1\leq i\leq n}\|\BS_i^{\infty} - \BQ\|_F \\
& \leq \frac{6n\kappa^2}{(1-\eps^2d)\|\BA\|}  \cdot \frac{\sigma \sqrt{d}(\sqrt{nd} + \sqrt{m} + \sqrt{2\gamma n\log n})}{\sqrt{n}}.
\end{align*}
Thus
\begin{align*}
\min_{\BR\in\Od(d)}\left\|\widehat{\BA} - \BR\BA\right\|_F 
& \leq\left\| \left(\frac{1}{n} \BZ^{\top} \BS^{\infty}- \BQ\right)^{\top}\BA\right\|_F + \frac{\sigma}{n}\left\| \BW^{\top}\BS^{\infty}\right\|_F \\
& \leq \frac{\|\BA\|}{n}\cdot \frac{6n\kappa^2}{(1-\eps^2d)\|\BA\|}  \cdot \frac{\sigma \sqrt{d}(\sqrt{nd} + \sqrt{m} + \sqrt{2\gamma n\log n})}{\sqrt{n}}  \\
& \qquad + \frac{\sigma}{n}\cdot \sqrt{nd}(\sqrt{nd}+\sqrt{m}+\sqrt{2\gamma\log n}) \\
& \leq   \frac{8\kappa^2\sigma \sqrt{d}(\sqrt{nd} + \sqrt{m} + \sqrt{2\gamma n\log n})}{\sqrt{n}}.
\end{align*}
\end{proof}

\subsection{Initialization}\label{ss:init}

The initialization step follows from an important result on singular value/vector perturbation~\cite{W72} by Wedin, which generalizes the $\sin\theta$ theorem for the eigenvector/eigenvalue perturbation of Hermitian matrices in~\cite{DK70}.
\begin{theorem}[The generalized $\sin\theta$ theorem, see (3.11) in~\cite{W72}]\label{thm:Wed}
For a symmetric matrix $\BA$ and its perturbed matrix $\BA_{E} = \BA +\BE$, let $(\BU,\BV,\BSigma)$ and $(\BU_E,\BV_E,\BSigma_E)$ be the top $k$ singular left/right vectors and values of $\BA$ and $\BA_E$ respectively. Then it holds that
\begin{align*}
\|(\I - \BU_E\BU^{\top}_E)  \BU\BU^{\top} \| & \leq \frac{1}{\delta}\cdot \max\{ \|(\I-\BU_E\BU^{\top}_E)  \BE  \BV\BV^{\top}\|,\|(\I - \BV_E\BV^{\top}_E)  \BE^{\top}  \BU\BU^{\top} \| \}, \\
\|(\I - \BV_E\BV^{\top}_E)  \BV\BV^{\top} \| & \leq \frac{1}{\delta}\cdot \max\{ \|(\I-\BU_E\BU^{\top}_E)  \BE  \BV\BV^{\top}\|,\|(\I - \BV_E\BV^{\top}_E)  \BE^{\top}  \BU\BU^{\top} \| \},
\end{align*}
where $\sigma_{\min}(\BU\BSigma\BV^{\top}) \geq \alpha + \delta$ and $\|\BA_E - \BU_E\BSigma_E\BV_E^{\top} \|\leq \alpha.$ 

\end{theorem}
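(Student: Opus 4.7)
The plan is to execute the classical Wedin argument by sandwiching an auxiliary ``bridge'' matrix between a lower bound involving $\sin\Theta_U := \|(\I - \BU_E\BU_E^{\top})\BU\|$ and an upper bound involving $\sin\Theta_V := \|(\I - \BV_E\BV_E^{\top})\BV\|$. Note that, since $\BU$ and $\BV$ have orthonormal columns, these quantities agree with the projected norms $\|(\I-\BU_E\BU_E^{\top})\BU\BU^{\top}\|$ and $\|(\I-\BV_E\BV_E^{\top})\BV\BV^{\top}\|$ appearing in the statement, so it suffices to work with $\sin\Theta_U$ and $\sin\Theta_V$.

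The key identity is obtained by introducing $\BM := (\I - \BU_E\BU_E^{\top})\BA\BV$ and evaluating $\|\BM\|$ two ways. On one side, using $\BA\BV = \BU\BSigma$ gives $\BM = (\I - \BU_E\BU_E^{\top})\BU\BSigma$, and since $\sigma_{\min}(\BSigma) \geq \alpha + \delta$, one extracts the lower bound $\|\BM\| \geq (\alpha + \delta)\sin\Theta_U$ via $\|XY\|\geq \sigma_{\min}(Y)\|X\|$. On the other side, writing $\BA = \BA_E - \BE$ and invoking the full SVD $\BA_E = \BU_E\BSigma_E\BV_E^{\top} + \BU_E^{\perp}\BSigma_E^{\perp}(\BV_E^{\perp})^{\top}$, the block-structure identity $(\I - \BU_E\BU_E^{\top})\BA_E = \BU_E^{\perp}\BSigma_E^{\perp}(\BV_E^{\perp})^{\top}$ holds, whose operator norm is exactly $\|\BA_E - \BU_E\BSigma_E\BV_E^{\top}\| \leq \alpha$. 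Therefore $\BM = \BU_E^{\perp}\BSigma_E^{\perp}(\BV_E^{\perp})^{\top}\BV - (\I - \BU_E\BU_E^{\top})\BE\BV$, which yields $\|\BM\| \leq \alpha\sin\Theta_V + \|(\I - \BU_E\BU_E^{\top})\BE\BV\|$.

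Combining the two bounds gives $(\alpha + \delta)\sin\Theta_U \leq \alpha\sin\Theta_V + \|(\I - \BU_E\BU_E^{\top})\BE\BV\|$, and applying the identical argument to $\BA^{\top}$ (swapping the roles of left and right singular vectors) produces the symmetric inequality $(\alpha + \delta)\sin\Theta_V \leq \alpha\sin\Theta_U + \|(\I - \BV_E\BV_E^{\top})\BE^{\top}\BU\|$. Taking the maximum of both sides cancels $\alpha$ and delivers $\delta\max(\sin\Theta_U, \sin\Theta_V) \leq \max\{\|(\I - \BU_E\BU_E^{\top})\BE\BV\|, \|(\I - \BV_E\BV_E^{\top})\BE^{\top}\BU\|\}$, which is exactly the stated bound (after restoring $\BV\BV^{\top}$ and $\BU\BU^{\top}$, a change that preserves operator norms).

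There is no substantive obstacle, since this is the textbook Wedin proof; the only step that deserves care is the identity $\|(\I - \BU_E\BU_E^{\top})\BA_E\| = \|\BSigma_E^{\perp}\|\leq\alpha$, which amounts to the spectral-norm optimality of the best rank-$k$ SVD truncation. The rest is bookkeeping with the block structure of the two SVDs and a Sylvester-type pairing of the two inequalities coming from $\BA$ and $\BA^{\top}$.
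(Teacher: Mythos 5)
The paper does not actually prove Theorem~\ref{thm:Wed}; it is cited verbatim from Wedin's 1972 paper (his equation (3.11)) and used as a black box. So there is no internal proof to compare against. That said, your reconstruction is the standard textbook argument and it is correct: the sandwich on $\BM = (\I - \BU_E\BU_E^{\top})\BA\BV$ gives $(\alpha+\delta)\sin\Theta_U \leq \alpha\sin\Theta_V + \|(\I-\BU_E\BU_E^{\top})\BE\BV\|$ (via $\sigma_{\min}(\BSigma)\geq\alpha+\delta$ on one side and $\|(\I-\BU_E\BU_E^{\top})\BA_E\|=\sigma_{k+1}(\BA_E)\leq\alpha$ together with $\|(\BV_E^{\perp})^{\top}\BV\|=\sin\Theta_V$ on the other), the symmetric inequality from $\BA^{\top}$ follows identically, and the WLOG/cancellation step at the end is sound: if, say, $\sin\Theta_U\geq\sin\Theta_V$, the first inequality already gives $\delta\sin\Theta_U\leq \alpha(\sin\Theta_V-\sin\Theta_U)+c\leq c$. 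The reduction from the projected form $\|(\I-\BU_E\BU_E^{\top})\BU\BU^{\top}\|$ to $\|(\I-\BU_E\BU_E^{\top})\BU\|$ is also fine since $\BU\BU^{\top}$ is a partial isometry of the right shape.

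One side remark worth flagging: the paper's statement says ``For a symmetric matrix $\BA$,'' but neither Wedin's theorem nor your proof requires symmetry, and the paper subsequently applies the result to the rectangular matrix $\BD\in\RR^{nd\times m}$. The ``symmetric'' hypothesis is evidently an artifact; your proof correctly treats $\BA$ as an arbitrary (rectangular) matrix, which is the setting actually needed.
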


It is a classical result that 
\begin{align*}
\| \BU_E-\BU\BR\| \leq 2\|(\I - \BU_E\BU^{\top}_E)  \BU\| = 2\|(\I - \BU_E\BU^{\top}_E)  \BU\BU^{\top}\|
\end{align*}
where $\BR=\PP(\BU^{\top}\BU_E).$ Sometimes, it is also convienient to use the following variant of the Wedin's theorem:
\begin{align}
\frac{1}{2}\min_{\BR\in\Od(d)}\| \BU_E - \BU\BR \|  &  \leq \frac{1}{\delta}\cdot\max\{ \|  \BE  \BV\|, \|  \BE^{\top}\BU  \| \} \leq \frac{\|\BE\|}{\delta}, \label{eq:wedin3}\\
\frac{1}{2}\min_{\BR\in\Od(d)}\| \BV_E - \BV\BR \|  & \leq \frac{1}{\delta}\cdot\max\{ \|  \BE  \BV\|, \|  \BE^{\top}\BU  \| \} \leq \frac{\|\BE\|}{\delta}. \label{eq:wedin4}
\end{align}

Now we will present the proof of Lemma~\ref{lem:init}.
\begin{proof}[\bf Proof of Lemma~\ref{lem:init} and Theorem~\ref{thm:spectral}]
The spectral method first computes the top $d$ left singular vectors of $\BD$ and round them to orthogonal matrices, as introduced in~\eqref{def:init}. Let $(\BU,\BV,\BSigma)$ be the top $d$ left/right singular vectors and singular values of $\BD\in\RR^{nd\times m}$ in~\eqref{def:Z} where $\BU\in\RR^{nd\times d}$, $\BSigma\in\RR^{d\times d}$, and $\BV\in\RR^{m\times d}$. Then it holds that
\[
\BU = \BD \BV \BSigma^{-1} = (\BZ\BA + \sigma\BW)\BV\BSigma^{-1}.
\]
Let the SVD of $\BA\in\RR^{d\times m}$ be 
$\BA = \BU_0\BSigma_0\BV_0^{\top}$
where $\BU_0\in\RR^{d\times d}$, $\BSigma_0\in\RR^{d\times d}$, and $\BV_0\in\RR^{m\times d}$ with $m\geq d+1.$ Then we know that the SVD of $\BZ\BA$ is
\[
\BZ\BA = \left( \frac{1}{\sqrt{n}}\BZ\BU_0\right)\cdot \sqrt{n} \BSigma_0\cdot \BV_0^{\top}.
\]
We treat $\BD = \BZ\BA + \sigma\BW$ as a noisy copy of $\BZ\BA$.

Our strategy is to approximate $\BU$ by using $\BD\BV_0\BR\BSigma^{-1}$ where $\BR = \PP(\BV_0^{\top}\BV).$
Then 
\begin{align*}
\BU - \BD\BV_0\BR\BSigma^{-1} & = \BD (\BV  - \BV_0\BR)\BSigma^{-1}  = (\BZ\BA + \sigma\BW) (\BV  - \BV_0\BR)\BSigma^{-1}.
\end{align*}
To estimate the $i$th block $\BU_i$ of $\BU$ by that of $ \BD\BV_0\BR\BSigma^{-1} $, we have
\begin{align*}
\|\BU_i - (\BA+\sigma\BW_i)\BV_0\BR\BSigma^{-1}\| & \leq \|(\BA + \sigma\BW_i) (\BV  - \BV_0\BR)\BSigma^{-1} \| \\
& \leq \left(\|\BA\| \|\BV - \BV_0\BR\| + \sigma \|\BW_i(\BV - \BV_0\BR)\| \right)\|\BSigma^{-1}\|
\end{align*}
where the $i$th block of $\BD\BV_0\BR\BSigma^{-1}$ is $(\BA+\sigma\BW_i)\BV_0\BR\BSigma^{-1}.$

We will show later in this section, i.e., Lemma~\ref{lem:init_supp}, that
\begin{align*}
\|\BSigma^{-1}\| & \leq \frac{2}{\sqrt{n}\sigma_{\min}(\BA)}, \qquad \|\BSigma\| \leq 2\sqrt{n}\|\BA\|\\
\|\BV-\BV_0\BR\| & \leq \frac{4\sigma (\sqrt{nd} + \sqrt{m} + \sqrt{2\gamma\log n})}{\sqrt{n}\sigma_{\min}(\BA)}, \\
\max_{1\leq i\leq n}\| \BW_i(\BV-\BV_0\BR) \| & \leq   \frac{3(\sqrt{m}+\sqrt{d}+\sqrt{2\gamma \log n})}{4}\max_{1\leq i\leq n}\|\BU_i\| + \frac{5(2\sqrt{d}+\sqrt{2\gamma\log n})}{2}.
\end{align*} 

With these estimations at hand, each block of $\BU - \BD\BV_0\BR\BSigma^{-1}$ is bounded by
\begin{align*}
& \|\BU_i - (\BA+\sigma\BW_i)\BV_0\BR\BSigma^{-1}\| \\
& \leq \left(\|\BA\| \|\BV - \BV_0\BR\| + \sigma \|\BW_i(\BV - \BV_0\BR)\| \right)\|\BSigma^{-1}\| \\
& \leq \frac{4\sigma (\sqrt{nd} + \sqrt{m} + \sqrt{2\gamma\log n})}{\sqrt{n}\sigma_{\min}(\BA)} \cdot \frac{2\kappa}{\sqrt{n}}  + \frac{3\sigma(\sqrt{m} + \sqrt{d} + \sqrt{2\gamma\log n})}{2\sqrt{n}\sigma_{\min}(\BA)}\cdot \max_{1\leq i\leq n}\|\BU_i\|\\
& \qquad + \frac{5\sigma(2\sqrt{d}+ \sqrt{2\gamma \log n})}{\sqrt{n}\sigma_{\min}(\BA)} \\
& < \frac{\sigma (\sqrt{nd} + \sqrt{m} + \sqrt{2\gamma n\log n})  }{\sqrt{n}\sigma_{\min}(\BA)} \left( \frac{3}{2} \max_{1\leq i\leq n}\|\BU_i\| + \frac{18\kappa}{\sqrt{n}} \right).
\end{align*}
Then we have
\begin{align}
\|\BU_i - \BA\BV_0\BR\BSigma^{-1}\| 
& \leq \|\BU_i - (\BA+\sigma\BW_i)\BV_0\BR\BSigma^{-1}\| + \sigma \|\BW_i\BV_0\BR\BSigma^{-1}\| \nonumber \\
& \leq \frac{\sigma (\sqrt{nd} + \sqrt{m} + \sqrt{2\gamma n\log n})  }{\sqrt{n}\sigma_{\min}(\BA)} \left( \frac{3}{2} \max_{1\leq i\leq n}\|\BU_i\| + \frac{22\kappa}{\sqrt{n}} \right)\label{eq:keyUmax}
\end{align}
where 
\begin{align*}
\|\BW_i\BV_0\BR\BSigma^{-1}\| & \leq \frac{2\|\BW_i\BV_0\|}{\sqrt{n}\sigma_{\min}(\BA)}\leq \frac{2(2\sqrt{d} + \sqrt{2\gamma\log n})}{\sqrt{n}\sigma_{\min}(\BA)} \\
& \leq \frac{(\sqrt{nd}+\sqrt{m}+\sqrt{2\gamma n\log n})}{\sqrt{n}\sigma_{\min}(\BA)}  \cdot \frac{4}{\sqrt{n}}
\end{align*}
holds with probability at least $1-n^{-\gamma+1}$, 
following from Corollary~\ref{cor:gauss} and the fact that $\BW_i\BV_0\in\RR^{d\times d}$ is a Gaussian random matrix.

\noindent{\bf Part 1: Proof of~\eqref{eq:init1}.}

Under the assumption on $\sigma$, we can see from~\eqref{eq:keyUmax} that 
\begin{align*}
\max_{1\leq i\leq n} \|\BU_i\| & \leq 2\|\BA\BV_0\BR\BSigma^{-1}\| \leq 2\|\BA\|\|\BSigma^{-1}\|\leq \frac{4\kappa}{\sqrt{n}}
\end{align*}
since~\eqref{eq:keyUmax} gives
\[
\max_{1\leq i\leq n}\|\BU_i\|  \leq \frac{24\sigma \kappa(\sqrt{nd} + \sqrt{m} + \sqrt{2\gamma n\log n})  }{\sqrt{n}\sigma_{\min}(\BA)} \max_{1\leq i\leq n}\|\BU_i\| + \|\BA\BV_0\BR\BSigma^{-1}\|
\]
and $\max_{1\leq i\leq n}\|\BU_i\| \geq 1/\sqrt{n}.$

Now we can conclude from~\eqref{eq:keyUmax} that
\begin{align}
\|\BU_i - \BA\BV_0\BR\BSigma^{-1}\|  &\leq \frac{28\kappa\sigma (\sqrt{nd} + \sqrt{m} + \sqrt{2\gamma n\log n})  }{n\sigma_{\min}(\BA)}.
\label{eq:Ui}
\end{align}

\vskip0.25cm

Note that $\BS^0 = \PP_n(\BU)$. To obtain a bound of $d_F(\BS^0,\BZ)$, it suffices to control $\|\PP(\BU_i) - \PP(\BA\BV_0\BR\BSigma^{-1})\|$ by using Lemma~\ref{lem:Pcon}. For the smallest singular value of $\BA\BV_0\BR\BSigma^{-1}$, we have
\[
\sigma_{\min}(\BA\BV_0\BR\BSigma^{-1}) = \sigma_{\min}(\BU_0\BSigma_0\BR\BSigma^{-1}) \geq\sigma_{\min}(\BSigma_0)\sigma_{\min}(\BSigma^{-1}) \geq \frac{\sigma_{\min}(\BA)}{2\sqrt{n}\|\BA\|} = \frac{1}{2\sqrt{n}\kappa}.
\]
By letting $\BQ=\PP(\BA\BV_0\BR\BSigma^{-1})$, Lemma~\ref{lem:Pcon} implies that
\begin{align*}
\|\BS_i^0 - \BQ\|& = \| \PP(\BU_i) - \PP(\BA\BV_0\BR\BSigma^{-1})\| \\
& \leq \frac{2}{\sigma_{\min}(\BA\BV_0\BR\BSigma^{-1})} \cdot \|\BU_i - \BA\BV_0\BR\BSigma^{-1}\|\\
& \leq \frac{112\kappa^2\sigma (\sqrt{nd} + \sqrt{m} + \sqrt{2\gamma n\log n})  }{\sqrt{n}\sigma_{\min}(\BA)} 
\end{align*}
which gives the proof of Theorem~\ref{thm:spectral}.

Therefore, we have
\[
\|\BS_i^0-\BQ\|\leq \eps ~~\Longrightarrow ~~d_F(\BS^0,\BZ)\leq \eps\sqrt{nd}
\]
under the assumption that
\begin{equation}\label{eq:sigma_init2}
\sigma\leq \frac{\eps\sqrt{d}\sigma_{\min}(\BA)}{112\kappa^2} \cdot \frac{\sqrt{n}}{\sqrt{d}(\sqrt{nd} + \sqrt{m} + \sqrt{2\gamma n\log n})}
\end{equation}
where $\eps\sqrt{d} = 1/(32\kappa^2).$

\vskip0.25cm

\noindent{\bf Part 2: Proof of~\eqref{eq:init2} and~\eqref{eq:init3}}

Now we need to prove that $d_F(\BS^0,\BS^{i,0})\leq \eps\sqrt{d}$ where $\BS^0 = \PP_n(\BU)$ and $\BS^{i,0} = \PP_n(\BU^{(i)}).$
From~\eqref{eq:Ui} and $\sigma_{\min}(\BA\BV_0\BR\BSigma^{-1}) \geq (2\sqrt{n}\kappa)^{-1}$, we have
\[
\sigma_{\min}(\BU_i) \geq \sigma_{\min}(\BA\BV_0\BR\BSigma^{-1}) - \|\BU_i -\BA\BV_0\BR\BSigma^{-1} \| \geq \frac{1}{4\sqrt{n}\kappa}
\]
under the assumption on $\sigma$ in~\eqref{eq:sigma_init2}.
As a result, applying Lemma~\ref{lem:Pncon} gives
\begin{align*}
d_F(\BS^0, \BS^{i,0}) & \leq 8\sqrt{n}\kappa \| \BU - \BU^{(i)}\BQ^{(i)}_U \|_F \\
& \leq  \frac{448\kappa^2\sigma (\sqrt{nd} +\sqrt{m}+ \sqrt{2\gamma n\log n}) }{\sqrt{n}\sigma_{\min}(\BA)}  \leq \eps\sqrt{d}
\end{align*}
where $\BQ_U^{(i)} =\PP( (\BU^{(i)})^{\top}\BU )$ under
\[
\sigma \leq \frac{\eps\sqrt{d}\sigma_{\min}(\BA)}{448\kappa^2}\cdot \frac{\sqrt{n}}{\sqrt{d}(\sqrt{nd} + \sqrt{m} + \sqrt{2\gamma n\log n})}.
\]
Here the bound on $\|\BU-\BU^{(i)}\BQ_U^{(i)}\|$ is given by Lemma~\ref{lem:init_supp}:
\begin{align*}
\| \BU - \BU^{(i)}\BQ^{(i)}_U \|_F & \leq \frac{12\sigma(\sqrt{m} + \sqrt{d} + \sqrt{2\gamma\log n})}{\sqrt{n}\sigma_{\min}(\BA)}\max_{1\leq i\leq n}\|\BU_i\|  + \frac{4\sigma(2\sqrt{d} + \sqrt{2\gamma\log n})}{\sqrt{n}\sigma_{\min}(\BA)} \\
& \leq \frac{56\kappa\sigma (\sqrt{nd} +\sqrt{m}+ \sqrt{2\gamma n\log n}) }{n\sigma_{\min}(\BA)} 
\end{align*}
where $\max_{1\leq i\leq n}\|\BU_i\| \leq 4\kappa/\sqrt{n}.$

Finally, $\BS^0\in\mathcal{N}_{\xi,\infty}$ follows exactly from the argument of Step 3 in Lemma~\ref{lem:key1}: it says that if $d_F(\BS^0,\BS^{i,0})\leq \eps\sqrt{d}$, then $\BS^0\in\mathcal{N}_{\xi,\infty}$ with $\xi > 2(\eps+2).$
\end{proof}

\begin{lemma}\label{lem:init_supp}
Suppose $(\BU,\BV,\BSigma)$, $(\BU^{(i)}, \BV^{(i)},\BSigma^{(i)})$, and $(\BU_0,\BV_0,\BSigma_0)$ are the top $d$ left/right singular vectors and singular values of the data matrix $\BD = \BZ\BA + \sigma\BW\in\RR^{nd\times m}$, $\BD^{(i)} = \BZ\BA+\sigma\BW^{(i)}\in\RR^{nd\times m}$ (defined in~\eqref{def:CDi}), and $\BA\in\RR^{d\times m}$ respectively. Then conditioned on~\eqref{eq:gauss3a},~\eqref{eq:gauss3b}, and~\eqref{eq:gauss3c}, we have
\begin{align*}
\|\BSigma^{-1}\| & \leq \frac{2}{\sqrt{n}\sigma_{\min}(\BA)}, \qquad \|\BSigma\| \leq 2\sqrt{n}\|\BA\| \\
\|\BV-\BV_0\BR\| & \leq \frac{4\sigma (\sqrt{nd} + \sqrt{m} + \sqrt{2\gamma\log n})}{\sqrt{n}\sigma_{\min}(\BA)}, \\
\max_{1\leq i\leq n}\| \BW_i(\BV-\BV_0\BR) \| & \leq   \frac{3(\sqrt{m}+\sqrt{d}+\sqrt{2\gamma \log n})}{4}\max_{1\leq i\leq n}\|\BU_i\| + \frac{5(2\sqrt{d}+\sqrt{2\gamma\log n})}{2}, \\
\max_{1\leq i\leq n}\| \BU - \BU^{(i)}\BQ_U^{(i)} \| &  \leq \frac{12\sigma(\sqrt{m} + \sqrt{d} + \sqrt{2\gamma\log n})}{\sqrt{n}\sigma_{\min}(\BA)}\max_{1\leq i\leq n}\|\BU_i\|  + \frac{4\sigma(2\sqrt{d} + \sqrt{2\gamma\log n})}{\sqrt{n}\sigma_{\min}(\BA)}, 
\end{align*}
where $\BR = \PP(\BV_0^{\top}\BV)$ and $\BQ_U^{(i)} = \PP((\BU^{(i)})^{\top}\BU)$
with probability at least $1-O(n^{-\gamma+1})$
under the assumption
\[
\sigma \leq \frac{\sqrt{n}\sigma_{\min}(\BA)}{16(\sqrt{nd}+\sqrt{m}+\sqrt{2\gamma n\log n})}.
\] 
\end{lemma}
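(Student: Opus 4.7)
The plan is to establish the four estimates in order, each building on the previous. First, I would apply Weyl's inequality to the decomposition $\BD = \BZ\BA + \sigma\BW$. Since $\BZ^\top\BZ = n\BI_d$, the signal $\BZ\BA = \BZ\BU_0\BSigma_0\BV_0^\top$ has singular values $\sqrt{n}\sigma_k(\BA)$ and right singular vectors exactly $\BV_0$. Combined with~\eqref{eq:gauss3b} and the hypothesis on $\sigma$, Weyl yields $\sigma_d(\BD)\geq \sqrt{n}\sigma_{\min}(\BA)/2$ and $\sigma_1(\BD)\leq 2\sqrt{n}\|\BA\|$, giving the two $\BSigma$-norm bounds. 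Wedin's theorem~\eqref{eq:wedin4}, with perturbation $\sigma\BW$ and spectral gap $\geq \sqrt{n}\sigma_{\min}(\BA)/2$, then controls $\|\BV - \BV_0\BR\|$ by $4\sigma\|\BW\|/(\sqrt{n}\sigma_{\min}(\BA))$, and plugging in~\eqref{eq:gauss3b} yields the stated bound.

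Next, for $\max_i\|\BW_i(\BV - \BV_0\BR)\|$, a naive operator-norm estimate would introduce a spurious $\sqrt{m}$ factor since $\|\BW_i\|\lesssim\sqrt{m}$. The trick is to substitute $\BV = \BD^\top \BU\BSigma^{-1}$ (from $\BD\BV = \BU\BSigma$) and $\BA^\top = \BV_0\BSigma_0\BU_0^\top$ to algebraically rewrite
\[
\BV - \BV_0\BR \;=\; \BV_0\bigl(\BSigma_0\BU_0^\top\BZ^\top\BU\BSigma^{-1} - \BR\bigr) \;+\; \sigma\,\BW^\top\BU\BSigma^{-1}.
\]
Premultiplying by $\BW_i$, the first term factors through $\BW_i\BV_0\in\RR^{d\times d}$, a Gaussian matrix independent of the deterministic $\BV_0$, so Corollary~\ref{cor:gauss} gives the $(2\sqrt{d}+\sqrt{2\gamma\log n})$-type contribution. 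For the second term $\sigma\BW_i\BW^\top\BU\BSigma^{-1}$, isolate the $i$-th summand via $\BW^\top\BU = \BW_i^\top\BU_i + (\BW^{(i)})^\top\BU$: the piece $\sigma\BW_i\BW_i^\top\BU_i\BSigma^{-1}$ is bounded by $\sigma\|\BW_i\|^2\|\BU_i\|\,\|\BSigma^{-1}\|$, producing the $\max_i\|\BU_i\|$ term (the SNR hypothesis absorbs the extra $\sqrt{m}$), while $\sigma\BW_i(\BW^{(i)})^\top\BU\BSigma^{-1}$ is handled by further decomposing $\BU = \BU^{(i)}\BQ_U^{(i)} + (\BU-\BU^{(i)}\BQ_U^{(i)})$: the LOO substitution exploits independence of $\BW_i$ and $(\BW^{(i)})^\top\BU^{(i)}$, and the remainder is bounded by operator norms using the fourth estimate.

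For the leave-one-out perturbation $\max_i\|\BU - \BU^{(i)}\BQ_U^{(i)}\|$, I apply Wedin~\eqref{eq:wedin3} with baseline $(\BU^{(i)},\BV^{(i)},\BSigma^{(i)})$ and perturbation $\BE = \BD - \BD^{(i)}$, which is a block matrix whose only nonzero block row is the $i$-th, equal to $\sigma\BW_i$. The spectral gap of $\BD^{(i)}$ is again $\gtrsim \sqrt{n}\sigma_{\min}(\BA)$ by repeating the Weyl argument for $\BD^{(i)}$ using~\eqref{eq:gauss3c}. The two Wedin quantities reduce to $\|\BE\BV^{(i)}\| = \sigma\|\BW_i\BV^{(i)}\|\leq \sigma(2\sqrt{d}+\sqrt{2\gamma\log n})$ (by Corollary~\ref{cor:gauss}, since $\BV^{(i)}$ is independent of $\BW_i$) and $\|\BE^\top \BU^{(i)}\| = \sigma\|\BW_i^\top\BU_i^{(i)}\| \leq \sigma\|\BW_i\|\cdot\|\BU_i^{(i)}\|$.

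The main obstacle is closing a circular dependence: the LOO bound is naturally expressed in terms of $\|\BU_i^{(i)}\|$, but the lemma requires $\max_j\|\BU_j\|$. The remedy is the triangle inequality $\|\BU_i^{(i)}\| \leq \|\BU_i\| + \|\BU - \BU^{(i)}\BQ_U^{(i)}\|$ (using that the $d\times d$ rotation $\BQ_U^{(i)}$ preserves block-wise norms), which yields an implicit inequality for $\|\BU - \BU^{(i)}\BQ_U^{(i)}\|$ with a small self-coupling coefficient of order $\sigma\|\BW_i\|/(\sqrt{n}\sigma_{\min}(\BA))$. Under the stated SNR hypothesis this coefficient is less than $1/2$, so a routine bootstrap absorbs the self-coupled term and produces the stated bound purely in terms of $\max_j\|\BU_j\|$. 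Plugging this back into the decomposition of the previous paragraph likewise closes the estimate for $\max_i\|\BW_i(\BV-\BV_0\BR)\|$.
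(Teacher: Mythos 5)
Your decomposition for the third estimate is genuinely different from the paper's. The paper writes
\[
\BW_i(\BV - \BV_0\BR) \;=\; \BW_i(\BV - \BV^{(i)}\BQ^{(i)}) + \BW_i(\BV^{(i)} - \BV_0\BR^{(i)}) + \BW_i\BV_0(\BR^{(i)}\BQ^{(i)} - \BR),
\]
uses Wedin to show $\|\BV - \BV^{(i)}\BQ^{(i)}\|$ is small (this is where the $\max_i\|\BU_i\|$ term enters), exploits independence of $\BW_i$ and $\BV^{(i)}$ for the middle term, and notes that the last factor $\|\BR^{(i)}\BQ^{(i)} - \BR\|\le 2$ trivially since all three factors are orthogonal. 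You instead substitute $\BV = \BD^{\top}\BU\BSigma^{-1}$ to get $\BV - \BV_0\BR = \BV_0(\BSigma_0\BU_0^{\top}\BZ^{\top}\BU\BSigma^{-1} - \BR) + \sigma\BW^{\top}\BU\BSigma^{-1}$ and then peel off blocks of $\BW^{\top}\BU$. This is a valid alternative route: it avoids a second Wedin application for $\BV$ vs.\ $\BV^{(i)}$ and instead routes the LOO step through $\BU$ vs.\ $\BU^{(i)}$, which dovetails with the fourth estimate you already establish. Both approaches are comparable in effort.

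There is, however, one step you gloss over: you need $\|\BSigma_0\BU_0^{\top}\BZ^{\top}\BU\BSigma^{-1} - \BR\| = O(1)$ to conclude that $\BW_i\BV_0(\cdots)$ contributes only $(2\sqrt{d}+\sqrt{2\gamma\log n})$ and not a larger factor (a naive bound gives $\approx 2\kappa$ for the bracket, which would put a spurious condition number in the constant term). The fix is short: from $\BV_0^{\top}\BV = \BSigma_0\BU_0^{\top}\BZ^{\top}\BU\BSigma^{-1} + \sigma\BV_0^{\top}\BW^{\top}\BU\BSigma^{-1}$ and $\|\BV_0^{\top}\BV\|\le 1$, together with $\sigma\|\BW\|\|\BSigma^{-1}\|\le 1/8$ under the SNR hypothesis, you get $\|\BSigma_0\BU_0^{\top}\BZ^{\top}\BU\BSigma^{-1}\|\le 9/8$, hence $\|\BSigma_0\BU_0^{\top}\BZ^{\top}\BU\BSigma^{-1} - \BR\|\le 17/8$. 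You should make this explicit. The rest of your argument (Weyl for $\BSigma$, Wedin for $\BV$, Wedin for $\BU$ vs.\ $\BU^{(i)}$ with perturbation supported on the $i$th block row, and the bootstrap to replace $\|\BU_i^{(i)}\|$ by $\max_j\|\BU_j\|$) matches the paper's strategy in spirit and is sound.
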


\begin{proof}[\bf Proof of Lemma~\ref{lem:init_supp}] We divide the proof into three parts. 

\noindent{\bf Part 1: Estimation of $\|\BSigma^{-1}\|$ and $\|\BSigma\|.$}

By Weyl's theorem, it holds that
\[
\|\sigma_i(\BSigma) -\sqrt{n} \sigma_i(\BA)\|\leq \sigma\|\BW\| \leq \frac{\sqrt{n}\sigma_{\min}(\BA)}{2}, \quad 1\leq i\leq d,
\]
where~\eqref{eq:gauss3a} gives $\|\BW\|\leq \sqrt{nd}+\sqrt{m}+\sqrt{2\gamma\log n}$.
Note that $\BSigma$ is a $d\times d$ positive semidefinite matrix consisting of the top $d$ singular values of $\BD$. We have $2^{-1}\sqrt{n}\sigma_{\min}(\BA)\I_d\preceq\BSigma\preceq 2\sqrt{n}\sigma_{\max}(\BA)\I_d$ and
\begin{equation}\label{eq:sigmamin2}
\|\BSigma^{-1}\| \leq \frac{2}{\sqrt{n}\sigma_{\min}(\BA)}, \qquad \|\BSigma\|\leq 2\sqrt{n}\|\BA\|.
\end{equation}

\noindent{\bf Part 2: Estimation of $\|\BV-\BV_0\BR\|.$}
We estimate $\|\BV - \BV_0\BR\|$ by using~\eqref{eq:wedin4}. Note that Weyl's theorem again implies
\[
\sigma_{d+1}(\BD) \leq \sigma\|\BW\|, \qquad \sigma_d(\BZ\BA)= \sqrt{n}\sigma_{\min}(\BA).
\]
We set $\alpha = \sigma\|\BW\|$ and $\delta = \sqrt{n}\sigma_{\min}(\BA) - \sigma\|\BW\|$ in~\eqref{eq:wedin4}. This results in
\begin{align*}
\|\BV - \BV_0\BR\| & \leq  \frac{2\sigma\max\{\|\BW\BV_0\|,n^{-1/2}\|\BW^{\top}\BZ\BU_0\|\}}{\sqrt{n}\sigma_{\min}(\BA) - \sigma\|\BW\|} 
\end{align*}
where $\BZ\BA$ has its SVD as $n^{-1/2}\BZ\BU_0\cdot n^{1/2}\BSigma_0\BV_0$ and $\BU_0\in\RR^{d\times d}$ is the left singular vectors of $\BA$. Since $\BW\BV_0\in\RR^{nd\times d}$ and $n^{-1/2}\BW^{\top}\BZ\BU_0\in\RR^{m\times d}$ are both Gaussian random matrix, Lemma~\ref{cor:gauss} gives
\begin{equation}
\|\BW\BV_0\|\leq \sqrt{nd} + \sqrt{d}+\sqrt{2\gamma\log n}, \qquad n^{-1/2}\|\BW^{\top}\BZ\BU_0\|\leq \sqrt{m} + \sqrt{d} +\sqrt{2\gamma\log n}
\end{equation}
with probability at least $1-2n^{-\gamma}.$ Thus combining these estimates with~\eqref{eq:sigmamin2} leads to
\[
\|\BV - \BV_0\BR\| \leq \frac{4\sigma (\sqrt{nd} + \sqrt{m} + \sqrt{2\gamma\log n})}{\sqrt{n}\sigma_{\min}(\BA)}.
\]

\noindent{\bf Part 3: Leave-one-out technique for $\|\BW_i(\BV-\BV_0\BR)\|$}

Note that $\BV-\BV_0\BR$ are not statistically independent of $\BW_i$ since $\BV$ is the top $d$ right singular vectors of $\BD = \BZ\BA + \sigma\BW$. We use the leave-one-out technique to approximate $\BV-\BV_0\BR$ by using the counterpart of $\BV^{(i)}$. Apparently, $(\BU^{(i)},\BV^{(i)})$ is independent of $\BW_i$ since $\BW^{(i)}$ does not contain $\BW_i.$

Denote
\[
\BQ^{(i)} = \PP((\BV^{(i)})^{\top}\BV), \qquad \BR^{(i)} = \PP(\BV_0^{\top}\BV^{(i)}).
\]
We decompose $\BV-\BV_0\BR$ into three components and estimate each of them individually:
\begin{align*}
\|\BW_i(\BV - \BV_0\BR)\| & \leq \|\BW_i(\BV - \BV^{(i)}\BQ^{(i)})\| + \|\BW_i(\BV^{(i)} - \BV_0\BR^{(i)})\| + \|\BW_i \BV_0(\BR^{(i)}\BQ^{(i)} -\BR)\| \\
& = : T_1 + T_2 + T_3.
\end{align*}
We start with $T_2$ and $T_3$ which are easier to estimate.
For $T_2$, Corollary~\ref{cor:gauss} implies that
\[
 \|\BW_i(\BV^{(i)} - \BV_0\BR^{(i)})\| \leq (2\sqrt{d}+\sqrt{2\gamma \log n}) \|\BV^{(i)} - \BV_0\BR^{(i)}\|, ~\forall 1\leq i\leq n,
\]
with probability least $1 - n^{-\gamma+1}$.  

Using the same argument in Part 1 gives an estimation of $\|\BV^{(i)}-\BV_0\BR^{(i)}\|$:  
\begin{equation}\label{eq:ViV0}
\|\BV^{(i)} - \BV_0\BR^{(i)}\| \leq \frac{4\sigma (\sqrt{nd} + \sqrt{m} + \sqrt{2\gamma\log n})}{\sqrt{n}\sigma_{\min}(\BA)}.
\end{equation}
holds with probability at least $1 - 2n^{-\gamma+1}$ where $\BR^{(i)} = \PP(\BV_0^{\top}\BV^{(i)}).$ 

Combining it with~\eqref{eq:ViV0}, we have
\begin{equation}\label{eq:T2}
T_2\leq \frac{4\sigma (2\sqrt{d}+\sqrt{2\gamma \log n})(\sqrt{nd} + \sqrt{m} + \sqrt{2\gamma\log n})}{\sqrt{n}\sigma_{\min}(\BA)}, 
\end{equation}
with probability at least $1-3n^{-\gamma+1}$. 

\vskip0.25cm

For $T_3,$ we have
\begin{equation}\label{eq:T3}
T_3 \leq 2\|\BW_i\BV_0\| \leq 2(2\sqrt{d} + \sqrt{2\gamma \log n}) 
\end{equation}
with probability at least $1-n^{-\gamma+1}$, following from Corollary~\ref{cor:gauss}.

\vskip0.25cm

Finally, we focus on $T_1$.  Since $T_1\leq \|\BW_i\| \|\BV - \BV^{(i)}\BQ^{(i)}\|$, it suffices to control $\|\BV - \BV^{(i)}\BQ^{(i)}\|$. 
For the distance between $\BV$ and $\BV^{(i)}$, we apply Wedin's theorem by treating $\BD$ as a perturbation of $\BD^{(i)}$ in~\eqref{def:CDi}. In fact, the same argument to the distance between the left singular vectors $\BU$ and $\BU^{(i)}$ of $\BD$ and $\BD^{(i)}.$ Then we know that
\[
\sigma_{d+1}(\BD)\leq \sigma\|\BW\|, \qquad \sigma_d(\BD^{(i)})\geq \sqrt{n}\sigma_{\min}(\BA) - \sigma\|\BW\|
\]
and thus it is natural to set $\alpha = \sigma\|\BW\|$ and $\delta = \sqrt{n}\sigma_{\min}(\BA) - 2\sigma\|\BW\|$ in Theorem~\ref{thm:Wed}. Then~\eqref{eq:wedin3} and~\eqref{eq:wedin4} give
\begin{align*}
& \max\left\{ \|\BV - \BV^{(i)}\BQ^{(i)}\|, \|\BU - \BU^{(i)}\BQ^{(i)}_U\| \right\} \\
& \qquad  \leq \frac{2\sigma \max\{\| (\BW - \BW^{(i)})\BV^{(i)} \|, \| (\BW - \BW^{(i)})^{\top}\BU^{(i)} \|\}}{ \sqrt{n}\sigma_{\min}(\BA) - 2\sigma\|\BW\|}  \\
& \qquad \leq \frac{4\sigma (\| \BW_i\BV^{(i)} \|+\| \BW_i^{\top}\BU_i^{(i)} \|)}{ \sqrt{n}\sigma_{\min}(\BA)} 
\end{align*}
where $\BQ^{(i)}_U = \PP((\BU^{(i)})^{\top}\BU)$ is orthogonal and $\sqrt{n}\sigma_{\min}(\BA)\geq 4\sigma\|\BW\|$.

Note that $\BV^{(i)}$ is a $m\times d$ matrix with orthogonal columns and $\BU_i^{(i)}\in\RR^{d\times d}$ is the $i$th $d\times d$ block of $\BU^{(i)}$. With probability at least $1-2n^{-\gamma+1}$, Corollary~\ref{cor:gauss} gives
\[
\|\BW_i\BV^{(i)}\| \leq 2\sqrt{d} + \sqrt{2\gamma\log n}, \quad \| \BW_i^{\top}\BU_i^{(i)} \| \leq (\sqrt{m} + \sqrt{d} + \sqrt{2\gamma\log n})\|\BU^{(i)}_i\|
\]
for all $1\leq i\leq n.$ As a result, under the assumption on $\sigma,$ we have 
\begin{align}
& \max\left\{ \|\BV - \BV^{(i)}\BQ^{(i)}\|, \|\BU - \BU^{(i)}\BQ^{(i)}_U\| \right\} \nonumber \\
& \leq \frac{4\sigma}{\sqrt{n}\sigma_{\min}(\BA)}\left( (\sqrt{m} + \sqrt{d} + \sqrt{2\gamma\log n})\|\BU^{(i)}_i\|  +  (2\sqrt{d} + \sqrt{2\gamma\log n}) \right). \label{eq:VVQ1}
\end{align}

From the equation above, we have
\[
\Big| \| \BU^{(i)}_i \| - \|\BU_i\| \Big|  \leq \frac{4\sigma}{\sqrt{n}\sigma_{\min}(\BA)}\left( (\sqrt{m} + \sqrt{d} + \sqrt{2\gamma\log n})\|\BU^{(i)}_i\|  +  (2\sqrt{d} + \sqrt{2\gamma\log n}) \right).
\]
Under $\sigma \leq \sqrt{n}\sigma_{\min}(\BA)/(16(\sqrt{nd}+\sqrt{m}+\sqrt{2\gamma n \log n}))$, the equation above leads to
\[
\|\BU_i^{(i)}\| \leq 2\|\BU_i\|+ \frac{8\sigma(2\sqrt{d} + \sqrt{2\gamma\log n})}{\sqrt{n}\sigma_{\min}(\BA)} \leq 2\|\BU_i\| + \frac{1}{\sqrt{n}}.
\]
Note that $\BU^{\top}\BU = \sum_{i=1}^n \BU_i^{\top}\BU_i = \I_d$, and then $\max\|\BU_i\| \geq 1/\sqrt{n}.$ Then we have
\[
\|\BU_i^{(i)}\| \leq 3\max_{1\leq i\leq n}\|\BU_i\|.
\]

By substituting the estimation of $\|\BU_i^{(i)}\|$ into~\eqref{eq:VVQ1}, we have an upper bound of $T_1$ as well as $\|\BU - \BU^{(i)}\BQ_U^{(i)}\|:$
\begin{align}
& \max\left\{ \|\BV - \BV^{(i)}\BQ^{(i)}\|, \|\BU - \BU^{(i)}\BQ^{(i)}_U\| \right\} \nonumber \\
& \qquad \leq \frac{4\sigma}{\sqrt{n}\sigma_{\min}(\BA)}\left(3 (\sqrt{m} + \sqrt{d} + \sqrt{2\gamma\log n})\max_{1\leq i\leq n}\|\BU_i\|  +  (2\sqrt{d} + \sqrt{2\gamma\log n}) \right) \label{eq:VVQ}
\end{align}
and
\begin{align}
T_1 & \leq \|\BW_i\|\|\BV-\BV^{(i)}\BQ^{(i)}\| = (\sqrt{m}+\sqrt{d}+\sqrt{2\gamma\log n}) \|\BV-\BV^{(i)}\BQ^{(i)}\| . \label{eq:T1}
\end{align}

As a result, combining~\eqref{eq:T1},~\eqref{eq:T2}, and~\eqref{eq:T3} together gives
\begin{align*}
& \max_{1\leq i\leq n}\|\BW_i(\BV - \BV_0\BR)\|  \leq T_1 + T_2 + T_3 \\
& \leq \frac{12\sigma(\sqrt{m} + \sqrt{d} + \sqrt{2\gamma\log n})^2}{\sqrt{n}\sigma_{\min}(\BA)} \max_{1\leq i\leq n}\|\BU_i\|  +  \frac{4\sigma(2\sqrt{d} + \sqrt{2\gamma\log n})(\sqrt{m} + \sqrt{d} + \sqrt{2\gamma\log n})}{\sqrt{n}\sigma_{\min}(\BA)}  \\
&  \qquad +  \frac{4\sigma (2\sqrt{d}+\sqrt{2\gamma \log n})(\sqrt{nd} + \sqrt{m} + \sqrt{2\gamma\log n})}{\sqrt{n}\sigma_{\min}(\BA)} + 2(2\sqrt{d} + \sqrt{2\gamma\log n}) \\
 & \leq  \frac{3(\sqrt{m}+\sqrt{d}+\sqrt{2\gamma \log n})}{4}\max_{1\leq i\leq n}\|\BU_i\| + \frac{5(2\sqrt{d}+\sqrt{2\gamma\log n})}{2}
\end{align*}
where $\sigma \leq \sqrt{n}\sigma_{\min}(\BA)/(16(\sqrt{nd}+\sqrt{m}+\sqrt{2\gamma n \log n}))$.
\end{proof}


\end{document}